\let\old@setaddresses\@setaddresses
\def\@setaddresses{\bigskip{\parindent 0pt\let\scshape\relax\let\ttfamily\relax\old@setaddresses}}
\newcommand{\torsten}[1]{\todo[inline,color=blue!40]{\textbf{Torsten:} #1}}
\newtheorem{theorem}{Theorem}
\newtheorem{corollary}[theorem]{Corollary}
\newtheorem{lemma}[theorem]{Lemma}
\theoremstyle{remark}
\newtheorem{remark}[theorem]{Remark}
\begin{document}

\title{Traversing combinatorial 0/1-polytopes via optimization}

\author{Arturo Merino}
\address[Arturo Merino]{Department of Mathematics, TU Berlin, Germany}
\email{merino@math.tu-berlin.de}

\author{Torsten M\"utze}
\address[Torsten M\"utze]{Department of Computer Science, University of Warwick, United Kingdom \& Department of Theoretical Computer Science and Mathematical Logic, Charles University, Prague, Czech Republic}
\email{torsten.mutze@warwick.ac.uk}

\thanks{An extended abstract of this paper appeared as~\cite{MR4720318} in the Proceedings of FOCS~2023.
Arturo Merino was supported by ANID Becas Chile 2019-72200522.
Torsten M\"utze was supported by Czech Science Foundation grant GA~22-15272S.
Both authors participated in the workshop `Combinatorics, Algorithms and Geometry' in March 2024, which was funded by German Science Foundation grant~522790373.}

\begin{abstract}
In this paper, we present a new framework that exploits combinatorial optimization for efficiently generating a large variety of combinatorial objects based on graphs, matroids, posets and polytopes.
Our method is based on a simple and versatile algorithm for computing a Hamilton path on the skeleton of a 0/1-polytope $\conv(X)$, where $X\seq \{0,1\}^n$.
The algorithm uses as a black box any algorithm that solves a variant of the classical linear optimization problem~$\min\{w\cdot x\mid x\in X\}$, and the resulting delay, i.e., the running time per visited vertex on the Hamilton path, is larger than the running time of the optimization algorithm only by a factor of $\log n$.
When $X$ encodes a particular class of combinatorial objects, then traversing the skeleton of the polytope~$\conv(X)$ along a Hamilton path corresponds to listing the combinatorial objects by local change operations, i.e., we obtain Gray code listings.

As concrete results of our general framework, we obtain efficient algorithms for generating all ($c$-optimal) bases and independent sets in a matroid; ($c$-optimal) spanning trees, forests, matchings, maximum matchings, and $c$-optimal matchings in a graph; vertex covers, minimum vertex covers, $c$-optimal vertex covers, stable sets, maximum stable sets and $c$-optimal stable sets in a bipartite graph; as well as antichains, maximum antichains, $c$-optimal antichains, and $c$-optimal ideals of a poset.
Specifically, the delay and space required by these algorithms are polynomial in the size of the matroid ground set, graph, or poset, respectively.
Furthermore, all of these listings correspond to Hamilton paths on the corresponding combinatorial polytopes, namely the base polytope, matching polytope, vertex cover polytope, stable set polytope, chain polytope and order polytope, respectively.

As another corollary from our framework, we obtain an $\cO(t_{\upright{LP}} \log n)$ delay algorithm for the vertex enumeration problem on 0/1-polytopes $\{x\in\mathbb{R}^n\mid Ax\leq b\}$, where $A\in \mathbb{R}^{m\times n}$ and~$b\in\mathbb{R}^m$, and $t_{\upright{LP}}$ is the time needed to solve the linear program $\min\{w\cdot x\mid Ax\leq b\}$. 
This improves upon the 25-year old $\cO(t_{\upright{LP}}\,n)$ delay algorithm due to Bussieck and L\"ubbecke.
\end{abstract}

\maketitle

\section{Introduction}

In mathematics and computer science, we frequently encounter different classes of combinatorial objects, for example spanning trees, matchings or vertex covers of a graph, independent sets or bases of a matroid, ideals or antichains of a poset, etc.
Given a class~$X$ of objects, in \defi{combinatorial optimization} we are interested in finding the best object from~$X$ w.r.t.\ some objective function~$f$, i.e., we aim to compute a minimizer of~$f(x)$ over all $x\in X$.
Classical examples for such problems on graphs are computing a minimum weight spanning tree, a maximum weight matching, or a minimum size vertex cover.
Motivated by countless practical instances of such problems, the field of combinatorial optimization has matured over decades into a huge body of work (see e.g.~\cite{MR1490579,MR1956924,MR1956925,MR1956926,MR3753583}), which combines powerful algorithmic, combinatorial and polyhedral methods.
Important techniques include dynamic programming, linear and integer programming, network flows, branch-and-bound and branch-and-cut methods, approximation algorithms, parametrized algorithms, etc.

Another fundamental algorithmic task apart from combinatorial optimization is \defi{combinatorial generation}, covered in depth in Knuth's book~\cite{MR3444818}.
Given a class~$X$ of objects, the task here is to exhaustively list each object from~$X$ exactly once.
The running time of a generation algorithm is typically measured by its \defi{delay}, i.e., by the time spent between generating two consecutive objects from~$X$.
Sometimes it is reasonable to relax this worst-case measure, and to consider the \defi{amortized delay}, i.e., the total time spent to generate~$X$, divided by the cardinality of~$X$.
Algorithms that achieve delay~$\cO(1)$ are the holy grail, and they are sometimes called \defi{loopless}.
Compared to combinatorial optimization, the area of combinatorial generation is much less matured.
Nonetheless, some general techniques are available, such as the flashlight search or binary partition method~\cite{MR401486,MR1659922}, Avis and Fukuda's reverse search~\cite{MR1380066}, the proximity search method of Conte, Grossi, Marino, Uno and Versari~\cite{MR4502134}, the bubble language framework of Ruskey, Sawada, and Williams~\cite{MR2844089}, Williams' greedy algorithm~\cite{MR3126386}, and the permutation language framework of Hartung, Hoang, M\"utze and Williams~\cite{MR4391718}.

A particularly useful concept for developing efficient generation algorithms are Gray codes.
A \defi{combinatorial Gray code}~\cite{ruskey:2016,MR1491049,mutze:2023} for a class of objects is a listing of the objects such that every two consecutive objects in the list differ only by a local change.
Gray codes lend themselves very well to efficient generation, and often result in algorithms with small delay, sometimes even loopless algorithms.
For example, the spanning trees of a graph admit an edge exchange Gray code, i.e., they can be listed such that every two consecutive spanning trees differ in removing one edge from the current spanning tree and adding another edge from the graph to obtain the next spanning tree; see Figure~\ref{fig:span-gc}.
Furthermore, such a Gray code can be computed with amortized delay~$\cO(1)$~\cite{smith_1997} (see also \cite[Sec.~7.2.1.6]{MR3444818}).

\begin{figure}[h!]
\includegraphics[page=2]{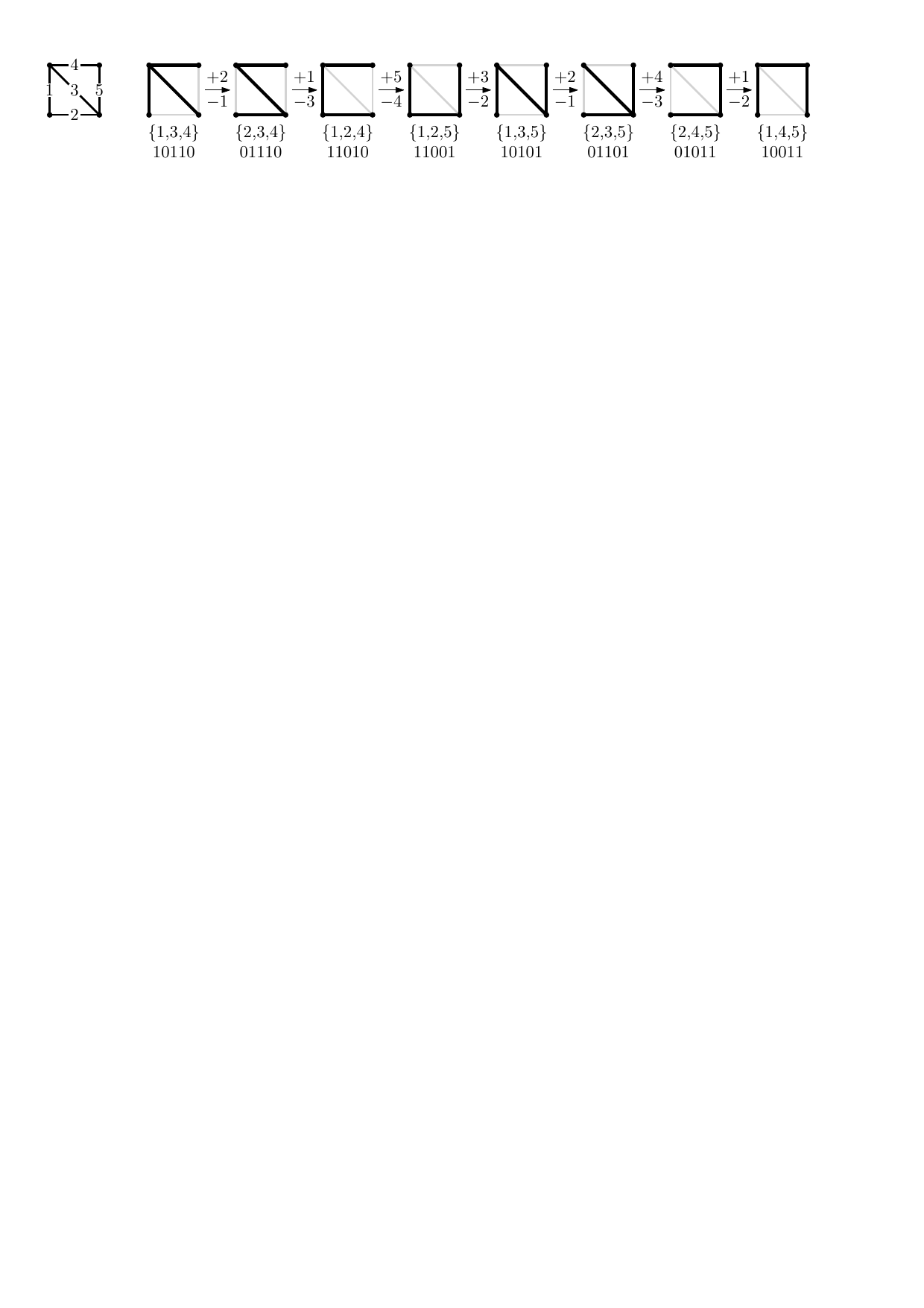}
\caption{Edge exchange Gray code for the spanning trees of the diamond graph on the top.
Below each spanning tree is the subset of tree edges and its indicator vector.}
\label{fig:span-gc}
\end{figure}

There is a trivial connection between combinatorial generation and optimization:
For a given class of objects~$X$, we can compute a minimizer of~$f(x)$ by exhaustively generating~$X$ and computing~$f(x)$ for each of the objects~$x$ in the list (for this the list need not be stored).
In terms of running time, this approach is prohibitively expensive in many applications, as the size of~$X$ is often exponential in some parameter.
For example, a minimum weight spanning tree in an $n$-vertex graph can be computed in time polynomial in~$n$, but the number of spanning trees is typically exponential in~$n$.

\begin{table}
\caption{Overview of results derived from our generation framework.
For each class of combinatorial objects we show the local change operation between consecutively generated objects, the corresponding optimization problem, and the resulting (worst-case) delay per generated object, highlighted if superior to previous results.
Earlier algorithms are listed in the penultimate column, and they are sometimes incomparable (worst-case delay vs.\ amortized delay, Gray code vs. non-Gray code).
None of the earlier algorithms computes a Hamilton path on the polytope, with the only exception being the results~\cite{smith_1997,MR4473269} on spanning trees and matroid bases. 
The last column gives pointers to later sections with detailed derivations.
The vector $c\in\mathbb{Z}^n$ is an arbitrary integer-valued cost vector with maximum absolute value~$|c|$.
}
\label{tab:appl}
\renewcommand{\arraystretch}{1.1}
\setlength\tabcolsep{2pt}
\tiny
\makebox[0cm]{ 
\begin{tabular}{|p{3mm}|L{25mm}|L{26mm}|L{45mm}|L{32mm}|L{42mm}|p{10mm}|}
\hhline{~------}
\multicolumn{1}{c|}{} & {\bf Objects} & {\bf Local change} & {\bf Optimization problem} & {\bf Delay} & {\bf Previous work} & {\bf Ref.} \\ \hhline{-------}
\multirow{4}{*}{\rotatebox{90}{\bf Other}}
 & vertices of a 0/1-polytope & edge move & linear programming with weights $\{-1,0,1\}$: $t_{\upright{LP}}$ & \cellcolor{red!25} $\cO(t_{\upright{LP}}\log n)$ & \cite{MR1659922}: $\cO(t_{\upright{LP}}\,n)$ delay & Cor.~\ref{cor:appl-poly} \\ \hhline{~~-----}
 & $c$-optimal vertices of a 0/1-polytope & edge move & linear programming: $t_{\upright{LP}}$ & \cellcolor{red!25} $\cO(t_{\upright{LP}}\poly(\log n))$ & \cite{MR1659922}: $\cO(t_{\upright{LP}}\,n)$ delay & Cor.~\ref{cor:appl-poly-c} \\ \hhline{~------}
 & feasible solutions to a knapsack problem & & knapsack with profits $\{-1,0,1\}$: $\cO(n)$ (sort weights in preprocessing) & $\cO(n \log n)$ & \cite{MR2900417}: $\cO(1)$ amortized delay Gray code & \\ \hhline{~------}
 & matroid bases & element exchange & minimum weight basis with weights $\{-2,-1,0,1,2\}$: $t_{\upright{LO}}$ & $\cO(t_{\upright{LO}}\log n)$ & \cite{MR1730351}: amortized delay \newline \cite{MR4473269}: delay Gray code & \\ \hhline{~~-----}
 & $c$-optimal matroid bases & same weight element exchange & minimum weight basis: $t_{\upright{LO}}$ & \cellcolor{red!25} $\cO(t_{\upright{LO}}\log n)$ & & \\ \hhline{~------}
 & matroid independent sets & add/remove/ exchange element & minimum weight basis with weights~$\{-2,-1\}$: $t_{\upright{LO}}$ & \cellcolor{red!25} $\cO(t_{\upright{LO}}\log n)$ & & \\ \hhline{~------}
 & matroid intersection maximum ind.\ sets & & minimum weight matroid intersection: $t_{\upright{LO}}$ & \cellcolor{red!25} $\cO(t_{\upright{LO}}\log n)$ & & \\ \hline
\multirow{4}{*}{\rotatebox{90}{\bf Graph}}
 & spanning trees & edge exchange & minimum spanning tree with weights $\{-2,-1,0,1,2\}$: $\cO(m)$ & $\cO(m\log n)$ & \cite{MR1146706}: $\cO(1)$ amortized delay \newline \cite{smith_1997}: $\cO(1)$ amort.\ delay Gray code \newline \cite{MR4473269}: $\cO(m \log n(\log \log n)^3)$ delay Gray code & Cor.~\ref{cor:appl-tree} \\ \hhline{~~-----}
 & $c$-optimal spanning trees & same weight edge exchange & minimum spanning tree: $\cO(m\alpha(m,n))$~\cite{MR1866456} & $\cO(m\alpha(m,n)\log n)$ & \cite{MR2754637}: $\cO(m\log n)$ amortized delay & Cor.~\ref{cor:appl-tree-c} \\ \hhline{~------}
 & forests & add/remove/ exchange edge & spanning forest computation \textrightarrow{} graph search: $\cO(m+n)$ & \cellcolor{red!25} $\cO((m+n) \log n)$ & & \\ \hhline{~------}
 & matchings & alternating $\leq\!3$-path exchange & maximum matching: $\cO(m\sqrt n)$~\cite{DBLP:conf/focs/MicaliV80} & \cellcolor{red!25} $\cO(m\sqrt n \log n)$ & & Cor.~\ref{cor:appl-match} \\ \hhline{~~-----}
 & maximum matchings & alternating path/cycle exchange & maximum weight matching with weights $\{m-1,m,m+1\}$: $\cO(m\sqrt n \log (n))$~\cite{MR3763658} & \cellcolor{red!25} $\cO(m\sqrt{n}(\log n)^2)$ & & Cor.~\ref{cor:appl-match-max} \\ \hhline{~~-----}
 & $c$-optimal matchings & \multirow{2}{20mm}{$c$-balanced alternating path/cycle exchange} & maximum weight matching: $\cO(m\sqrt n \log (n|c|))$~\cite{MR3763658} & \cellcolor{red!25} $\cO(m\sqrt{n}\log(n|c|)\log n)$ & & Cor.~\ref{cor:appl-match-c} \\ \hhline{~~~--~~}
 & & & maximum weight matching: $\cO(mn+n^2\log n)$~\cite{MR3744699} & \cellcolor{red!25} $\cO((mn+n^2\log n)\log n)$ & & \\ \hline
\multirow{8}{*}{\rotatebox{90}{\bf Bipartite graph}}
 & matchings & alternating $\leq\!3$-path exchange & maximum bipartite matching: $\cO(m\sqrt n)$~\cite{MR337699} & \cellcolor{red!25} $\cO(m\sqrt n \log n)$ & & \\ \hhline{~~-----}
 & maximum matchings & alternating path/cycle exchange & maximum weight bipartite matching with weights $\{m-1,m,m+1\}$: $\cO(m\sqrt{n}\log n)$~\cite{MR3205301} & $\cO(m\sqrt{n}(\log n)^2)$ & \cite{MR1651024}: $\cO(n)$ delay & \\ \hhline{~~-----}
 & $c$-optimal matchings & \multirow{2}{20mm}{$c$-balanced alternating path/cycle exchange} & maximum weight bipartite matching: $\cO(m\sqrt{n}\log(|c|))$~\cite{MR3205301} & $\cO(m\sqrt{n}\log(|c|)\log n)$ & \cite{MR1170948}: $\cO(mn)$ delay for $c$-optimal \emph{perfect} matchings & \\ \hhline{~~~--~~}
 & & & maximum weight bipartite matching: $\cO(mn+n^2\log \log n)$~\cite{MR2087939} & $\cO((mn\!+\!n^2\log \log n)\log n)$ &  & \\ \hhline{~------}
 & vertex covers & connected symmetric difference exchange & minimum bipartite vertex cover: $\cO(m\sqrt n)$~\cite{MR337699} & \cellcolor{red!25} $\cO(m\sqrt n \log n)$ & & \\ \hhline{~~-----}
 & minimum vertex covers & connected symmetric difference exchange & maximum flow with capacities $\{n-1,n,n+1\}$: $\cO(mn)$~\cite{MR3210838} & \cellcolor{red!25} $\cO(mn\log n)$ & & \\ \hhline{~~-----}
 & $c$-optimal vertex covers & connected symmetric difference exchange & maximum flow: $\cO(mn)$~\cite{MR3210838} & \cellcolor{red!25} $\cO(mn\log n)$ & & \\ \hline
 \multirow{4}{*}{\rotatebox{90}{\bf Poset}}
 & antichains & connected symmetric difference exchange & maximum antichain \textrightarrow{} maximum bipartite matching: $\cO(n^{2.5}/\sqrt{\smash[b]{\log n}})$ \cite{MR1095712} (cf.~\cite{MR2079151}) & \cellcolor{red!25} $\cO(n^{2.5}\sqrt{\smash[b]{\log n}})$ & & \\ \hhline{~~-----}
 & maximum antichains & connected symmetric difference exchange & MWCCG with weights $\{n-1,n,n+1\}$: $\cO(n^4)$~\cite{MR1755057} & \cellcolor{red!25} $\cO(n^4 \log n)$ & & \\ \hhline{~~-----}
 & $c$-optimal antichains & connected symmetric difference exchange & maximum weight clique in cocomparability graph (MWCCG): $\cO(n^4)$~\cite{MR1755057} & \cellcolor{red!25} $\cO(n^4 \log n)$ & & \\ \hhline{~------}
 & ideals & connected symmetric difference exchange & MWCCG with weights $\{-1,1\}$: $\cO(n^4)$~\cite{MR1755057} & $\cO(n^4 \log n)$ & [Squire~1995] (see~\cite{ruskey_2003}): $\cO(\log n)$ amortized delay \newline \cite{MR1267190,MR1828422}: $\cO(n)$ amortized delay Gray code & \\ \hhline{~~-----}
 & $c$-optimal ideals & connected symmetric difference exchange & MWCCG: $\cO(n^4)$~\cite{MR1755057} & \cellcolor{red!25} $\cO(n^4 \log n)$ & & \\ \hline
\end{tabular}
}
\end{table}

\torsten{Should try to retrieve the original Squire paper. Asked him about it.}
\torsten{We should add max flows as done by Bussieck/Luebbecke and look at the improvement.}
\torsten{What about 0/1-polytopes associated with acyclic orientations?}
\torsten{Cut polytopes: \url{https://conservancy.umn.edu/bitstream/handle/11299/158678/1/Ganguly.pdf}.}
\torsten{Knapsack polytope has been heavily studied.}
\torsten{Shortest paths/Dijkstra stuff}
\torsten{We should start thinking about maximal substructures of graphs}
\torsten{What we call hereditary some people call `monotone': \cite{MR2503771}}

\subsection{Our contribution}

In this work, we establish a nontrivial connection between combinatorial generation and optimization that goes in the opposite direction.
Specifically, we show that if the optimization problem~$\min_{x\in X}f(x)$ can be solved efficiently, then this directly yields an efficient generation algorithm for~$X$.
More precisely, the delay for the resulting generation algorithm is larger than the running time of an arbitrary optimization algorithm only by a logarithmic factor.
The optimization algorithm is used as a black box inside the generation algorithm, and in this way we harness the powerful machinery of combinatorial optimization for the purpose of combinatorial generation.
Furthermore, the generated listings of objects correspond to a Hamilton path on the skeleton of a 0/1-polytope that is associated naturally with the combinatorial objects, i.e., we obtain a Gray code listing.
Additional notable features of our generation algorithm are: the algorithm is conceptually simple and operates greedily, it uses only elementary data structures, it is easy to implement, and it contains several tunable parameters that can be exploited for different applications.
We thus provide a versatile algorithmic framework to systematically solve the combinatorial generation problem for a large variety of different classes of combinatorial objects, which comes with strong guarantees for the delay and the closeness between consecutively generated objects.
Table~\ref{tab:appl} summarizes the most important concrete results obtained from our framework, and those will be discussed in more detail in Section~\ref{sec:table}.
Informally speaking, we add a new powerful `hammer' to the toolbox of combinatorial generation research, which is forged by combining algorithmic, combinatorial and polyhedral ideas.

In this paper, we focus on presenting the main ingredients of our framework that connects combinatorial optimization to combinatorial generation, and we illustrate the method by deriving several new generation algorithms for a number of combinatorial objects based on graphs, matroids, posets and polytopes.
This paper will be followed by others in which this new paradigm is exploited further in various directions (implementations, computational studies, improved results for special cases, etc.).

\subsection{Encoding of objects by bitstrings}

To run our generation algorithm, we rely on a unified method to encode the various different classes~$X$ of combinatorial objects.
For this we use a set of bitstrings~$X\seq\{0,1\}^n$ of length~$n$, which lends itself particularly well for computer implementation.
An important notion in this context is an indicator vector.
Given a subset $S\seq [n]:=\{1,\ldots,n\}$, the \defi{indicator vector} $\indicator_S \in \{0,1\}^n$ is defined as
\[(\indicator_S)_i := \begin{cases}
 1 & \text{if } i \in S, \\
 0 & \text{if } i\notin S.
\end{cases}\]
For example, the set $S=\{1,4,5\}\seq [6]$ has the indicator vector $\indicator_S=(1,0,0,1,1,0)$.
The set of bitstrings~$X\seq\{0,1\}^n$ that is used to encode the combinatorial objects of interest is simply the set of all corresponding indicator vectors.
Specifically, combinatorial objects based on graphs are encoded by considering the corresponding subsets of vertices or edges, and by their indicator vectors.
For example, indicator vectors for the spanning trees or matchings of a graph~$H$ have length~$m$, where $m$ is the number of edges of~$H$; see Figure~\ref{fig:span-gc}.
Similarly, indicator vectors for ideals or antichains of a poset~$P$ have length~$n$, where $n$ is the number of elements of~$P$.
Note that this encoding is based on a particular ordering of the ground set.
In other words, changing the ordering of the ground set corresponds to permuting the entries of all bitstrings in~$X$.

\begin{figure}[b!]
\centering
\begin{tabular}{ccc}
\includegraphics[height=4cm,page=1]{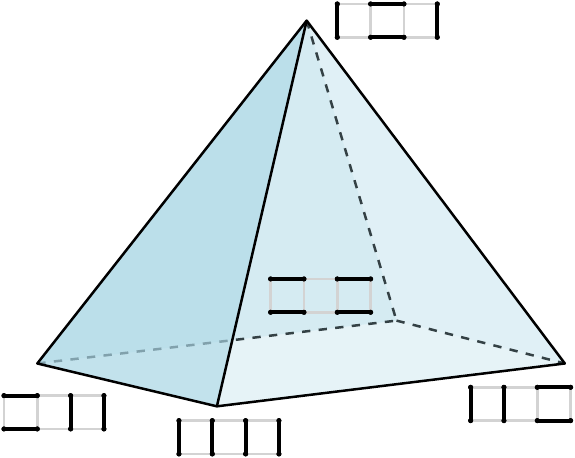} &
\includegraphics[height=4cm,page=2]{poly} &
\includegraphics[height=4cm,page=3]{poly} \\
(a) & (b) & (c)
\end{tabular}
\caption{Three combinatorial 0/1-polytopes (after appropriate projections into 3-dimensional space and under combinatorial equivalence):
(a) perfect matching polytope of the $2\times 4$ grid graph;
(b) vertex cover polytope of the triangle graph;
(c) base polytope of the uniform matroid of 2-element subsets of the ground set~$\{1,2,3,4\}$.
}
\label{fig:poly}
\end{figure}

\begin{table}[b!]
\caption{Examples of 0/1-polytopes that encode local change operations on combinatorial objects through their skeleton.}
\label{tab:01poly}
\renewcommand{\arraystretch}{1.1}
\setlength\tabcolsep{3pt}
\makebox[0cm]{ 
\begin{tabular}{|L{20mm}|L{35mm}|L{45mm}|L{35mm}|p{14mm}|}
\hline
{\bf Parameter} & {\bf Polytope} & {\bf Vertex set~$X$} & {\bf Edges (=flips)} & {\bf Ref.} \\ \hline
integer $n$ & $n$-dimensional hypercube & $\{0,1\}^n$, i.e., bitstrings of length~$n$ & flip a single bit & \\ \hline
integer $n$ & Birkhoff polytope & $n\times n$ permutation matrices & multiplication with a cycle & \cite{MR1311028} \\ \hline
integers $n,k$ & uniform matroid base polytope & $(n,k)$-combinations & transpositions & \cite{MR510371} \\ \hline
connected graph~$H$ & spanning tree polytope & indicator vectors of spanning trees of~$H$ & edge exchange & \cite{MR510371} \\ \hline
graph~$H$ & matching polytope & indicator vectors of matchings in~$H$ & alternating path/cycle exchange & \cite{MR371732} \\ \hline
graph~$H$ & perfect matching polytope & indicator vectors of perfect matchings in~$H$ & alternating cycle exchange & \cite{MR371732} \\ \hline
graph~$H$ & stable set polytope & indicator vectors of stable (=independent) sets in~$H$ & connected symmetric difference exchange & \cite{MR371732} \\ \hline
graph~$H$ & vertex cover polytope & indicator vectors of vertex covers in~$H$ & connected symmetric difference exchange & \cite{MR510371} \\ \hline
poset~$P$ & chain polytope & indicator vectors of antichains in~$P$ & connected symmetric difference exchange & \cite{MR824105,math7050381} \\ \hline
poset $P$ & order polytope & indicator vectors of ideals in~$P$ & connected symmetric difference exchange & \cite{MR824105,math7050381} \\ \hline
\end{tabular}
}
\end{table}

\torsten{The Birkhoff polytope is the perfect matching polytope of~$K_{n,n}$.}

\subsection{Combinatorial 0/1-polytopes}

A crucial feature of this encoding via bitstrings is that it allows to equip the combinatorial objects with a natural \defi{polytope} structure.
This idea is the cornerstone of polyhedral combinatorics and has been exploited extensively in combinatorial optimization.
Specifically, for a given set of bitstrings~$X\seq \{0,1\}^n$, we consider the convex hull
\[\conv(X):= \Big\{\sum\nolimits_{x\in X} \nu_x x \;\Big\vert\; \nu_x \in [0,1] \text{ for all } x\in X \text{ and } \sum\nolimits_{x\in X} \nu_x=1 \Big\}, \]
which is a 0/1-polytope with vertex set~$X$.
More generally, the combinatorial structure of~$X$ is encoded in the face structure of~$\conv(X)$.
In particular, the edges of the polytope~$\conv(X)$ correspond to local changes between the objects from~$X$ that are the endpoints of this edge.
For example, if $X$ is the set of indicator vectors of spanning trees of a graph~$H$, then the edges of~$\conv(X)$ connect pairs of spanning trees that differ in an edge exchange.
Or, if $X$ is the set of indicator vectors of matchings of~$H$, then the edges of~$\conv(X)$ connect pairs of matchings that differ in an alternating path or cycle.
Some examples of such combinatorial 0/1-polytopes are visualized in Figure~\ref{fig:poly}, and more are listed in Table~\ref{tab:01poly}.
Note that the dimension~$n$ quickly exceeds~3 even for moderately large examples, so the figure shows combinatorially equivalent projections into 3-dimensional space (where the coordinates are not 0/1 anymore).

We are mostly interested in the \defi{skeleton} of the polytope under consideration, i.e., the graph defined by its vertices and edges; see Figure~\ref{fig:span-skel}.
In this context it makes sense to refer to bitstrings from~$X$ as vertices.
As mentioned before, the edges of the skeleton capture local change operations between the combinatorial objects that are represented by the vertices.
Such graphs are sometimes called \defi{flip graphs} in the literature, where \defi{flip} is a general term for a local change operation (flip graphs can be defined without reference to polytopes).

\begin{figure}[b]
\centering
\makebox[0cm]{ 
\includegraphics[page=4]{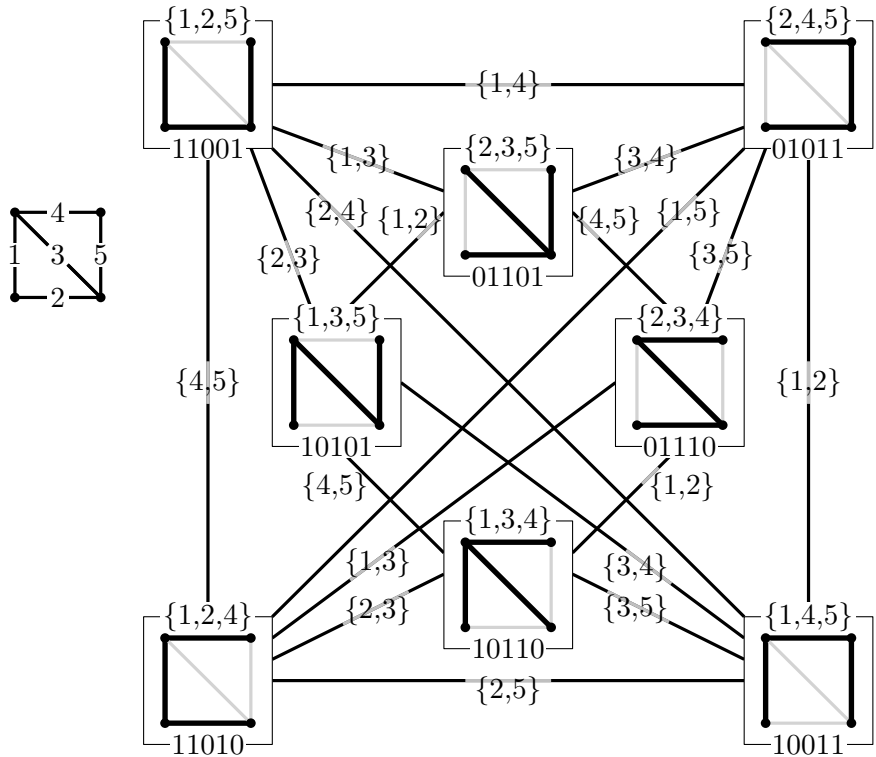}
}
\caption{Skeleton of the spanning tree polytope of the diamond graph.
The edges of the skeleton are labeled by edge exchanges.}
\label{fig:span-skel}
\end{figure}

The most notable feature of the generation algorithm presented in this paper is that the listings of objects~$X$ generated by the algorithm correspond to a \defi{Hamilton path} on the skeleton of~$\conv(X)$, i.e., a path that visits every vertex of the polytope exactly once.
As a result, every two consecutive objects in the listing differ by a local change (=flip), i.e., we obtain a Gray code.

\subsection{The basic algorithm}
\label{sec:basic-algo}

To describe our algorithm, we define for two distinct bitstrings $x,y\in X\seq\{0,1\}^n$ the quantity $\lambda(x,y):=\max\{i\in[n]\mid x_i\neq y_i\}$.
In words, $\lambda(x,y)$ is the largest index in which~$x$ and~$y$ differ.
Equivalently, the longest suffix in which~$x$ and~$y$ agree has length~$n-\lambda(x,y)$.
For example, we have $\lambda(1010110,0111110)=4$ and $\lambda(01100,11100)=1$.
For two bitstrings~$x,y\in\{0,1\}^n$ we define the \defi{Hamming distance} of~$x$ and~$y$ as $d(x,y):=|\{i\in[n]\mid x_i\neq y_i\}|$.
In words, this is the number of positions in which~$x$ and~$y$ differ.
For example, we have $d(1010110,0111110)=3$ and $d(01100,11100)=1$.

With these definitions at hand, we are in position to describe our basic generation algorithm, shown in Algorithm~P as pseudocode.
Algorithm~P takes as input a set of bitstrings~$X\seq\{0,1\}^n$ (possibly given implicitly; recall Table~\ref{tab:01poly}) and it computes a Hamilton path on the skeleton of the 0/1-polytope~$\conv(X)$, starting at an initial vertex~$\tx$ that is also provided as input.
The current vertex~$x$ is visited in step~P2, and subsequently the next vertex to be visited is computed in steps~P3 and~P4.
Specifically, in step~P3 we compute the length~$\beta$ of the shortest prefix change between~$x$ and an arbitrary unvisited vertex~$y\in X-x$, where $X-x:=X\setminus\{x\}$.
In step~P4 we consider vertices~$y$ that differ from~$x$ in a prefix of length~$\beta$, i.e., $\lambda(x,y)=\beta$, and among those we select the ones with minimum Hamming distance from~$x$ into the set~$N$.
We will show that all vertices in~$N$ are actually unvisited.
In step~P5, one of the vertices~$y\in N$ is chosen as the next vertex to be visited by the algorithm.
If the set~$N$ contains more than one element, then we have freedom to pick an arbitrary vertex~$y\in N$.
In concrete applications, one will usually equip Algorithm~P with a \defi{tiebreaking rule}, which for a given set~$N\seq X$ and the current state of the algorithm selects an element from~$N$.

\begin{algo}{Algorithm~P}{Traversal of 0/1-polytope by shortest prefix changes}
For a set $X\seq\{0,1\}^n$, this algorithm greedily computes a Hamilton path on the skeleton of the 0/1-polytope~$\conv(X)$, starting from an initial vertex~$\tx$.
\begin{enumerate}[label={\bfseries P\arabic*.}, leftmargin=8mm, noitemsep, topsep=3pt plus 3pt]
\item{} [Initialize] Set $x \gets \tx$.
\item{} [Visit] Visit~$x$.
\item{} [Shortest prefix change] 
Terminate if all vertices of~$X$ have been visited.
Otherwise compute the length~$\beta$ of the shortest prefix change between~$x$ and an arbitrary unvisited vertex~$y\in X-x$, i.e., $\beta\gets \min_{y\in X-x\,\wedge\,y \text{ unvisited}} \lambda(x,y)$.
\item{} [Closest vertices] Compute the set~$N$ of vertices~$y$ with $\lambda(x,y)=\beta$ of minimum Hamming distance from~$x$, i.e., $N\gets \argmin_{y\in X-x\,\wedge\,\lambda(x,y)=\beta}d(x,y)$.
\item{} [Tiebreaker+update~$x$] Pick an arbitrary vertex $y\in N$, set $x\gets y$ and goto~P2.
\end{enumerate}
\end{algo}

\begin{figure}
\centering
\makebox[0cm]{ 
\includegraphics[page=5]{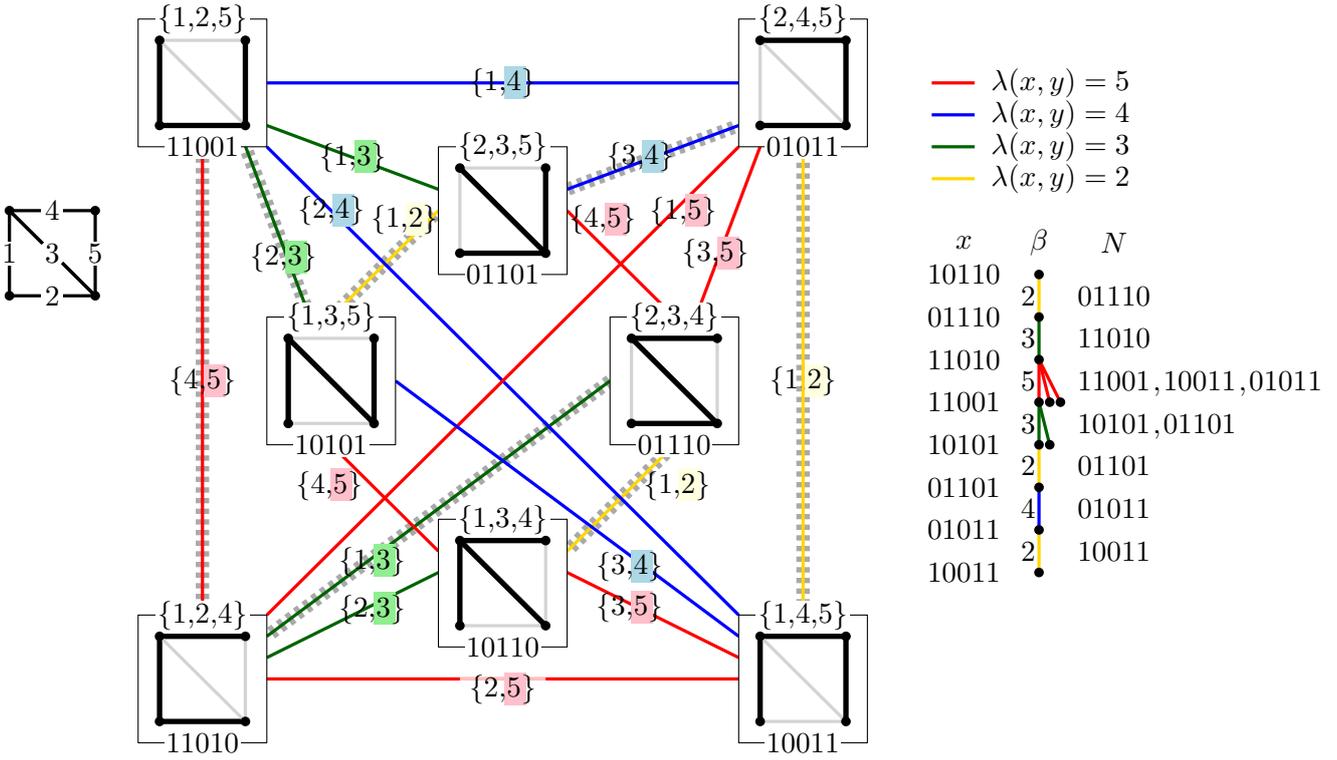}
}
\caption{Run of Algorithm~P on the skeleton from Figure~\ref{fig:span-skel}.
Edges are colored according to $\lambda$-values.
The computed Hamilton path is drawn dashed, and it corresponds to the edge exchange Gray code from Figure~\ref{fig:span-gc}.
}
\label{fig:span-algo}
\end{figure}

The listing of spanning trees shown in Figure~\ref{fig:span-gc} is a possible output of Algorithm~P.
The corresponding Hamilton path on the skeleton of the spanning tree polytope is highlighted by a dashed line in Figure~\ref{fig:span-algo}.
This figure colors the edges~$(x,y)$ of the skeleton according to their~$\lambda$-value $\lambda(x,y)$, and it shows the state of the variables~$x$, $\beta$ and~$N$ through each iteration of the algorithm.
Note that in the third and fourth iteration the set $N$ has more than one element, so ties have to be broken.
Observe how the algorithm greedily gives priority to traversing edges with small $\lambda$-value (a short prefix change) compared to large $\lambda$-value (a long prefix change).

The following fundamental theorem is the basis of our approach. 

\begin{theorem}
\label{thm:algoP-poly}
For every set $X\seq\{0,1\}^n$, every tiebreaking rule and every initial vertex~$\tx$, Algorithm~\upright{P} computes a Hamilton path on the skeleton of~$\conv(X)$ starting at~$\tx$.
\end{theorem}

Algorithm~P has a number of striking features:
\begin{itemize}[itemsep=0ex,parsep=0ex,leftmargin=0ex,itemindent=0ex,labelsep=1ex,labelwidth=-2ex]
\item
It works for \emph{every} set of bitstrings~$X\seq\{0,1\}^n$, i.e., for every 0/1-polytope.
\item
It works for \emph{every} tiebreaking rule used in step~P5.
In an actual implementation, we would directly compute one particular vertex from~$N$, instead of computing all of them and then selecting one.
The reason why the set~$N$ appears in the pseudocode is to emphasize the freedom we have in choosing a tiebreaking rule according to the needs of the application.
\item
It works for \emph{every} initial vertex~$\tx$, which creates room for exploitation in different applications.
\item
It works for \emph{every} ordering of the ground set~$[n]$.
In fact, we could state the algorithm in an even more general form and base the computation of~$\lambda$ on an arbitrary total order of~$[n]$.
This freedom can be very helpful in various applications.
For example, when computing the spanning trees of a graph~$H$, different orderings of the edges of~$H$ will result in different listings of the spanning trees (even for the same initial spanning tree and the same tiebreaking rule).
\item
The ordering of bitstrings from~$X$ produced by the algorithm has the so-called \defi{genlex} property, i.e., bitstrings with the same suffix appear consecutively.
In other words, the algorithms visits vertices ending with~0 before all vertices ending with~1, or vice versa, and this property is true recursively within each block of vertices with the same suffix removed.
\item
The algorithm performs no polyhedral computations at all.
In particular, there are no references to the edges or higher-dimensional faces of the polytope~$\conv(X)$ in the algorithm.
Instead, it relies on combinatorial properties of the bitstrings in~$X$, namely $\lambda$-values and Hamming distances.
The fact that the resulting listing is a Hamilton path on the skeleton of the polytope~$\conv(X)$ is a consequence of these combinatorial properties, but not built into the  algorithm.
\item
The algorithm computes a Hamilton path on the skeleton of the 0/1-polytope~$\conv(X)$, i.e., we obtain a Gray code listing of the objects encoded by~$X$ with closeness guarantees between consecutive objects; recall Table~\ref{tab:01poly}.
Observe that the algorithm does not traverse \emph{arbitrary} edges of the 0/1-polytope, but only edges with minimum Hamming distance (as mentioned before, the algorithm does not even `know' that these are polytope edges).
As a consequence, these closeness guarantees can be strengthened considerably in many cases.
For example, in the matching polytope any two matchings that differ in an alternating path or cycle are adjacent, but Algorithm~P will only traverse edges corresponding to alternating paths of length~$\leq 3$.
Similarly, in the Birkhoff polytope any two permutations that differ in a single cycle are adjacent, but Algorithm~P will only traverse edges corresponding to cycles of length~2 (i.e., transpositions).
\item
The computation of~$\beta$ in step~P3 requires the qualification `$y$ unvisited', which seems to indicate that we need to store all previously visited vertices, which would require exponential space and would therefore be very detrimental.
However, we show that with some simple additional data structures, we can make the algorithm \defi{history-free}, so no previously visited vertices need to be stored at all, but instead the algorithm uses only~$\cO(n)$ extra space (in addition to the input).
This history-free implementation of Algorithm~P is described as Algorithm~P\sss{} in Section~\ref{sec:algoP}.
\item
The algorithm is straightforward to implement with few lines of code.
\item
As we shall explain in the next section, the computations in steps~P3 and~P4 can be done by solving one or more instances of a linear optimization problem over~$X$.
Consequently, we can use \emph{every} optimization algorithm as a black box inside Algorithm~P (or its history-free variant).
The running time of the generation algorithm then depends on the time needed to solve the optimization problem, and the resulting delay is larger than the time needed to solve the optimization problem only by a $\log n$ factor.
Specificially, the $\log n$ factor comes from solving several instances of the optimization problem, and doing binary search.
\item
Algorithm~P thus provides a \emph{general and automatic} way to exploit an optimization algorithm for the purpose of generation.
With each new optimization algorithm we automatically obtain a baseline against which to compare any new generation algorithm for that particular class of objects.
Also, the delays for concrete problems listed in Table~\ref{tab:appl} obtained from our work may improve as soon as someone finds an improved optimization algorithm.
\end{itemize}

We also mention the following two shortcomings of Algorithm~P:
\begin{itemize}[itemsep=0ex,parsep=0ex,leftmargin=0ex,itemindent=0ex,labelsep=1ex,labelwidth=-2ex]
\item
For unstructured classes of objects, for example the set of all bitstrings of length~$n$, or the set of all $n\times n$ permutation matrices, there are sometimes faster (often loopless) and more direct generation methods available than what we obtain via optimization.
While Algorithm~P still works in those cases, its performance shines primarily for classes of objects that arise from certain constraints, for example from a graph, a matroid, a poset, or from a cost function.

\item
In general, Algorithm~P does not compute a Hamilton \emph{cycle}, but only a Hamilton path on the 0/1-polytope (cf.\ Section~\ref{sec:related} below).
There are classes of objects and choices of the tiebreaking rule so that the last vertex of the computed path is adjacent to the first one, but these situations are out of our control in general.
This is because the algorithm operates entirely locally on the skeleton and `forgets' the location of the initial vertex~$\tx$.
\end{itemize}

\subsection{Reduction to classical linear optimization}
\label{sec:reduction}

A classical problem in combinatorial optimization, which includes for example minimum spanning tree, maximum weight (perfect) matching, maximum stable set, minimum vertex cover and many others, is the \defi{linear optimization} problem, defined as follows:
\begin{enumerate}[label=\mybox{LO},leftmargin=12mm, noitemsep, topsep=1pt plus 1pt]
\item Given a set~$X\seq\{0,1\}^n$ and a weight vector~$w\in \mathbb{R}^n$, compute an element in
\[N:=\argmin_{y\in X} w\cdot y,\]
or decide that this problem is infeasible, i.e., $N=\emptyset$.
\end{enumerate}

The key insight of our paper is that the computation of~$\beta$ in step~P3 and of the set~$N$ in step~P4 of Algorithm~P can be achieved by solving one or more instances of the following variant of the problem~LO, which we refer to as \defi{linear optimization with prescription}.
\begin{enumerate}[label=\mybox{LOP},leftmargin=12mm, noitemsep, topsep=1pt plus 1pt]
\item Given a set~$X\seq\{0,1\}^n$, a weight vector~$w\in \mathbb{R}^n$, and disjoint sets $P_0,P_1\seq [n]$, compute an element in
\[N:=\argmin_{y\in X\,\wedge\, y_{P_0}=0 \,\wedge\, y_{P_1}=1} w\cdot y,\]
or decide that this problem is infeasible, i.e., $N=\emptyset$.
\end{enumerate}
The notation $y_{P_b}=b$ for $b\in\{0,1\}$ is a shorthand for $y_i=b$ for all $i\in P_b$.
In words, the value of~$y$ is prescribed to be~0 or~1 at the coordinates in~$P_0$ and~$P_1$, respectively.

We explain the reduction to the problem~LOP for the computation of the set
\[N:=\argmin_{y\in X-x\,\wedge\,\lambda(x,y)=\beta}d(x,y)\]
in step~P4 of Algorithm~P.
Given $x$ and~$\beta\in[n]$, we define the weight vector
\begin{equation}
\label{eq:wi}
w_i:=\begin{cases}
+1 & \text{if } x_i=0, \\
-1 & \text{if } x_i=1, \\
\end{cases}
\end{equation}
and the sets
\[
P_b:=\{\beta\mid x_\beta=1-b\}\cup \{i>\beta\mid x_i=b\} \text{ for } b\in\{0,1\},
\]
and we use $w$ and $P_0,P_1$ as input for the optimization problem~LOP.
The definition of $P_0,P_1$ ensures that all feasible $y\in X$ satisfy $y_i=x_i$ for $i>\beta$ and $y_\beta=1-x_\beta$, which ensures that they all satisfy $\lambda(x,y)=\beta$, in particular $y\neq x$, i.e., $y\in X-x$.
Furthermore, the definition of $w$ implies that $d(x,y)=w\cdot (y-x)$, and since $x$ is fixed, minimization of~$d(x,y)$ is the same as minimization of $w\cdot y$.
We consequently obtain that
\[N=\argmin_{y\in X-x\,\wedge\,\lambda(x,y)=\beta}d(x,y)=\argmin_{y\in X\,\wedge\,y_{P_0}=0\,\wedge\,y_{P_1}=1}w\cdot y.\]

The computation of~$\beta$ in step~P3 of our algorithm can be done similarly, however with an extra ingredient, namely binary search.
The binary search causes the $\log n$ factor in the delay.
Specifically, we obtain that if problem~LOP over~$X$ can be solved in time~$t_{\upright{LOP}}$, then Algorithm~P runs with delay $\cO(t_{\upright{LOP}}\log n)$ (Theorem~\ref{thm:algoPLOP-time}).
We also provide a variant of this reduction for generating only the elements in~$X$ that are optimal w.r.t.\ some cost vector~$c \in \mathbb{Z}^n$ (Theorem~\ref{thm:algoPLOP-time-c}).

In many cases (e.g., spanning trees, matchings, perfect matchings, etc.) the problem~LOP with weights~$w$ as in~\eqref{eq:wi} reduces to the problem~LO directly (by removing or contracting prescribed edges).
In those cases we obtain the same results as before, but with $t_{\upright{LO}}$ instead of $t_{\upright{LOP}}$.
In fact, by weight amplification the problem~LOP can always be reduced to the problem~LO (Lemma~\ref{lem:LOP-LO}), albeit at the expense of increasing the weights, which may worsen the running time of the optimization algorithm.

\subsection{Applications}
\label{sec:table}

It turns out that Algorithm~P is very versatile, and allows efficient generation of a large variety of combinatorial objects based on graphs, matroids, posets and polytopes in Gray code order.
Table~\ref{tab:appl} lists these results in condensed form, and in the following we comment on the most important entries in the table.

\subsubsection{Vertex enumeration of 0/1-polytopes}

The \defi{vertex enumeration problem} for polytopes is fundamental in discrete and computational geometry; see the surveys~\cite{MR571811} and~\cite{MR716120}, and the more recent work~\cite{MR1134355,MR2503771,MR2823189,MR4078802}.\footnote{Throughout this paper we use the term `generation' instead of `enumeration', but since `vertex enumeration' is a standard term in the polytope community, we stick to it here.} 
Given a linear system $Ax\leq b$ of inequalities, where $A\in \mathbb{R}^{m\times n}$ and $b\in\mathbb{R}^m$, the problem is to generate all vertices of the polytope $P:=\{x\in\mathbb{R}^n\mid Ax\leq b\}$.
For general polytopes, this problem can be solved by the \emph{double description method}~\cite{MR0060202}, or by the \emph{reverse search method} due Avis and Fukuda~\cite{MR1174359}, with its subsequent improvement~\cite{MR1785299}.
For the special case where $P$ is a 0/1-polytope, i.e., all vertex coordinates are from~$\{0,1\}$, Bussieck and L\"ubbecke~\cite{MR1659922} described an algorithm for generating the vertices of~$P$ with delay~$\cO(t_{\upright{LP}}\,n)$, where $t_{\upright{LP}}$ is the time needed to solve the linear program (LP) $\min\{w\cdot x\mid Ax\leq b\}$.
Furthermore, the space required by their algorithm is the space needed to solve the~LP.
Their approach is an instance of the flashlight search or binary partition method~\cite{MR401486}.
Behle and Eisenbrand~\cite{DBLP:conf/alenex/BehleE07} described an algorithm for vertex enumeration of 0/1-polytopes that uses binary decision diagrams, which performs well in practice, but requires exponential space in the worst case.

Algorithm~P improves upon Bussieck and L\"ubbecke's algorithm in that the delay is reduced from~$\cO(t_{\upright{LP}}\,n)$ to~$\cO(t_{\upright{LP}}\log n)$ per generated vertex (Corollary~\ref{cor:appl-poly}).
As an additional feature, the vertices are visited in the order of a Hamilton path on the polytope's skeleton, whereas the earlier algorithm does not have this property.
The space required by our algorithm is only the space needed to solve the~LP.
We can also generate all vertices that are $c$-optimal w.r.t.\ some arbitrary integer-valued cost vector~$c\in\mathbb{Z}^n$, with delay~$\cO(t_{\upright{LP}}\poly(\log n))$ per vertex (Corollary~\ref{cor:appl-poly-c}).

\subsubsection{Bases and independent sets in matroids}

Algorithm~P allows generating the bases of a matroid, by computing a Hamilton path on the base polytope.
The delay is~$\cO(t_{\upright{LO}}\log n)$, where $n$ is the number of elements in the matroid, and $t_{\upright{LO}}$ is the time to solve the linear optimization problem (problem~LO defined in Section~\ref{sec:reduction}) for the bases of the matroid.
It is the first polynomial delay Gray code known for the weighted variant of the problem with cost vector~$c$.
Our algorithm can be specialized to generate bases of a graphic matroid, i.e., spanning trees of a graph (Corollaries~\ref{cor:appl-tree} and~\ref{cor:appl-tree-c}), or bases of the uniform matroid, i.e., fixed-size subsets of an $n$-element ground set, also known as combinations.
Analogous statements hold for independent sets of a matroid, and the corresponding specialization to the graphic case, namely forests of a graph.
The algorithm also applies to computing maximum independent sets in the intersection of two matroids.

\subsubsection{More graph objects}

We provide the first polynomial delay Gray codes for generating matchings, maximum matchings, and $c$-optimal matchings in graphs (Corollaries~\ref{cor:appl-match}, \ref{cor:appl-match-c}, and~\ref{cor:appl-match-max}, respectively).
The obtained listings correspond to a Hamilton path on the matching polytope.
We also provide the first polynomial delay Gray codes for generating vertex covers, minimum vertex covers, and $c$-optimal vertex covers in bipartite graphs.
The generated listings correspond to a Hamilton path on the vertex cover polytope.
The space required by our algorithms is only the space required for solving the corresponding optimization problems, i.e., polynomial in the size of the graph.
As the complement of a vertex cover is a stable (=independent) set, we also obtain efficient algorithms for generating stable sets, maximum stable sets and $c$-optimal stable sets in bipartite graphs, by traversing the stable set polytope.

\subsubsection{Poset objects}

We provide the first polynomial delay Gray codes for generating antichains, maximum antichains, $c$-optimal antichains, and $c$-optimal ideals of a poset.
The listings of objects correspond to Hamilton paths on the chain polytope and order polytope, respectively.
The space required by our algorithms is the space for solving the corresponding optimization problems, i.e., polynomial in the size of the poset.

\subsection{Related work}
\label{sec:related}

Theorem~\ref{thm:algoP-poly} implies that for every 0/1-polytope~$P$ and each of its vertices, there is a Hamilton path on the skeleton of~$P$ starting at that vertex.
Naddef and Pulleyblank~\cite{MR762893} proved a considerable strengthening of this result.
Specifically, they showed that the skeleton of every 0/1-polytope is either a hypercube and therefore Hamilton-laceable, or it is Hamilton-connected.
A graph is \defi{Hamilton-connected} if it admits a Hamilton path between any two distinct vertices, and a bipartite graph is \defi{Hamilton-laceable} if it admits a Hamilton path between any two vertices from distinct partition classes.
The Naddef-Pulleyblank construction is inductive, but it does not translate into an efficient algorithm, as the size of the skeleton is typically exponential; recall Table~\ref{tab:01poly}.

Conte, Grossi, Marino, Uno and Versari~\cite{MR4502134} recently presented a modification of reverse search called \emph{proximity search}, which yields polynomial delay algorithms for generating several classes of objects defined by inclusion-maximal subgraphs of a given graph, specifically maximal bipartite subgraphs, maximal degenerate subgraphs, maximal induced chordal subgraphs etc.
Their approach is based on traversing a suitable defined low-degree flip graph on top of the graph objects that is then traversed by backtracking.
A disadvantage of their approach is that it requires exponential space to store previously visited objects along the flip graph, unlike Algorithm~P which can be implemented with linear space.
Also, there is no polyhedral interpretation of their method, in particular the generated listings of objects are not Gray codes.

\subsection{Outline of this paper}

In Section~\ref{sec:prelim} we collect some notations that will be used throughout this paper.
Sections~\ref{sec:binary}--\ref{sec:opt} develop our theory that will yield Theorem~\ref{thm:algoP-poly} as a special case, and they establish a connection between the optimization problem and the generation problem.
Specifically, in Section~\ref{sec:binary} we present a simple greedy algorithm for computing a Hamilton path in a graph whose vertex set is a set of bitstrings.
In Section~\ref{sec:genlex} we consider Hamilton paths with special properties that this algorithm can compute successfully.
In Section~\ref{sec:prefix} we introduce a class of graphs called \emph{prefix graphs} which admit such Hamilton paths, and we provide a history-free implementation of the basic greedy algorithm for computing a Hamilton path in those graphs.
In Section~\ref{sec:opt} we show that skeleta of 0/1-polytopes are prefix graphs, and we specialize the earlier results and algorithms to the case of 0/1-polytopes.
We also reduce the computational problems for our history-free algorithm to instances of linear optimization with or without prescription.
The results listed in the last column of Table~\ref{tab:appl} are derived from our general theorems in Section~\ref{sec:appl}.
In Section~\ref{sec:duality}, we discuss the relation between our generation framework and the one presented in~\cite{MR4391718}, and we conclude in Section~\ref{sec:open} with some open questions.

\section{Preliminaries}
\label{sec:prelim}

For a graph~$G=(X,E)$ and a vertex~$x\in X$ we write $E(x)$ for the set of neighbors of~$x$ in~$G$.

We write $\varepsilon$ for the empty bitstring.
Also, for a bitstring~$x$ we write $\ol{x}$ for the complemented string.
For two bitstrings~$x$ and~$y$ we write $xy$ for the concatenation of~$x$ and~$y$.
Given a sequence~$L=x_1,\ldots,x_\ell$ of bitstrings and a bitstring~$y$, we also define $Ly:=x_1y,\ldots,x_\ell y$.
For a nonempty bitstring~$x$ we write $x^-$ for the string obtained from~$x$ by deleting the last bit.
Furthermore, for a set~$X\seq\{0,1\}^n$ we define~$X^-:=\{x^-\mid x\in X\}\seq\{0,1\}^{n-1}$ and for a sequence $L=x_1,\ldots,x_\ell$ of bitstrings we define $L^-:=x_1^-,\ldots,x_\ell^-$.
For a set $X\seq\{0,1\}^n$ and $b\in\{0,1\}$ we define $X^b:=\{x\in\{0,1\}\mid x_n=b\}$.
Similarly, for a sequence~$L$ of bitstrings of length~$n$ and $b\in\{0,1\}$ we write $L^b$ for the subsequence of strings in~$L$ that end with~$b$.

For a predicate $P$ and a real-valued function~$f$ defined on the set of objects~$X$ satisfying the predicate~$P$, we write $\argmin[P(x)\mid f(x)]:=\argmin_{P(x)}f(x)$ for the minimizers of~$f$ on~$X$.
This notation without subscripts is convenient for us, as the predicates we need are complicated, so printing these long expressions as subscripts would make the formulas too unwieldy.
Our new notation is typographically `dual' to the minimum values $\min \{f(x)\mid P(x)\}$.

\section{Binary graphs and a simple greedy algorithm}
\label{sec:binary}

A \defi{binary graph} is a graph $G=(X,E)$ such that $X\seq \{0,1\}^n$ for some integer~$n$.
In our applications, the graph~$G$ is typically given implicitly, using a description of size polynomial in~$n$, whereas the size of~$G$ will be exponential in~$n$.
Examples for such graphs are the skeletons of the polytopes listed in Table~\ref{tab:01poly}, which are described by the parameter listed in the first column.
In the setting of Gray codes, one routinely obtains binary graphs by taking~$X$ as the set of binary strings that encodes some class of combinatorial objects, and by taking~$E$ as the pairs of objects that differ in some local changes, for example by a certain Hamming distance or by certain operations on the bitstrings like transpositions, substring shifts, or reversals.

In the following we introduce a simple greedy algorithm to compute a Hamilton path in a binary graph~$G=(X,E)$, where $X\seq \{0,1\}^n$.

Algorithm~G starts at some initial vertex~$\tx\in X$, and then repeatedly moves from the current vertex~$x$ to a neighboring vertex~$y$ that has not been visited before and that minimizes~$\lambda(x,y)$ (recall the definition from Section~\ref{sec:basic-algo}), i.e., the algorithm greedily minimizes the lengths of the modified prefixes.
If all neighbors of~$x$ have been visited, the algorithm terminates.
By definition, the algorithm never visits a vertex twice, i.e., it always computes a path in~$G$.
However, it might terminate before having visited all vertices, i.e., the computed path may not be a Hamilton path.

\begin{algo}{Algorithm~G}{Shortest prefix changes}
This algorithm attempts to greedily compute a Hamilton path in a binary graph~$G=(X,E)$, where $X\seq\{0,1\}^n$, starting from an initial vertex~$\tx$.
\begin{enumerate}[label={\bfseries G\arabic*.}, leftmargin=8mm, noitemsep, topsep=3pt plus 3pt]
\item{} [Initialize] Set $x \gets \tx$.
\item{} [Visit] Visit~$x$.
\item{} [Shortest prefix neighbors] Compute the set~$N$ of unvisited neighbors~$y$ of~$x$ in~$G$ that minimize~$\lambda(x,y)$, i.e., $N \gets \argmin[y \in E(x) \,\wedge\, y \text{ unvisited}\mid \lambda(x,y)]$.
Terminate if~$N=\emptyset$.
\item{} [Tiebreaker+update~$x$] Pick an arbitrary vertex $y\in N$, set $x\gets y$ and goto~G2.
\end{enumerate}
\end{algo}

If the set~$N$ of unvisited neighbors encountered in step~G4 contains more than one element, then we have freedom to pick an arbitrary vertex~$y\in N$.
A tiebreaking rule may involve lexicographic comparisons between vertices in~$N$, or their Hamming distance from~$x$.

Note that Algorithm~G operates completely locally based on the current neighborhood and never uses global information.
Also note that a naive implementation of Algorithm~G may require exponential space to maintain the list of previously visited vertices, in order to be able to compute the set~$N$ in step~G3.
We will show that in many interesting cases, we can make the algorithm history-free, i.e., by introducing suitable additional data structures, we can entirely avoid storing any previously visited vertices.

\section{Genlex order}
\label{sec:genlex}

In this section, we provide a simple sufficient condition for when Algorithm~G succeeds in computing a Hamilton path in a binary graph.
Specifically, these are binary graphs that admit a Hamilton path satisfying a certain ordering property.
We also establish important optimality properties for such orderings.

Let $X\seq\{0,1\}^n$.
An \defi{ordering} of~$X$ is a sequence~$L$ that contains every element of~$X$ exactly once.
Furthermore, an ordering~$L$ of~$X$ is called \defi{genlex}, if all bitstrings with the same suffix appear consecutively in~$L$.
An equivalent recursive definition of genlex order is as follows:
The ordering $L$ is genlex, if $L=L^0,L^1$ or $L=L^1,L^0$, and the sequences~$L^{0-}$ and~$L^{1-}$ are in genlex order.

Colexicographic order is a special case of genlex order in which only the case $L=L^0,L^1$ occurs, i.e., all strings with suffix~$0x$ appear before all strings with suffix~$1x$, for every $x\in\{0,1\}^k$.
In the literature, genlex orderings are sometimes referred to as \emph{suffix-partitioned}~\cite{MR2062208}.
Our notion of genlex ordering is with respect to suffixes, but of course it could also be defined with respect to prefixes instead, generalizing lexicographic order, and such an ordering is sometimes called \emph{prefix-partitioned} in the literature.
Unlike lexicographic order or colexicographic order, which is unique for a given~$X$, there can be many distinct genlex orderings for~$X$.

We associate an ordering~$L=x_1,\ldots,x_\ell$ of~$X$ with a cost, defined by
\begin{equation}
\label{eq:cost}
c(L):=\sum_{i=1,\ldots,\ell-1}\lambda(x_i,x_{i+1}),
\end{equation}
which is the sum of lengths of prefixes that get modified in the ordering.
Genlex orderings are characterized by the property that they minimize this cost among all possible orderings.

\begin{lemma}
\label{lem:genlex-optimal}
Let $L$ and~$L'$ be two orderings of~$X\seq\{0,1\}^n$.
If $L$ is genlex, then we have $c(L)\leq c(L')$.
This inequality is strict if $L'$ is not genlex.
\end{lemma}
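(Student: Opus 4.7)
The plan is to decompose $c(L)$ across suffix lengths and reduce the claim to a per-suffix-length counting problem.

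First, I would establish the pointwise identity
\[\lambda(x,y)=\sum_{k=1}^n \mathbf{1}\bigl[\text{length-}k\text{ suffixes of }x\text{ and }y\text{ differ}\bigr],\]
which holds because the length-$k$ suffixes of $x$ and $y$ differ iff they disagree at some position $\geq n-k+1$, i.e., iff $\lambda(x,y)\geq n-k+1$; summing indicators over $k$ counts exactly these indices and gives $\lambda(x,y)$. Substituting into~\eqref{eq:cost} and swapping the two sums yields
\[c(L)=\sum_{k=1}^n D_k(L),\]
where $D_k(L)$ denotes the number of consecutive pairs in $L$ whose length-$k$ suffixes differ.

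Second, I would lower bound each $D_k(L)$ separately. Partitioning $L$ into maximal runs of equal length-$k$ suffix, $D_k(L)$ is exactly the number of boundaries between consecutive runs, which is $(\text{number of runs})-1$. Since each distinct length-$k$ suffix appearing in $X$ must account for at least one run, the number of runs is at least $r_k$, the number of distinct length-$k$ suffixes of elements of $X$. Hence $c(L)\geq \sum_{k=1}^n (r_k-1)$, with equality iff for every $k\in\{1,\dots,n\}$ all strings sharing a length-$k$ suffix appear contiguously in $L$.

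Third, I would verify by induction on $n$ that this ``all-$k$ contiguity condition'' is equivalent to $L$ being genlex. The $k=1$ case forces $L=L^0,L^1$ or $L=L^1,L^0$; the contiguity condition for $k\geq 2$ within each of $L^0$ and $L^1$ corresponds, via $x\mapsto x^-$, to the contiguity condition for length-$(k-1)$ suffixes on $L^{0-}$ and $L^{1-}$, which by the inductive hypothesis is equivalent to each of these sub-orderings being genlex. This matches the recursive definition of genlex verbatim. It follows that every genlex ordering attains the common minimum $\sum_{k=1}^n (r_k-1)$, while every non-genlex ordering fails the contiguity condition for some $k$ and therefore strictly exceeds it, yielding both parts of the lemma.

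The main obstacle I anticipate is not the counting itself but cleanly matching the recursive definition of genlex with the all-$k$ contiguity condition, in particular checking that contiguity is inherited under the suffix operation $x\mapsto x^-$ and combines correctly across the recursion. Once the linearization $c(L)=\sum_k D_k(L)$ is established, however, the global optimization over orderings separates into $n$ independent elementary counting problems, and the rest is bookkeeping.
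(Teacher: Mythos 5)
Your proof is correct, but it takes a genuinely different route from the paper's. The paper argues by an exchange argument: given a non-genlex ordering it exhibits a block swap $A0,B1,C0,D \mapsto A0,C0,B1,D$ that strictly decreases the cost, iterates until the ordering is genlex, and then separately shows by induction (via $c(L)=c(L^{0-})+c(L^{1-})+n$) that all genlex orderings have equal cost. You instead linearize the cost as $c(L)=\sum_{k=1}^n D_k(L)$, where $D_k(L)$ counts consecutive pairs whose length-$k$ suffixes differ, and bound each summand independently by $r_k-1$ (the number of distinct length-$k$ suffixes minus one), with equality exactly when the length-$k$ suffix classes are contiguous. Since simultaneous contiguity for all $k$ is precisely the paper's (non-recursive) definition of genlex --- your inductive step~3 really just re-verifies the equivalence of the two definitions that the paper asserts without proof --- both parts of the lemma follow at once. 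Your approach buys an explicit closed form for the optimal cost, namely $\sum_{k=1}^n(r_k-1)$, which is the number of non-root nodes of the suffix tree minus $n$, and it avoids the mild hand-waving in the paper about why the exchange process terminates at a genlex ordering; the paper's argument is in exchange somewhat shorter. All steps in your write-up check out, including the pointwise identity $\lambda(x,y)=\sum_{k=1}^n \mathbf{1}[\lambda(x,y)\geq n-k+1]$ and the run-counting identity $D_k(L)=(\text{number of runs})-1$.
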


Lemma~\ref{lem:genlex-optimal} implies in particular that all genlex orderings of~$X$ have the same cost.

\begin{proof}
Let $L'$ be an arbitrary ordering that is not genlex, i.e., there is a suffix~$bx$ with $b\in\{0,1\}$ such that the strings with suffix~$bx$ do not appear consecutively in~$L'$.
We first consider the case that~$x=\varepsilon$, i.e., in~$L'$ the last bit changes at least twice, w.l.o.g.\ from~$0$ to~$1$ and then back to~$0$ (and possibly again after that).
We thus have $L'=A0,B1,C0,D$, where the sequences $A,B,C$ are nonempty, but $D$ may be empty.
Furthermore, we choose the subsequence~$C$ maximally, i.e., if $D$ is nonempty, then its first bitstring ends with~1.
Consider the sequence $L:=A0,C0,B1,D$, obtained from~$L'$ by swapping the order of the blocks $B1$ and~$C0$.
If $D$ is empty, then $c(L)<c(L')$, as the transition to the first bitstring of~$C0$ costs~$n$ in~$L'$ but strictly less in~$L$.
If $D$ is nonempty, then we also have $c(L)<c(L')$, as the transition to the first bitstring of~$D$ costs~$n$ in~$L'$ but strictly less in~$L$.
We can repeatedly apply such exchange operations to reduce the total cost until the last bit changes at most once.
It remains to consider the case that $x\neq\varepsilon$, in which case the exchange operations discussed before are applied to the subsequence of~$L'$ with the common suffix~$x$, and they are performed until the resulting ordering is genlex.

Note that if $L$ is a genlex ordering, then $c(L)=c(L^{0-})+c(L^{1-})+n$, where the $+n$ comes from the transition between the two blocks~$L^0$ and~$L^1$.
By induction on~$n$ this implies that all genlex orderings have the same cost.

Combining these two arguments proves the lemma.
\end{proof}

The next two lemmas capture further important properties of genlex orderings.

\begin{lemma}
\label{lem:genlex-lambda}
Let $L=x_1,\ldots, x_\ell$ be a genlex ordering of~$X\seq\{0,1\}^n$.
For every two indices $i,j\in[\ell]$ with $i<j$ we have $\lambda(x_i,x_{i+1})\leq \lambda(x_i,x_j)$.
\end{lemma}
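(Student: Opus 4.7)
The plan is to leverage the defining structural feature of a genlex ordering---that bitstrings sharing a common suffix appear as a contiguous block---and then read off the inequality immediately. The main work is in proving this "block property" by induction; once it is in hand, the lemma is essentially a one-line observation, so I do not anticipate any genuine obstacle beyond getting the induction set up cleanly.

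First I would establish the following auxiliary claim: for every integer $n\geq 0$, every genlex ordering $L$ of $X\seq\{0,1\}^n$, every $k\in\{0,\ldots,n\}$ and every $s\in\{0,1\}^k$, the subsequence of $L$ consisting of the bitstrings ending with~$s$ is contiguous in~$L$. The proof is by induction on~$n$. The base $n=0$ is trivial since $L$ has at most one entry. For the inductive step, the case $k=0$ is again trivial (the whole of $L$ is the relevant contiguous subsequence), so suppose $k\geq 1$ and write $s=s'b$ with $b\in\{0,1\}$ and $s'\in\{0,1\}^{k-1}$. Any bitstring of $L$ ending in~$s$ lies in~$L^b$, and after stripping its last bit it becomes a bitstring of $L^{b-}$ ending in~$s'$. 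By the recursive definition of genlex, $L^{b-}$ is itself a genlex ordering of $X^{b-}\seq\{0,1\}^{n-1}$, so the inductive hypothesis applied to $L^{b-}$ yields that these bitstrings form a contiguous subsequence of~$L^{b-}$; putting the last bit~$b$ back shows the same for~$L^b$; and since $L^b$ itself is contiguous in~$L$ (directly from $L=L^0,L^1$ or $L=L^1,L^0$), the claim follows for~$L$.

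With the block property available, the lemma is immediate. Fix $i<j$ and set $\beta:=\lambda(x_i,x_j)$. By definition of~$\lambda$, the bitstrings $x_i$ and $x_j$ agree in positions $\beta+1,\ldots,n$, i.e., they share a common suffix $s\in\{0,1\}^{n-\beta}$. The block property applied to this~$s$ forces every entry $x_t$ with $i\leq t\leq j$ to end in~$s$ as well, since $x_i$ and $x_j$ both lie in the contiguous block of bitstrings ending in~$s$. In particular $x_{i+1}$ agrees with $x_i$ on positions $\beta+1,\ldots,n$, so $\lambda(x_i,x_{i+1})\leq \beta=\lambda(x_i,x_j)$, as required.
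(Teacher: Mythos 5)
Your proof is correct and uses essentially the same idea as the paper: the common suffix of $x_i$ and $x_j$ forces $x_{i+1}$ to carry that suffix too (by contiguity of the suffix block), which immediately gives $\lambda(x_i,x_{i+1})\leq\lambda(x_i,x_j)$; the paper merely phrases this as a contradiction. Your inductive derivation of the ``block property'' is superfluous, since the paper takes exactly that property as the primary definition of genlex (the recursive description is only stated as an equivalent reformulation).
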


\begin{proof}
For the sake of contradiction, suppose that there are indices $i,j \in [\ell]$ with $i<j$ such that $\lambda(x_i,x_{i+1}) > \lambda(x_i,x_j)$.
Thus, the longest common suffix~$s$ of~$x_i$ and~$x_j$ has length $n-\lambda(x_i,x_j)$.
Similarly, the longest common suffix of~$x_i$ and~$x_{i+1}$ has length $n-\lambda(x_i,x_{i+1})<n-\lambda(x_i,x_j)$.
In particular, $s$ is not a suffix of~$x_{i+1}$.
Consequently, the bitstrings with suffix~$s$ do not appear consecutively in~$L$, a contradiction.
\end{proof}

\begin{lemma}
\label{lem:genlex-before-after}
Let $L=x_1,\ldots, x_\ell$ be a genlex ordering of~$X\seq\{0,1\}^n$.
For every three indices $i,j,k\in[\ell]$ with $i<j<k$ we have $\lambda(x_i,x_j)\neq \lambda(x_j,x_k)$.
\end{lemma}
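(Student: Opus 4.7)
The plan is to argue by contradiction. Suppose that there are indices $i<j<k$ in $[\ell]$ with $\lambda(x_i,x_j)=\lambda(x_j,x_k)=m$.

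First I would unpack what this common value $m$ forces at the bit level. By definition of $\lambda$, the strings $x_i$ and $x_j$ agree on positions $m+1,\ldots,n$ and differ at position $m$; the same holds for $x_j$ and $x_k$. Hence $x_i$, $x_j$ and $x_k$ all share a common suffix $s\in\{0,1\}^{n-m}$ on positions $m+1,\ldots,n$. Moreover, since $\{0,1\}$ has only two elements, the bits $(x_i)_m$ and $(x_k)_m$ must both be opposite to $(x_j)_m$, and therefore equal to one another. Writing $b:=(x_i)_m=(x_k)_m$ and $\bar b:=(x_j)_m$, we conclude that $x_i$ and $x_k$ end with $bs$ while $x_j$ ends with $\bar b s$.

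Next I would invoke the genlex property in two stages. Because $L$ is genlex, all bitstrings of $X$ ending with the length-$(n-m)$ suffix $s$ appear consecutively in $L$; in particular $x_i,x_j,x_k$ all lie inside one contiguous block $B$. Applying the recursive form of the genlex definition to the sub-sequence $B$, the bitstrings in $B$ ending with $0s$ and those ending with $1s$ must each occupy a contiguous sub-block of $B$. However, in $L$ we encounter the $m$-th bits of $x_i,x_j,x_k$ in the pattern $b,\bar b,b$, so the $b$-block is interrupted by an element of the $\bar b$-block, contradicting consecutiveness. This establishes the lemma.

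The only delicate point is making the recursive step of the genlex definition precise, i.e.\ checking that genlex on $X$ forces genlex on the sub-sequence of strings sharing the suffix $s$. This follows directly by iterating the recursive characterisation $L=L^0,L^1$ or $L=L^1,L^0$ (with each $L^{0-},L^{1-}$ again genlex) through $n-m$ levels, which peels off exactly the $n-m$ trailing coordinates defining the suffix $s$. No further computation is needed.
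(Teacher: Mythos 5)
Your proposal is correct and follows essentially the same argument as the paper: derive the common suffix $s$ of all three strings, observe that $(x_i)_m=(x_k)_m=b\neq (x_j)_m$, and contradict the genlex property via the suffix $bs$. The only difference is that your final step (passing to the block $B$ and invoking the recursive characterisation through $n-m$ levels) is more elaborate than necessary — the paper's non-recursive definition of genlex ("all bitstrings with the same suffix appear consecutively") applies directly to the suffix $bs$, since $x_i$ and $x_k$ have it while the intervening $x_j$ does not.
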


\begin{proof}
For the sake of contradiction, suppose that there are indices $i,j,k\in [\ell]$ with $i<j<k$ such that $\lambda(x_i,x_j)=\lambda(x_j,x_k)$.
Thus, $x_i$, $x_j$ and $x_k$ have a common suffix~$s$ of length~$n-\lambda(x_i,x_j)=n-\lambda(x_j,x_k)$.
Furthermore, $x_i$ and~$x_j$ disagree in the bit to the left of this suffix (at position $\lambda(x_i,x_j)-1=\lambda(x_j,x_k)-1$), and $x_j$ and~$x_k$ disagree in this bit, which implies that $x_i$ and~$x_k$ agree in this bit~$b$.
But this means that the bitstrings with suffix~$bs$ do not appear consecutively in~$L$, a contradiction.
\end{proof}

We now establish that Algorithm~G presented in the previous section generates every genlex Hamilton path if equipped with a suitable tiebreaking rule.
This provides a simple sufficient condition for when the algorithm succeeds in computing a Hamilton path in a given binary graph.
Note that an ordering~$L=x_1,\ldots,x_\ell$ of~$X\seq\{0,1\}^n$ defines a tiebreaking rule~$\tau_L$ that for every given set $N\seq X$ returns the element from~$N$ that appears first in~$L$.
Formally, if $N$ has size~$s$ then there are indices $i_1<\cdots<i_s$ such that $N=\{x_{i_1},\ldots,x_{i_s}\}$, and we define $\tau_L(N):=x_{i_1}$.

\begin{theorem}
\label{thm:algoG-genlex}
Let $G=(X,E)$ be a binary graph that admits a genlex Hamilton path~$L=x_1,\ldots,x_\ell$.
Then Algorithm~\upright{G} with tiebreaking rule~$\tau_L$ and initial vertex $\tx:=x_1$ computes the path~$L$.
\end{theorem}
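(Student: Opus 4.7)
My plan is to prove this by straightforward induction on the iteration count $i$, showing that after $i$ executions of the main loop the algorithm has visited exactly the prefix $x_1, \ldots, x_i$ of $L$ in that order. The base case $i = 1$ is immediate from the initialization $x \gets \tx = x_1$ in step~G1, combined with the visit in step~G2.

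For the inductive step, assume that the algorithm has just visited $x_i$ after visiting $x_1, \ldots, x_{i-1}$ in order. I need to verify two things: first, that $x_{i+1}$ lies in the set $N = \argmin[y \in E(x_i) \,\wedge\, y \text{ unvisited} \mid \lambda(x_i, y)]$ computed in step~G3; and second, that the tiebreaking rule $\tau_L$ selects $x_{i+1}$ from $N$. For the first point, $x_{i+1}$ is a neighbor of $x_i$ in $G$ since $L$ is a Hamilton path, and it is unvisited by the inductive hypothesis, so it is eligible. Any other unvisited neighbor of $x_i$ is of the form $x_j$ for some $j > i+1$ (the visited vertices are exactly $x_1, \ldots, x_i$, and $y \neq x_i$). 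By Lemma~\ref{lem:genlex-lambda} applied to the indices $i < i+1$ and $i < j$, we get $\lambda(x_i, x_{i+1}) \leq \lambda(x_i, x_j)$, so $x_{i+1}$ achieves the minimum $\lambda$-value, i.e., $x_{i+1} \in N$. For the second point, every element of $N$ has the form $x_k$ with $k \geq i+1$, so $x_{i+1}$ is the element of $N$ appearing earliest in $L$, which is exactly what $\tau_L$ returns. Hence step~G4 sets $x \gets x_{i+1}$, completing the inductive step.

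Finally, when $i = \ell$ all vertices of $X$ are visited, so the set $N$ in step~G3 is empty and the algorithm terminates, having produced the sequence $x_1, \ldots, x_\ell = L$ as required.

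I do not foresee any serious obstacle here: Lemma~\ref{lem:genlex-lambda} is tailor-made for exactly this argument, since it certifies that the greedy shortest-prefix choice is always consistent with continuing along a genlex ordering. The only subtlety is confirming that ties in $N$ are resolved correctly, but this follows immediately from the definition of $\tau_L$ together with the observation that all competitors for $x_{i+1}$ lie strictly later in~$L$.
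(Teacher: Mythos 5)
Your proof is correct and follows essentially the same route as the paper: induction on the iteration count, using Lemma~\ref{lem:genlex-lambda} to show $x_{i+1}$ minimizes $\lambda(x_i,\cdot)$ among unvisited neighbors, and the fact that all elements of $N$ are unvisited (hence appear at positions $\geq i+1$ in $L$) to conclude that $\tau_L$ selects $x_{i+1}$. Your write-up is marginally more explicit about the tiebreaking step, but there is no substantive difference.
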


\begin{proof}
We prove by induction on~$i$ that Algorithm~G visits the vertex~$x_i$ in the $i$th iteration for all $i=1,\ldots,\ell$.
For $i=1$ this is clear by the initialization with $\tx:=x_1$.
For the induction step let~$i>1$.
The vertex~$x_i$ is computed in lines~G3 and~G4 at the end of the $(i-1)$st iteration of the algorithm.
At this point, the visited vertices are $x_1,\dots,x_{i-1}$ in this order, i.e., $x_i$ is unvisited.
Furthermore, $x_i$ is a neighbor of~$x_{i-1}$ in~$G$ by the assumption that $L$ is a Hamilton path, i.e., $x_i\in E(x_{i-1})$.
Furthermore, by Lemma~\ref{lem:genlex-lambda}, $x_i$ minimizes $\lambda(x_{i-1},y)$ among all unvisited neighbors~$y$ of~$x_{i-1}$.
Consequently, we have $x_i\in N$, i.e., $x_i$ is contained in the set~$N$ computed in step~G3.
As $N$ contains only unvisited vertices, the tiebreaking rule~$\tau_L$ indeed selects $x\gets x_i$ as the next vertex in step~G4.
\end{proof}

\begin{remark}
\label{rem:greedy}
Williams~\cite{MR3126386} pioneered the greedy method as a general paradigm to reinterpret known Gray codes and to derive new ones.
Specifically, in his paper he found greedy interpretations of the classical binary reflected Gray code, the Steinhaus-Johnson-Trotter listing of permutations by adjacent transpositions, the Ord-Smith/Zaks~\cite{DBLP:journals/cacm/Ord-Smith67,MR753548} ordering of permutations by prefix reversals, and for the rotation Gray code for binary trees due to Lucas, Roelants van Baronaigien, and Ruskey~\cite{MR1239499}.

The greedy method has also been very useful in discovering new Gray codes for (generalized) permutations~\cite{MR3513761,MR4346446}, for spanning trees of special graphs~\cite{DBLP:conf/cocoon/CameronGS21}, and for spanning trees of arbitrary graphs and more generally bases of a matroid~\cite{MR4473269}.
Also, the permutation-based framework for combinatorial generation proposed by Hartung, Hoang, M\"utze and Williams~\cite{MR4391718} relies on a simple greedy algorithm.

Theorem~\ref{thm:algoG-genlex} now provides us with an explanation for the success of the greedy method.
Furthermore, the recent survey~\cite{mutze:2023} lists a large number of Gray codes from the literature that have the genlex property (there are more than 50 occurrences of the word `genlex'), and all of those can now be interpreted as the result of a simple algorithm that greedily minimizes the lengths of modified prefixes in each step.
\end{remark}

\begin{remark}
Admittedly, one can consider the proof of Theorem~\ref{thm:algoG-genlex} as `cheating', as it builds knowledge about~$L$ into the tiebreaking rule~$\tau_L$ of the algorithm, knowledge that one typically does not have when the goal is to come up with an algorithm to produce a listing~$L$ in the first place.
Consequently, in practice the challenge is to come up with a tiebreaking rule that uses only local information about the neighborhood of the current vertex~$x$ and that can be computed efficiently.
In some cases, discussed in the next section, we are in the even better position that Algorithm~G works for \emph{every} choice of tiebreaking rule, which gives dramatic additional flexibility.
\end{remark}

\section{Prefix graphs and history-free implementation}
\label{sec:prefix}

In this section, we exhibit a class of binary graphs for which Algorithm~G succeeds for \emph{every} tiebreaking rule, and for \emph{every} choice of initial vertex~$\tx$.
Furthermore, for these binary graphs we can implement Algorithm~G in a history-free way, i.e., without maintaining the list of all previously visited vertices, by maintaining some simple additional data structures.
Furthermore, we will formulate two auxiliary problems that, when solved efficiently, imply that Algorithm~G runs with short delay between consecutively visited vertices.
In later sections, we will solve these auxiliary problems by combinatorial optimization methods.

\subsection{Prefix graphs}
\label{sec:prefix-graphs}

For a binary graph~$G=(X,E)$, where $X\seq \{0,1\}^n$, and for $b\in\{0,1\}$ we define $G^{b-}:=(X^{b-},E^{b-})$ with $E^{b-}:=\{(x,y) \mid x,y\in X^{b-} \text{ and } (xb,yb) \in E\}$.
We say that~$G=(X,E)$ is a \defi{prefix graph} if $X=\emptyset$, or $n=0$ and $X=\{\varepsilon\}$, or $n>0$ and the following two conditions hold:
\begin{enumerate}
\item[(p1)] $G^{0-}$ and $G^{1-}$ are prefix graphs;
\item[(p2)] If $X^0$ and $X^1$ are both nonempty, then for every $b\in\{0,1\}$ and for every vertex~$x \in X^b$ there exists a vertex~$y \in X^{\ol{b}}$ such that $(x,y) \in E$.
\end{enumerate}

Even though the condition~(p2) may seem rather restrictive, we will see that many interesting binary graphs are indeed prefix graphs.
In particular, the skeleton of \emph{every} 0/1-polytope is a prefix graph; see Lemma~\ref{lem:01prefix} below.

\begin{lemma}
\label{lem:prefix-genlex}
For every prefix graph~$G=(X,E)$ and every vertex~$\tx\in X$, there is a genlex Hamilton path in~$G$ starting at~$\tx$.
\end{lemma}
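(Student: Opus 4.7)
The plan is to prove the lemma by induction on the string length~$n$, following the recursive structure of the definition of a prefix graph. The base cases ($X=\emptyset$ or $n=0$ with $X=\{\varepsilon\}$) are immediate, since a trivial or single-vertex path is genlex and starts at~$\tx$.

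For the inductive step with $n>0$, assume~$\tx\in X^b$ for some $b\in\{0,1\}$. If $X^{\ol b}=\emptyset$, then $X=X^b$ and every edge of~$G$ corresponds under last-bit stripping to an edge of~$G^{b-}$. By condition~(p1), $G^{b-}$ is a prefix graph, so by the induction hypothesis it has a genlex Hamilton path starting at~$\tx^-$. Appending the bit~$b$ to every vertex of this path yields a genlex Hamilton path of~$G$ starting at~$\tx$.

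The substantive case is when both $X^0$ and $X^1$ are nonempty. The strategy is to build a Hamilton path that first sweeps through all of~$X^b$ and then all of~$X^{\ol b}$. Concretely, I would first invoke the induction hypothesis on~$G^{b-}$ (a prefix graph by~(p1)) with initial vertex~$\tx^-$ to obtain a genlex Hamilton path $L_b^-$ of~$G^{b-}$; appending~$b$ to each term gives a Hamilton path $L_b$ of the induced subgraph on~$X^b$, starting at~$\tx$ and ending at some vertex~$y\in X^b$. Next, condition~(p2) supplies a neighbor $z\in X^{\ol b}$ of~$y$ in~$G$. Invoking the induction hypothesis on~$G^{\ol b-}$ with initial vertex~$z^-$ produces a genlex Hamilton path $L_{\ol b}^-$ of~$G^{\ol b-}$; appending~$\ol b$ gives a Hamilton path $L_{\ol b}$ of the induced subgraph on~$X^{\ol b}$, starting at~$z$.

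Concatenating $L_b$ with $L_{\ol b}$ yields the desired path. The Hamilton property holds since the two halves cover $X^b$ and $X^{\ol b}$ disjointly, consecutive vertices within each half are adjacent in~$G$ by the definition of $G^{b-}$ and~$G^{\ol b-}$, and the join edge $(y,z)$ exists by the choice of~$z$. The genlex property holds because strings ending in~$b$ appear consecutively (as the whole block~$L_b$), strings ending in~$\ol b$ appear consecutively (as the whole block~$L_{\ol b}$), and within each block the induction hypothesis guarantees genlex order after last-bit stripping. I do not anticipate a genuine obstacle here; the only point that must be checked carefully is that condition~(p2) is invoked only when both $X^0$ and $X^1$ are nonempty, which is exactly the case distinction already in place.
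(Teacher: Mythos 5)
Your proof is correct and follows essentially the same route as the paper's: induct on~$n$, split~$X$ by the last bit, run the induction hypothesis on~$G^{b-}$ starting at~$\tx^-$, bridge to~$X^{\ol b}$ via condition~(p2) from the last vertex of the first block, and concatenate. The only cosmetic difference is that you treat the general last bit~$b$ explicitly rather than assuming $b=0$ without loss of generality.
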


\begin{proof}
We argue by induction on~$n$.
The base case $n=0$ holds trivially.
For the induction step we assume that~$n>0$.
We assume without loss of generality that $\tx$ has last bit~0, as the other case is symmetric.
By this assumption~$X^0$ and~$G^{0-}$ are nonempty.
From condition~(p1) of prefix graphs we obtain that~$G^{0-}$ is a prefix graph, so by induction there is a genlex Hamilton path~$L$ in~$G^{0-}$ starting at~$\tx^-$.
Let $x'$ be the last vertex of~$L$.
If $X^1$ is empty, then $L0$ is a genlex Hamilton path in~$G$ and we are done.
Otherwise, $G^{1-}$ is nonempty.
From condition~(p1) we obtain that~$G^{1-}$ is a prefix graph.
Furthermore, from condition~(p2) we obtain that there is a vertex~$\ty\in X^1$ such that $(x'0,\ty)\in E$.
Therefore, by induction there is a genlex Hamilton path~$M$ in~$G^{1-}$ starting at~$\ty^-$.
The concatenation $L0,M1$ is the desired genlex Hamilton path in~$G$.
\end{proof}

Our next theorem strengthens Lemma~\ref{lem:prefix-genlex} and makes it algorithmic.
This fundamental theorem asserts that Algorithm~G succeeds in computing a Hamilton path on \emph{every} prefix graph, \emph{regardless} of the choice of tiebreaking rule, and \emph{regardless} of the choice of initial vertex.
The importance of this result can hardly be overstated, as it gives us dramatic flexibility in many applications.

\begin{theorem}
\label{thm:algoG-prefix}
Let $G=(X,E)$ be a prefix graph.
For every tiebreaking rule and every initial vertex~$\tx$, Algorithm~\upright{G} computes a genlex Hamilton path on~$G$ starting at~$\tx$.
\end{theorem}

The proof follows the same strategy as the proof of Lemma~\ref{lem:prefix-genlex}. 

\begin{proof}
We argue by induction on~$n$.
The base case~$n=0$ holds trivially.
For the induction step we assume that~$n>0$.
We assume without loss of generality that~$\tx$ has last bit~0, as the other case is symmetric.
By this assumption $X^0$ and $G^{0-}$ are nonempty.
From condition~(p1) of prefix graphs we obtain that $G^{0-}$ is a prefix graph, so by induction Algorithm~G with input $G^{0-}$ computes a genlex Hamilton path~$L$ in~$G^{0-}$ starting at~$\tx^-$.
Let $x'$ be the last vertex of~$L$.
Observe that Algorithm~G with input~$G$ produces the path~$L0$, whose last vertex is~$x'0$, and we now consider the iteration of the algorithm where the vertex~$x'0$ is visited.
After visiting~$x'0$ all vertices of~$X^0$ have been visited.
If $X^1$ is empty, then the set~$N$ computed in step~G3 is empty, the algorithm terminates and we are done.
Otherwise, $G^{1-}$ is nonempty.
From condition~(p1) we obtain that~$G^{1-}$ is a prefix graph.
Furthermore, from condition~(p2) we obtain that there is a vertex~$y\in X^1$ such that $(x'0,y)\in E$, i.e., we have $y\in E(x'0)$.
Furthermore, $y$ is unvisited, as its last bit equals~1 and therefore $y\in N$, implying that $N\neq\emptyset$.
Consequently in step~G4 the algorithm moves to some vertex~$\ty\in N\seq X^1$, which is true regardless of the tiebreaking rule being used.
We know by induction that Algorithm~G with input~$G^{1-}$ computes a genlex Hamilton path~$M$ in~$G^{1-}$ starting at~$\ty^-$.
From this we conclude that Algorithm~G with input~$G$ computes the genlex Hamilton path~$L0,M1$.
\end{proof}

\subsection{Suffix trees and branchings}
\label{sec:suffix-tree}

We now describe how to equip Algorithm~G with additional data structures so that it does not need to store any previously visited vertices.

The key observation is that the suffixes of a genlex ordering have a binary tree structure.
Formally, let $L$ be a genlex ordering of a set~$X\seq\{0,1\}^n$.
The \defi{suffix tree} $\cT(L)$ is an ordered rooted tree whose vertices are all possible suffixes of~$X$, with the following adjacencies; see Figure~\ref{fig:suffix-tree}:
\begin{itemize}[itemsep=0ex,parsep=0ex,leftmargin=2.5ex]
\item the empty suffix~$\varepsilon$ is the root of~$\cT(L)$;
\item for every suffix~$s$ of length $k$, its children in~$\cT(L)$ are the suffixes of length~$k+1$ that have $s$ as a common suffix, and the order of children from left to right corresponds to the order of the suffixes in~$L$.
\end{itemize}
Note that the set of leaves of~$\cT(L)$ equals~$X$, and the sequence of leaves in~$\cT(L)$ from left to right equals~$L$.
Furthermore, every vertex in~$\cT(L)$ has either one or two children.

\begin{figure}
\begin{tabular}{ccc}
\includegraphics[page=1]{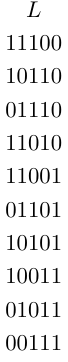} &
\includegraphics[page=2]{tree} &
\includegraphics[page=3]{tree} \\
(a) & (b) & (c)
\end{tabular}
\caption{(a)~A genlex ordering~$L$ of all bitstrings of length~5 with three~1s.
(b)~The binary tree structure in~$L$.
(c)~The corresponding suffix tree~$\cT(L)$.}
\label{fig:suffix-tree}
\end{figure}

When producing a genlex ordering~$L$, Algorithm~G traverses the leaves of the suffix tree~$\cT(L)$.
To obtain a history-free implementation, it is enough to store information about the current leaf and the branchings on the path from that leaf to the root.
Formally, for~$x\in X$ we define
\[B(x):=\{\lambda(x,y) \mid y \in X-x \},\]
and we refer to $B(x)$ as the set of \defi{branchings of~$x$}.
This definition is independent of~$L$ and hence of~$\cT(L)$.
Note however that in a suffix tree~$\cT(L)$ and for a leaf~$x\in X$, we have $i\in B(x)$ if and only if the node in distance~$i$ from~$x$ in~$\cT(L)$ (which has depth $n-i$ in the tree) has two children; see Figure~\ref{fig:branchings}.

For a genlex ordering~$L=x_1,\ldots,x_\ell$ of~$X$ and~$i\in[\ell]$, we define the set of \defi{unseen branchings of~$x_i$ w.r.t.~$L$} as
\begin{equation}
\label{eq:BLxi}
B_L(x_i):=\{ \lambda(x_i,x_j) \mid i<j\leq \ell \}\seq B(x_i).
\end{equation}
These are branchings in~$\cT(L)$ that lead to children that are visited after~$x_i$ in~$L$; see Figure~\ref{fig:branchings}.

\begin{figure}
\includegraphics[page=4]{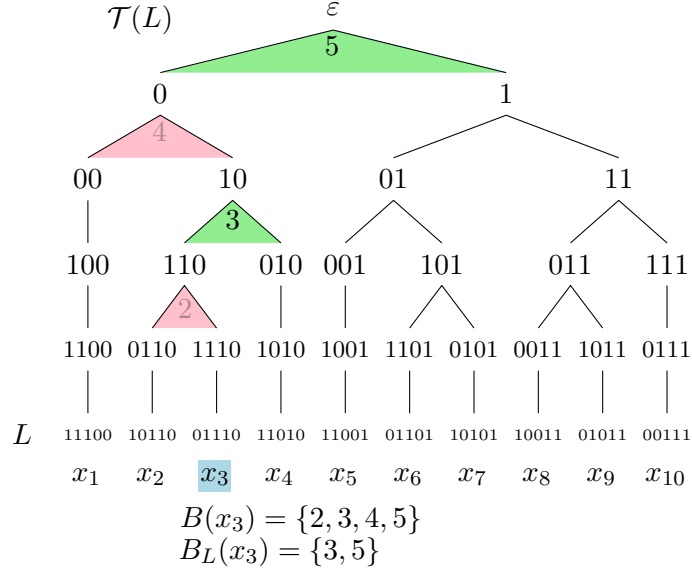}
\caption{Illustration of branchings and unseen branchings for the vertex~$x_3$ in the suffix tree from Figure~\ref{fig:suffix-tree}.}
\label{fig:branchings}
\end{figure}

The next lemma states two important properties about the quantities defined before.

\begin{lemma}
\label{lem:branchings-properties}
Let $L=x_1,\dots,x_\ell$ be a genlex ordering of $X\seq\{0,1\}^n$.
For every $1\leq i<\ell$ the minimum unseen branching $\beta:=\min B_L(x_i)$ satisfies the following properties:
\begin{enumerate}[label=(\roman*),leftmargin=8mm, noitemsep, topsep=1pt plus 1pt]
\item for every $x_j\in X-x_i$ with $\lambda(x_i,x_j)=\beta$ we have $j>i$;
\item we have $B_L(x_i)\setminus \{\beta\}=B_L(x_{i+1})\setminus [\beta-1]$.
\end{enumerate}
\end{lemma}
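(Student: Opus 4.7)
The plan is to first pin down the structural content of the hypothesis. By Lemma~\ref{lem:genlex-lambda} applied at position $i$, the value $\lambda(x_i,x_{i+1})$ is the minimum of $\lambda(x_i,x_j)$ over all $j>i$, so it equals $\min B_L(x_i)=\beta$. In particular $x_i$ and $x_{i+1}$ agree on all positions strictly greater than $\beta$ and disagree at position~$\beta$. This is the one piece of structure that drives both parts.

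For (i), I would argue by contradiction. Assume some $x_j\in X-x_i$ with $\lambda(x_i,x_j)=\beta$ satisfies $j<i$. Since $\beta\in B_L(x_i)$ there is also some $k>i$ with $\lambda(x_i,x_k)=\beta$. Applying Lemma~\ref{lem:genlex-before-after} to the indices $j<i<k$ gives $\lambda(x_j,x_i)\neq\lambda(x_i,x_k)$, but both values equal~$\beta$, a contradiction. This is a direct one-line application of the earlier lemma, so I don't expect any difficulty here.

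For (ii), the plan is a case analysis on $\lambda(x_i,x_j)$ for $j>i+1$, using the structural observation above. By Lemma~\ref{lem:genlex-lambda} we have $\lambda(x_i,x_j)\geq\beta$ for such $j$. In the case $\lambda(x_i,x_j)>\beta$, the positions $>\lambda(x_i,x_j)$ as well as position~$\beta$ and positions between them are handled by comparing $x_{i+1}$ to $x_i$ (they agree above $\beta$) and to $x_j$ (they agree above $\lambda(x_i,x_j)$), while at position $\lambda(x_i,x_j)$ the strings $x_i$ and $x_{i+1}$ agree and $x_i,x_j$ disagree; this yields $\lambda(x_{i+1},x_j)=\lambda(x_i,x_j)$. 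In the case $\lambda(x_i,x_j)=\beta$, the strings $x_i$ and $x_j$ agree on positions $>\beta$ and disagree at $\beta$, while $x_{i+1}$ agrees with $x_i$ on positions $>\beta$ and disagrees at $\beta$, forcing $x_{i+1}$ and $x_j$ to agree on all positions $\geq\beta$, i.e.\ $\lambda(x_{i+1},x_j)<\beta$.

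I then assemble the equality. Writing $B_L(x_i)\setminus\{\beta\}=\{\lambda(x_i,x_j)\mid j>i,\ \lambda(x_i,x_j)>\beta\}$ (the term for $j=i+1$ equals~$\beta$ and is removed), the first case shows this set coincides with $\{\lambda(x_{i+1},x_j)\mid j>i+1,\ \lambda(x_{i+1},x_j)>\beta\}$, which is exactly the part of $B_L(x_{i+1})$ lying above~$\beta$. The second case shows that the remaining contributions to $B_L(x_{i+1})$, coming from $j>i+1$ with $\lambda(x_i,x_j)=\beta$, lie in $[\beta-1]\subseteq[\beta]$ and are therefore removed on the right-hand side. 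Hence $B_L(x_i)\setminus\{\beta\}=B_L(x_{i+1})\setminus[\beta]$. The only mildly delicate step is bookkeeping the indices $j=i+1$ and $j>i+1$ correctly; the main obstacle is staying organized in the case analysis rather than any real mathematical difficulty.
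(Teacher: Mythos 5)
Your proposal is correct and follows essentially the same route as the paper: part (i) is the same one-line application of Lemma~\ref{lem:genlex-before-after}, and part (ii) rests on the same key observation that $\lambda(x_i,x_{i+1})=\beta$ (via Lemma~\ref{lem:genlex-lambda}) followed by a comparison of common suffixes of $x_i$, $x_{i+1}$, and $x_j$. The only cosmetic difference is that where the paper invokes Lemma~\ref{lem:genlex-before-after} a second time to rule out $\beta\in B_L(x_{i+1})$, you handle that case directly from the binary alphabet; both are fine.
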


\begin{proof}
By Lemma~\ref{lem:genlex-lambda} we have $\lambda(x_i,x_{i+1})=\beta$.
To prove~(i), suppose for the sake of contradiction that there is some $x_j\in X-x_i$ with $\lambda(x_i,x_j)=\beta$ and $j<i$.
Then we have $j<i<i+1$ with $\lambda(x_j,x_i)=\lambda(x_i,x_{i+1})=\beta$, contradicting Lemma~\ref{lem:genlex-before-after}.

To prove~(ii), first note that $\beta\notin B_L(x_{i+1})$, as otherwise there would be $j>i+1$ such that $\lambda(x_i,x_{i+1})=\lambda(x_{i+1},x_j)$, contradicting Lemma~\ref{lem:genlex-before-after}.
It remains to argue that all elements strictly larger than~$\beta$ are the same in~$B_L(x_i)$ and~$B_L(x_{i+1})$.
Note that if $\lambda(x_{i+1},x_j)\leq \lambda(x_i,x_{i+1})=\beta$ for all $j>i+1$, then there are no elements strictly larger than~$\beta$ in either of the two sets, so we are done.
Otherwise let $j>i+1$ be an index such that $\lambda(x_{i+1},x_j)>\lambda(x_i,x_{i+1})=\beta$.
Then the longest common suffix of~$x_i$ and~$x_{i+1}$ properly contains the longest common suffix of~$x_{i+1}$ and~$x_j$.
Hence, the longest common suffix of~$x_{i+1}$ and~$x_j$ is also the longest common suffix of~$x_i$ and~$x_j$.
Consequently, we have $\lambda(x_i,x_j)=\lambda(x_{i+1},x_j)$ for every $j>i+1$ such that $\lambda(x_{i+1},x_j)>\lambda(x_i,x_{i+1})=\beta$.
This completes the proof of~(ii).
\end{proof}

\subsection{History-free implementation}

An \defi{interval} is a subset of consecutive natural numbers.
For an interval $I\seq [n]$ we define
\[
\lambda_I(x,y) := \begin{cases}
  \lambda(x,y) & \text{if }\lambda(x,y) \in I, \\
  \infty & \text{otherwise}.
\end{cases}
\]
Note that we have $\lambda_{[n]}(x,y)=\lambda(x,y)$.

To make Algorithm~G history-free, we need to get rid of the qualification `$y$ unvisited' in the computation of the set~$N$ in line~G3.
This is achieved in Algorithm~G\sss{} stated below, which takes as input a binary graph~$G=(X,E)$, where $X\seq\{0,1\}^n$.
The algorithm keeps track of the unseen branchings of the current vertex~$x$ by maintaining a stack~$U$ of disjoint intervals that cover all unseen branchings of~$x$, with the property that each interval~$I$ on the stack contains at least one unseen branching of~$x$.
The intervals appear on the stack~$U$ in increasing order (of the numbers in each interval) starting at the top of the stack and ending at the bottom of the stack, i.e., the interval with the smallest numbers is at the top, and the interval with the largest numbers is at the bottom.
For each interval~$I$ on the stack, the variable~$\beta_I$ stores the minimum unseen branching in~$I$, i.e., we have $\beta_I=\min \{\lambda_I(x,y)\mid y\in X-x\}$.
Note here Lemma~\ref{lem:branchings-properties}~(i), so no extra qualification `$y$ unvisited' is needed in this minimization.
There might be more than one unseen branching in~$I$, but only the minimum one is stored in~$\beta_I$.

\begin{algo}{Algorithm~G\sss{}}{History-free shortest prefix changes}
This algorithm attempts to greedily compute a Hamilton path in a binary graph~$G=(X,E)$, where $X\seq \{0,1\}^n$, starting from an initial vertex~$\tx$.
\begin{enumerate}[label={\bfseries G\arabic*.}, leftmargin=8mm, noitemsep, topsep=3pt plus 3pt]
\item{} [Initialize] Set $x \gets \tx$ and call $\branching([n])$.
\item{} [Visit] Visit~$x$.
\item{} [Min.\ unseen branching] Terminate if $U$ is empty.
Otherwise set $I\!\gets\! U.\pop()$ and $\beta\gets\beta_I$.
\item{} [Shortest prefix neighbors] Compute the set~$N$ of neighbors~$y$ of~$x$ in~$G$ with $\lambda(x,y)=\beta$, i.e., $N\gets \{y\in E(x)\mid \lambda(x,y)=\beta\}$.
\item{} [Tiebreaker+update~$x$] Pick an arbitrary vertex $y \in N$ and set $x\gets y$.
\item{} [Update~$U$] Call $\branching(I\setminus [\beta])$ and $\branching([\beta-1])$, and goto~G2.
\end{enumerate}
\end{algo}

Algorithm~G\sss{} calls the following auxiliary function to update the stack~$U$ for a given interval~$I$.
This function reads the current vertex~$x$ and if the given interval~$I$ contains an unseen branching it modifies the variables~$\beta_I$ and~$U$.

\vspace{1.5mm}
\hspace{-6.7mm}
\mybox{$\branching(I)$:
Compute $\beta\gets\min \{\lambda_I(x,y)\mid y\in X-x\}$.
If $\beta<\infty$ set $\beta_I\gets \beta$ and $U.\push(I)$.}
\vspace{-2mm}

\begin{figure}
\includegraphics[page=5]{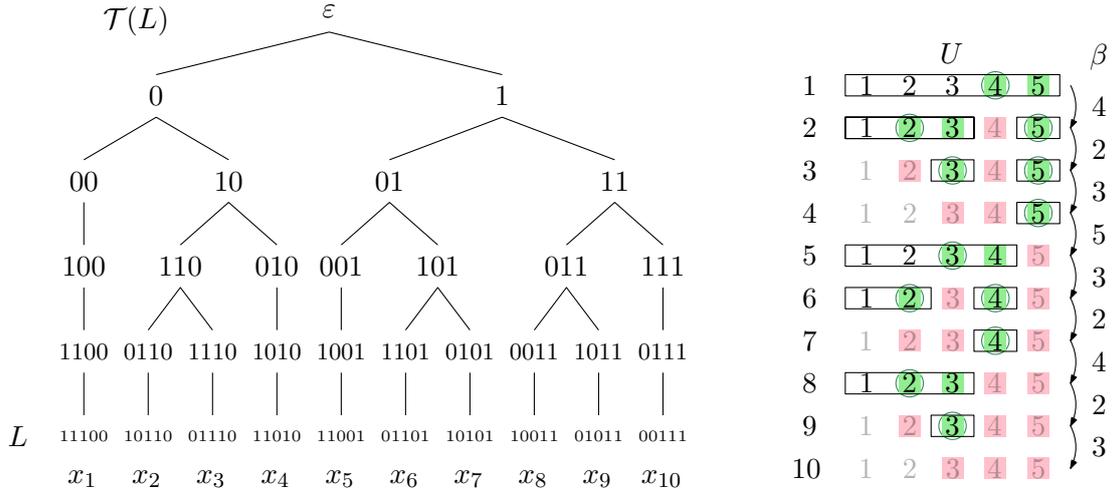}
\caption{The state of the variables~$U$ and~$\beta$ of Algorithm~G\sss{} while traversing the genlex tree from Figure~\ref{fig:suffix-tree}.
Each interval~$I$ on the stack~$U$ is indicated by a box.
Every branching is highlighted by a shaded square, with unseen branchings printed black and the others grayed out.
Furthermore, the minimum unseen branching in each interval is circled.}
\label{fig:algoGs}
\end{figure}

The stack~$U$ is initialized in step~G1, the interval~$I$ containing the minimum unseen branching~$\beta_I$ of the current vertex~$x$ is retrieved from the stack in step~G3, and the stack is updated in step~G6.
By Lemma~\ref{lem:branchings-properties}~(i)+(ii), these updates preserve the invariants about the intervals~$I$ on the stack~$U$ and the associated numbers~$\beta_I$ stated before the algorithm.

It remains to argue that the set~$N$ of neighbors computed in step~G3 of Algorithm~G is the same as the set~$N$ computed in step~G4 of Algorithm~G\sss{}.
Indeed, we have
\begin{equation}
\label{eq:N-comp}
\begin{split}
N&=\argmin[y \in E(x) \,\wedge\, y \text{ unvisited}\mid \lambda(x,y)]  \\
 &=\{y\in E(x) \,\wedge\, y \text{ unvisited} \mid \lambda(x,y)=\beta \} \\
 &=\{y\in E(x) \mid \lambda(x,y)=\beta \},
\end{split}
\end{equation}
where the quantity~$\beta$ is the minimum unseen branching of the current vertex~$x$ defined in line~G3 of Algorithm~G\sss{}, and in the last step we use Lemma~\ref{lem:branchings-properties}~(i).

Summarizing these observations we obtain the following result.

\begin{theorem}
\label{thm:algoGs-genlex}
Let $G=(X,E)$ be a binary graph that admits a genlex Hamilton path~$L=x_1,\ldots,x_\ell$.
Then Algorithm~\upright{G\sss{}} with tiebreaking rule~$\tau_L$ and initial vertex $\tx:=x_1$ produces the same output as Algorithm~\upright{G}, namely the path~$L$.
\end{theorem}

Combining this result with Theorem~\ref{thm:algoG-prefix}, we obtain the following.

\begin{theorem}
\label{thm:algoGs-prefix}
Let $G=(X,E)$ be a prefix graph.
For every tiebreaking rule and every initial vertex~$\tx$, Algorithm~\upright{G\sss{}} computes a genlex Hamilton path on~$G$ starting at~$\tx$.
\end{theorem}

\subsection{Two auxiliary problems}
\label{sec:probAB}

From the pseudocode of Algorithm~G\sss{} we can extract the following two computational problems, which, if solved efficiently, directly lead to an efficient algorithm for computing a Hamilton path in a prefix graph:
\begin{enumerate}[label=\mybox{\Alph*},leftmargin=8mm, noitemsep, topsep=1pt plus 1pt]
\item Given a set $X\seq\{0,1\}^n$, an element~$x \in X$ and an interval~$I\seq [n]$, compute $\min\{\lambda_I(x,y)\mid y \in X-x\}$.
\item Given a binary graph~$G=(X,E)$, where $X\seq\{0,1\}^n$, an element~$x \in X$ and an integer~$\beta \in [n]$ with $N:=\{y\in E(x)\mid \lambda(x,y)=\beta\}\neq \emptyset$, compute an element in~$N$.
\end{enumerate}

We note that in both problems, the set~$X$ of bitstrings may be given implicitly via some other parameter; recall Table~\ref{tab:01poly}.

\begin{theorem}
\label{thm:algoGs-time}
Let $G=(X,E)$ be a prefix graph, and suppose that problems~\upright{A} and~\upright{B} can be solved in time~$t_{\upright{A}}$ and~$t_{\upright{B}}$, respectively.
Then for every tiebreaking rule and every initial vertex~$\tx$, Algorithm~\upright{G\sss{}} computes a genlex Hamilton path on~$G$ starting at~$\tx$ with delay~$\cO(t_{\upright{A}}+t_{\upright{B}})$.
\end{theorem}

The initialization time of Algorithm~G\sss{} is $\cO(t_{\upright{A}})$, which is majorized by the delay~$\cO(t_{\upright{A}}+t_{\upright{B}})$, and the required space is the sum of the space needed to solve problems~A and~B.

\begin{proof}
This is immediate from Theorem~\ref{thm:algoGs-prefix} and the fact that each iteration of Algorithm~G\sss{} consists of solving constantly many instances of problems~A and~B.
\end{proof}

\section{A bridge to combinatorial optimization}
\label{sec:opt}

In this section, we consider 0/1-polytopes, i.e., polytopes that arise as the convex hull~$\conv(X)$ of a set of binary strings~$X\seq\{0,1\}^n$.
We first show that the skeleton of a 0/1-polytope is a prefix graph.
Furthermore, we show that problems~A and~B reduce to solving a particular linear optimization problem on the polytope (recall Section~\ref{sec:reduction}).
Consequently, if we can solve this optimization problem efficiently, then we obtain an efficient algorithm for computing a Hamilton path on the skeleton of the polytope.

\subsection{Skeleta of 0/1-polytopes are prefix graphs}

Recall the definition of Hamming distance~$d(x,y)$ from Section~\ref{sec:basic-algo}.

\begin{lemma}[{\cite[Proposition~2.3]{MR638286}}]
\label{lem:hamming-edge}
Let $P$ be a 0/1-polytope and let $G=(X,E)$ with $X\seq\{0,1\}^n$ be its skeleton.
Suppose that $X^0$ and $X^1$ are both nonempty, and let $x$ be a vertex of~$X^b$ for some $b\in\{0,1\}$.
If a vertex~$y \in X^{\ol{b}}$ minimizes $d(x,y)$, then we have $(x,y)\in E$.
\end{lemma}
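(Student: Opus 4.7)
The plan is to exhibit an explicit linear functional $c\in\mathbb{R}^n$ whose minimizers on $\conv(X)$ form a face containing both $x$ and $y$, and then argue geometrically that $x$ and $y$ are adjacent within that face.

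Assume without loss of generality that $b=0$, so $x_n=0$ and $y_n=1$. I would set
\[c_i := 1-2x_i \text{ for } i<n, \qquad c_n := 1-d(x,y).\]
For $i<n$ one has $c_i\in\{-1,+1\}$ and $c_i(z_i-x_i)$ equals $1$ if $z_i\neq x_i$ and $0$ otherwise on $\{0,1\}$, so a short computation yields
\[c\cdot z - c\cdot x = \begin{cases}d(x,z) & \text{if } z\in X^0,\\ d(x,z)-d(x,y) & \text{if } z\in X^1.\end{cases}\]
Both expressions are nonnegative by the minimality of $d(x,y)$ on $X^1$, and they vanish exactly on $\{x\}$ and on $M:=\{z\in X^1\mid d(x,z)=d(x,y)\}\ni y$, respectively. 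Hence $c$ attains its minimum on $\conv(X)$ precisely on the face $F:=\conv(\{x\}\cup M)$, which has both $x$ and $y$ as vertices.

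The key geometric observation is that $F$ is a \emph{pyramid} with apex $x$ over the base $\conv(M)$: the base lies entirely in the hyperplane $\{z_n=1\}$ while $x_n=0$, so $x$ is not in the affine hull of $M$; moreover $\{x\}$ and $\conv(M)$ are themselves faces of $F$, cut out as the minimizer and maximizer of the linear form $z_n$ on $F$. In such a pyramid, the apex is adjacent to every vertex of the base. Thus $(x,y)$ is an edge of $F$, and therefore an edge of $\conv(X)$, since edges of a face are edges of the ambient polytope.

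The only step beyond routine bookkeeping is the pyramid identification; I expect this to be the main (but still light) obstacle. As a backup, I would fall back on the classical characterization of edges of $0/1$-polytopes as unordered pairs $\{x,y\}$ admitting no decomposition $x+y=u+v$ with $u,v\in X$ and $\{u,v\}\neq\{x,y\}$: matching last coordinates forces $u_n=0$ and $v_n=1$, and then a short symmetric-difference count using $x+y=u+v$ yields $d(x,v)=d(u,y)$, after which the minimality of $d(x,y)$ on $X^1$ forces $u=x$ and $v=y$, a contradiction.
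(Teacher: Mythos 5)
Your main argument is correct and complete, and it takes a genuinely different route from the paper's. The paper argues by contradiction via the characterization that $\conv(\{x,y\})$ fails to be an edge if and only if some interior point of the segment lies in $\conv(X\setminus\{x,y\})$: from such a convex representation it extracts a vertex $v\in X^{\ol{b}}$ that agrees with $x$ at every coordinate where $x$ and $y$ agree, so that minimality of $d(x,y)$ forces $v=y$, a contradiction. You instead exhibit the supporting functional explicitly. Your computation is right: $c\cdot z-c\cdot x=d(x,z)-d(x,y)\,z_n$, so the minimal face is $F=\conv(\{x\}\cup M)$ with $M=\{z\in X^1\mid d(x,z)=d(x,y)\}$, and since $M$ lies in the hyperplane $z_n=1$ while $x_n=0$, the point $x$ is not in the affine hull of $M$ and $F$ is a pyramid with apex $x$. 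The pyramid step you worry about is standard and light: for a vertex $v$ of the base, take a functional minimized on the base exactly at $v$ and extend it to take the same value at the apex; its minimal face on $F$ is the segment $\conv(\{x,v\})$. Since edges of a face are edges of the ambient polytope, the lemma follows. Your approach buys slightly more than the paper's: it identifies the whole face spanned by $x$ and all of its nearest neighbours in $X^{\ol{b}}$, showing in one stroke that every minimizer of $d(x,\cdot)$ on $X^{\ol{b}}$ is adjacent to $x$.

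One caution about your fallback: the ``classical characterization'' you invoke is only an implication, not an equivalence. If $x+y=u+v$ with $u,v\in X$ and $\{u,v\}\neq\{x,y\}$, then $x$ and $y$ are indeed non-adjacent; but non-adjacency does not force such a pair decomposition. For $X=\{000,111,110,011,101\}$ the vertices $000$ and $111$ are non-adjacent, because $(2/3,2/3,2/3)$ lies on their segment and is the barycentre of the other three vertices, yet no pair of elements of $X$ other than $\{000,111\}$ sums to $(1,1,1)$. So proving that no pair decomposition exists does not establish adjacency, and the backup argument does not go through as stated (the paper's proof is, in effect, the corrected version of this idea, working with a general convex combination rather than a pair). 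Fortunately the backup is not needed.
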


We apply Lemma~\ref{lem:hamming-edge} to prove that the skeleton of a 0/1-polytope is a prefix graph.

\begin{lemma}
\label{lem:01prefix}
Let $P$ be a 0/1-polytope and let $G$ be its skeleton.
Then $G$ is a prefix graph.
\end{lemma}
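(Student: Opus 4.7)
The plan is to proceed by induction on $n$, directly verifying properties (p1) and (p2) from the definition of a prefix graph. The base cases ($X=\emptyset$, or $n=0$ and $X=\{\varepsilon\}$) are immediate from the definition, so everything reduces to the inductive step with $n>0$.

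For property (p2), I would simply invoke Lemma~\ref{lem:hamming-edge} as a black box: assuming $X^0$ and $X^1$ are both nonempty, given any $b\in\{0,1\}$ and any $x\in X^b$, pick $y\in X^{\overline{b}}$ minimizing $d(x,y)$ (such $y$ exists because $X^{\overline{b}}\neq\emptyset$ and is finite); the lemma then yields $(x,y)\in E$. So this part is essentially free.

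Property (p1) is the main point. I want to show that $G^{b-}$ is the skeleton of the 0/1-polytope $\conv(X^{b-})$ for each $b\in\{0,1\}$, so that the induction hypothesis applies. Two observations are needed. First, $\conv(X^b)$ is a face of $\conv(X)=:P$: the linear inequality $x_n\leq b$ (if $b=0$) or $x_n\geq b$ (if $b=1$) is valid on $P$ because all vertices have $x_n\in\{0,1\}$, and the vertices attaining equality are exactly those in $X^b$. Second, since faces of polytopes inherit their face structure from the ambient polytope, the edges of $\conv(X^b)$ are precisely the edges of $P$ both of whose endpoints lie in $X^b$. Dropping the last coordinate is an affine bijection on the affine hull of $\conv(X^b)$, so $\conv(X^{b-})$ is combinatorially the same 0/1-polytope and its skeleton is exactly $G^{b-}$ by the definition of the latter. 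By induction, $G^{b-}$ is a prefix graph, establishing (p1).

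The only delicate part is the claim that edges of the face $\conv(X^b)$ coincide with those edges of $P$ whose endpoints lie in $X^b$; this is standard polytope theory (edges of a face are edges of the polytope that happen to lie in the face), so I would state it briefly and cite or justify it in one line. With (p1) and (p2) both verified, the induction goes through and the lemma follows.
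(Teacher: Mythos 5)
Your proposal is correct and follows essentially the same route as the paper: induction on $n$, with (p2) obtained directly from Lemma~\ref{lem:hamming-edge} by picking a Hamming-closest $y\in X^{\ol{b}}$, and (p1) obtained by identifying $G^{b-}$ with the skeleton of $\conv(X^{b-})$. The paper states the latter identification as a one-line observation, whereas you justify it via the face $\{x\in P\mid x_n=b\}$ and the affine bijection dropping the last coordinate; this is just a more explicit rendering of the same step.
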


\begin{proof}
Let $G=:(X,E)$ with $X\seq\{0,1\}^n$, i.e., $P=\conv(X)$.
We argue by induction on~$n$.
The induction basis~$n=0$ is trivial.
For the induction step we assume that~$n>0$.
For $b\in\{0,1\}$ we define $P^{b-}:=\conv(X^{b-})$, and we observe that $G^{b-}$ is the skeleton of~$P^{b-}$.
By induction, we obtain that $G^{0-}$ and~$G^{1-}$ are prefix graphs, so condition~(p1) in the definition given in Section~\ref{sec:prefix-graphs} is satisfied.
It remains to prove property~(p2), under the assumption that~$X^0$ and~$X^1$ are both nonempty.
Let $b\in\{0,1\}$, consider an arbitrary vertex~$x\in X^b$, and let $y \in X^{\ol{b}}$ be such that~$d(x,y)$ is minimized.
Then Lemma~\ref{lem:hamming-edge} shows that $(x,y) \in E$, and the lemma follows.
\end{proof}

\subsection{Proof of Theorem~\ref{thm:algoP-poly} and a history-free version of Algorithm~P}
\label{sec:algoP}

We obtain that Algorithm~G and Algorithm~G\sss{} compute a Hamilton path on the skeleton of \emph{every} 0/1-polytope, \emph{regardless} of the choice of tiebreaking rule, and \emph{regardless} of the choice of initial vertex.

\begin{theorem}
\label{thm:algoG-poly}
Let $P$ be a 0/1-polytope and let $G$ be its skeleton.
For every tiebreaking rule and every initial vertex~$\tx$, Algorithm~\upright{G} computes a genlex Hamilton path on~$G$ starting at~$\tx$.
\end{theorem}

\begin{proof}
Combine Lemma~\ref{lem:01prefix} with Theorem~\ref{thm:algoG-prefix}.
\end{proof}

\begin{theorem}
\label{thm:algoGs-poly}
Let $P$ be a 0/1-polytope and let $G$ be its skeleton.
For every tiebreaking rule and every initial vertex~$\tx$, Algorithm~\upright{G\sss{}} computes a genlex Hamilton path on~$G$ starting at~$\tx$.
\end{theorem}

\begin{proof}
Combine Lemma~\ref{lem:01prefix} with Theorem~\ref{thm:algoGs-prefix}.
\end{proof}

We now specialize Algorithms~G and~G\sss{} further, and remove any references to the skeleton of the polytope with the help of Lemma~\ref{lem:hamming-edge}.
Applying this specialization to Theorem~\ref{thm:algoG-poly} yields Theorem~\ref{thm:algoP-poly} stated in Section~\ref{sec:basic-algo}.

\begin{proof}[Proof of Theorem~\ref{thm:algoP-poly}]
By slightly modifying line~G3 in Algorithm~G, we obtain Algorithm~P stated in Section~\ref{sec:basic-algo}.
Specifically, we may split the computation of the set~$N$ in line~G3 into two steps, namely $\beta\leftarrow\min \{ \lambda(x,y) \mid y\in E(x)\,\wedge\,y \text{ unvisited} \}$ and $N\leftarrow \{y\in E(x)\mid \lambda(x,y)=\beta\}$ (recall~\eqref{eq:N-comp}).
We then have
\[ \beta = \min \{ \lambda(x,y) \mid y\in E(x)\,\wedge\,y \text{ unvisited} \}
= \min \{\lambda(x,y) \mid y\in X-x \,\wedge\, y \text{ unvisited} \} \]
by Lemma~\ref{lem:genlex-lambda}.
Furthermore, by Lemma~\ref{lem:hamming-edge}, every vertex with minimum Hamming distance from the current vertex~$x$ is a neighbor on the skeleton, i.e., we have
\begin{equation}
\label{eq:E-X-nonempty}
\{y\in E(x)\mid \lambda(x,y)=\beta\}\;\supseteq\; \argmin[y\in X-x \,\wedge\, \lambda(x,y)=\beta \mid d(x,y)]\neq \emptyset.
\end{equation}
These two modifications yield the steps~P3 and~P4 in Algorithm~P.
The theorem now follows from~Theorem~\ref{thm:algoG-poly}.
\end{proof}

The observation~\eqref{eq:E-X-nonempty} also allows us to specialize the history-free Algorithm~G\sss{} to a history-free algorithm on 0/1-polytopes, by slightly modifying line~G4, yielding Algorithm~P\sss{} stated below.

\begin{algo}{Algorithm~P\sss{}}{History-free traversal of 0/1-polytope by shortest prefix changes}
For a set $X\seq \{0,1\}^n$, this algorithm greedily computes a Hamilton path on the skeleton of the 0/1-polytope $\conv(X)$, starting from an initial vertex~$\tx$.
\begin{enumerate}[label={\bfseries P\arabic*.}, leftmargin=8mm, noitemsep, topsep=3pt plus 3pt]
\item{} [Initialize] Set $x \gets \tx$ and call $\branching([n])$.
\item{} [Visit] Visit~$x$.
\item{} [Min.\ unseen branching] Terminate if $U$ is empty.
Otherwise set $I\!\gets\! U.\pop()$ and $\beta\gets\beta_I$.
\item{} [Closest vertices] Compute the set~$N$ of vertices~$y$ with $\lambda(x,y)=\beta$ of minimum Hamming distance from~$x$, i.e., $N\gets \argmin[y\in X-x \,\wedge\, \lambda(x,y)=\beta \mid d(x,y)]$.
\item{} [Tiebreaker+update~$x$] Pick an arbitrary vertex $y \in N$ and set $x\gets y$.
\item{} [Update~$U$] Call $\branching(I\setminus [\beta])$ and $\branching([\beta-1])$, and goto~P2.
\end{enumerate}
\end{algo}

The corresponding specialization of Theorem~\ref{thm:algoGs-poly} reads as follows.

\begin{theorem}
\label{thm:algoPs-poly}
Let $X\seq\{0,1\}^n$.
For every tiebreaking rule and every initial vertex~$\tx$, Algorithm~\upright{P\sss{}} computes a genlex Hamilton path on the skeleton of~$\conv(X)$ starting at~$\tx$.
\end{theorem}

Furthermore, the auxiliary problem~B introduced in Section~\ref{sec:probAB} can be specialized for Algorithm~P\sss{} as follows:
\begin{enumerate}[label=\mybox{C},leftmargin=8mm, noitemsep, topsep=1pt plus 1pt]
\item Given a set $X\seq\{0,1\}^n$, an element~$x \in X$ and an integer~$\beta \in [n]$ with
$N:=\argmin[y\in X-x \,\wedge\, \lambda(x,y)=\beta \mid d(x,y)]\neq\emptyset,$
compute an element in~$N$.
\end{enumerate}

We thus obtain the following specialization of Theorem~\ref{thm:algoGs-time}.

\begin{theorem}
\label{thm:algoP-time}
Let $X\seq\{0,1\}^n$, and suppose that problems~\upright{A} and~\upright{C} can be solved in time~$t_{\upright{A}}$ and~$t_{\upright{C}}$, respectively.
Then for every tiebreaking rule and every initial vertex~$\tx$, Algorithm~\upright{P\sss{}} computes a genlex Hamilton path on the skeleton of~$\conv(X)$ starting at~$\tx$ with delay~$\cO(t_{\upright{A}}+t_{\upright{C}})$.
\end{theorem}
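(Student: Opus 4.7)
The plan is to deduce this theorem as a straightforward specialization of Theorem~\ref{thm:algoH-time}, with the observation that problem~C is exactly the specialization of problem~B needed by Algorithm~P\sss{}. The correctness part (that Algorithm~P\sss{} produces a genlex Hamilton path on the skeleton of~$\conv(X)$) is already contained in Theorem~\ref{thm:algoP-poly}, so nothing new has to be argued there; the only real work is to bound the delay.

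For the delay analysis, I would walk through one iteration of the main loop P2--P6 and account for each step. Step~P2 (visiting~$x$) and step~P3 (popping an interval from the stack~$U$ and reading the stored value~$\beta_I$) take $\cO(1)$ time. Step~P4 is, by the very definition of problem~C, a single instance of problem~C with inputs $X$, $x$ and $\beta$, whose feasibility (i.e., $N\neq\emptyset$) is guaranteed by Lemma~\ref{lem:hamming-edge} together with the fact that $\beta$ is an unseen branching of~$x$; hence this step costs $t_{\upright{C}}$. Step~P5 selects $y\in N$ and updates $x$, which takes $\cO(1)$ time (or at most $\cO(n)$ to copy, which is absorbed in $t_{\upright{C}}\geq n$ in all interesting cases). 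Finally, step~P6 performs two calls to $\branching$, each of which consists of one instance of problem~A with inputs $X$, $x$ and the appropriate subinterval of~$[n]$, plus a constant-time push onto~$U$; these two calls together cost $\cO(t_{\upright{A}})$. Summing up gives delay $\cO(t_{\upright{A}}+t_{\upright{C}})$ per iteration.

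It remains to handle the initialization. Step~P1 invokes $\branching([n])$ exactly once, which is a single call to problem~A and therefore runs in time $\cO(t_{\upright{A}})$; this is dominated by the per-iteration delay bound and can be charged to the first visited vertex. The stack~$U$ never holds more than $\cO(n)$ intervals at any point, since the intervals on~$U$ are pairwise disjoint subintervals of~$[n]$, so stack operations remain $\cO(1)$ and storage overhead is $\cO(n)$ on top of what is needed to solve problems~A and~C.

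I expect no real obstacle here: the argument is bookkeeping, and the key conceptual content, namely that problem~C suffices in place of problem~B for 0/1-polytopes, has already been established via Lemma~\ref{lem:hamming-edge} in the derivation of Algorithm~P\sss{} from Algorithm~G\sss{}. The only minor point to double-check is that feasibility of the problem~C instance in step~P4 is guaranteed whenever the algorithm reaches that step, which follows because $\beta\in B_L(x)$ for the implicit genlex ordering~$L$ being produced (by the invariant maintained on~$U$ via Lemma~\ref{lem:branchings-properties}), so there is at least one candidate~$y$ with $\lambda(x,y)=\beta$, and among those Lemma~\ref{lem:hamming-edge} supplies a skeleton neighbor of~$x$.
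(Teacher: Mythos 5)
Your proposal is correct and follows essentially the same route as the paper: the theorem is obtained as the specialization of Theorem~\ref{thm:algoH-time} (correctness via Theorem~\ref{thm:algoP-poly}, delay by noting that each iteration solves constantly many instances of problems~A and~C, with problem~C playing the role of problem~B thanks to Lemma~\ref{lem:hamming-edge}). Your additional bookkeeping about feasibility of the problem~C instance and the size of the stack~$U$ is consistent with the remarks the paper makes around Algorithm~P\sss{} and Theorem~\ref{thm:algoH-time}.
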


The initialization time of Algorithm~P\sss{} is $\cO(t_{\upright{A}})$, which is majorized by the delay~$\cO(t_{\upright{A}}+t_{\upright{C}})$, and the required space is the sum of the space needed to solve problems~A and~C.

\subsection{Reducing problems~A and~C to a single linear optimization problem}
\label{sec:PO}

It turns out that both problems~A and~C can be reduced to one or more instances of the following optimization problem, referred to as \defi{linear optimization with prescription}.
\begin{enumerate}[label=\mybox{LOP},leftmargin=12mm, noitemsep, topsep=1pt plus 1pt]
\item Given a set~$X\seq\{0,1\}^n$, a weight vector~$w\in W^n$ with $W\seq\mathbb{R}$, and disjoint sets $P_0,P_1\seq [n]$, compute an element in
$N:=\argmin[y\in X\,\wedge\, y_{P_0}=0 \,\wedge\, y_{P_1}=1 \mid w\cdot y]$,
or decide that this problem is infeasible, i.e., $N=\emptyset$.
\end{enumerate}
Recall that $y_{P_b}=b$ for $b\in\{0,1\}$ is a shorthand notation for $y_i=b$ for all $i\in P_b$.
We refer to~$W$ as the \defi{weight set}.

We first show that problem~A can be solved by combining an algorithm for problem~LOP with binary search.

\begin{lemma}
\label{lem:algoA-LOP}
Suppose that problem~\upright{LOP} with weight set~$W=\{-1,0,+1\}$ can be solved in time~$t_{\upright{LOP}}=\Omega(n)$.
Then problem~\upright{A} can be solved in time~$\cO(t_{\upright{LOP}}\log n)$.
\end{lemma}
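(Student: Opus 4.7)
The plan is to reduce problem~A to $\cO(\log n)$ instances of problem~LOP with $\{-1,0,+1\}$ weights via binary search on the answer. Write $I=[a,b]$ and let $\beta^\star:=\min\{\lambda_I(x,y)\mid y\in X-x\}$. I would first observe that $\beta^\star$ is characterized as the smallest $\beta\in[a,b]$ for which there exists a $y\in X$ that agrees with~$x$ on every position $>\beta$ and disagrees with~$x$ at some position in $[a,\beta]$; if no such $\beta$ exists in $[a,b]$, then $\beta^\star=\infty$. Letting $F(\beta)$ denote the indicator of this existence condition, the key monotonicity is that $F$ is nondecreasing in~$\beta$: any witness for $F(\beta)=1$ remains a witness for every $\beta'\geq\beta$, since weakening the prescription past~$\beta$ only enlarges the feasible set and the required disagreement already lies in $[a,\beta]\seq[a,\beta']$.

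To evaluate $F(\beta)$ with a single LOP call, I would take weights $w_i:=-1$ for $i\in[a,\beta]$ with $x_i=0$, $w_i:=+1$ for $i\in[a,\beta]$ with $x_i=1$, and $w_i:=0$ otherwise, together with prescriptions $P_b:=\{i>\beta\mid x_i=b\}$ for $b\in\{0,1\}$. A short calculation using $w_i(y_i-x_i)=-[y_i\neq x_i]$ on $[a,\beta]$ and $w_i(y_i-x_i)=0$ elsewhere (the latter forced by the prescription on $i>\beta$) yields $w\cdot y=w\cdot x-|\{i\in[a,\beta]\mid y_i\neq x_i\}|$ for every feasible~$y$, so minimizing~$w\cdot y$ is the same as maximizing the Hamming distance between~$x$ and~$y$ restricted to~$[a,\beta]$. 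Since $y=x$ is always feasible, the LOP never reports infeasibility, and $F(\beta)=1$ holds precisely when the returned optimum is strictly less than~$w\cdot x$ (the quantity $w\cdot x$ being trivially computable in $\cO(n)$). All weights lie in $\{-1,0,+1\}$, matching the hypothesis of the lemma.

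Given this oracle for~$F$, the remainder is standard: first evaluate $F(b)$ and return~$\infty$ if it is~$0$, otherwise binary-search $[a,b]$ for the smallest $\beta$ with $F(\beta)=1$, using $\cO(\log|I|)=\cO(\log n)$ further LOP queries, and return that~$\beta$ as~$\beta^\star$. Each LOP call runs in time~$t_{\upright{LOP}}$, and the $\cO(n)$ cost of assembling~$w$, $P_0$, and~$P_1$ is absorbed by the assumption $t_{\upright{LOP}}=\Omega(n)$, yielding the claimed delay $\cO(t_{\upright{LOP}}\log n)$. The step most prone to error is encoding the lower bound~$a$: one must certify that the witness~$y$ differs from~$x$ somewhere in~$[a,\beta]$ specifically, not merely somewhere in~$[1,\beta]$, and confining the support of~$w$ to~$[a,\beta]$ so that differences in~$[1,a-1]$ carry zero weight (and thus cannot drive the LOP optimum below~$w\cdot x$ spuriously) is exactly what makes the single-LOP feasibility test correct.
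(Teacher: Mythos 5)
Your proposal is correct and follows essentially the same route as the paper: binary search on the threshold $\beta$, with each feasibility test encoded as a single LOP instance whose $\pm1$ weights reward disagreement with $x$ on the relevant window and whose prescription forces agreement on the suffix, so that the test reduces to comparing the LOP optimum with $w\cdot x$. The only cosmetic difference is that you zero out the weights above $\beta$ while the paper keeps them nonzero on all of $[\min I,n]$; the prescription makes the two choices equivalent.
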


\begin{proof}
Consider a set~$X\in\{0,1\}^n$, an element~$x\in X$ and an interval~$I\in[n]$ as input for problem~A.
First note that $\min \{\lambda_I(x,y)\mid y\in X-x\} \seq [n]\cup \infty$ and that
\begin{equation}
\label{eq:mono}
\min \{\lambda_I(x,y)\mid y\in X-x\} \leq \alpha
\end{equation}
is a monotone property in $\alpha\in I$.
Therefore, it is enough to show that~\eqref{eq:mono} can be decided in time~$\cO(t_{\upright{LOP}})$, as then we can compute $\min \{\lambda_I(x,y)\mid y\in X-x\}$ in time $\cO(t_{\upright{LOP}}\log n)$ by doing binary search.
For the given integer~$\alpha\in I$ we define
\begin{subequations}
\label{eq:PwA}
\begin{equation}
\label{eq:wiA}
w_i:=\begin{cases}
-1 & \text{if } i\geq \min I\text{ and } x_i=0, \\
+1 & \text{if } i\geq \min I\text{ and } x_i=1, \\
0  & \text{if } i<\min I,
\end{cases}
\end{equation}
and
\begin{equation}
P_b:=\{i>\alpha\mid x_i=b\} \text{ for } b\in\{0,1\}.
\end{equation}
\end{subequations}
We claim that
\[\mu:=\min\limits_{y\in X\,\wedge\,y_{P_0}=0\,\wedge\,y_{P_1}=1} w\cdot y<w\cdot x=:a\]
if and only if~\eqref{eq:mono} holds.
Indeed, if $\mu<a$, then there is a~$y^*\in X$ with $y^*_{P_0}=0$, $y^*_{P_1}=1$, and~$y^*_i\neq x_i$ for some $i\in I$ with $i\leq \alpha$.
It follows that $\lambda_I(x,y^*)\leq \alpha$, which implies~\eqref{eq:mono}.
Conversely, if~\eqref{eq:mono} holds, then there is $y^*\in X-x$ with $\lambda_I(x,y^*)\leq \alpha$, i.e., there is a position $i\in I$ with $i\leq\alpha$ such that $y^*_i\neq x_i$ and $y^*_j=x_j$ for all $j\geq i+1$, in particular $y^*_{P_0}=0$ and $y^*_{P_1}=1$.
As $y^*_i\neq x_i$ we have $w\cdot y^*<a$ and therefore~$\mu<a$.
This completes the proof of the lemma.
\end{proof}

We now show that problem~C can also be reduced to problem~LOP.

\begin{lemma}
\label{lem:algoC-LOP}
Suppose that problem~\upright{LOP} with weight set $W=\{-1,+1\}$ can be solved in time~$t_{\upright{LOP}}=\Omega(n)$.
Then problem~\upright{C} can be solved in time~$\cO(t_{\upright{LOP}})$.
\end{lemma}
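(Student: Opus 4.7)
The plan is to reduce problem~C to a single instance of problem~LOP by mimicking the reduction that was already sketched informally in Section~\ref{sec:reduction} for step~P4 of Algorithm~P.

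First I would define, from the input $(X,x,\beta)$ of problem~C, the following weight vector and prescription sets (the same as in~\eqref{eq:wi} and the surrounding text):
\[
w_i:=\begin{cases} +1 & \text{if } x_i=0,\\ -1 & \text{if } x_i=1,\end{cases}
\qquad
P_b:=\{\beta\mid x_\beta=1-b\}\cup\{i>\beta\mid x_i=b\}\ \text{ for }b\in\{0,1\}.
\]
Note that $w\in\{-1,+1\}^n$, so this is an instance of problem~LOP with the required weight set $W=\{-1,+1\}$, and $P_0,P_1$ are disjoint by construction.

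Next I would verify that the feasible set of this LOP instance equals $\{y\in X-x\mid \lambda(x,y)=\beta\}$. The prescription $y_{P_1}=1$ and $y_{P_0}=0$ pins $y_\beta=1-x_\beta$ and $y_i=x_i$ for all $i>\beta$, which is exactly the requirement $\lambda(x,y)=\beta$ (and automatically guarantees $y\neq x$ since $y_\beta\neq x_\beta$). Then I would check that minimizing $w\cdot y$ is the same as minimizing $d(x,y)$: a direct bit-by-bit computation gives $d(x,y)=w\cdot(y-x)=w\cdot y + |\{i\mid x_i=1\}|$, and the additive constant $|\{i\mid x_i=1\}|$ is independent of~$y$. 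Consequently,
\[
\argmin\bigl[y\in X\wedge y_{P_0}=0\wedge y_{P_1}=1\bigm| w\cdot y\bigr]\;=\;\argmin\bigl[y\in X-x\wedge \lambda(x,y)=\beta\bigm| d(x,y)\bigr]\;=\;N,
\]
so any solution returned by the LOP oracle is a valid answer to problem~C. The assumption $N\neq\emptyset$ in problem~C ensures the LOP instance is feasible.

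Finally I would tally the running time: constructing $w$, $P_0$, and $P_1$ takes $\cO(n)$ time, and one call to the LOP oracle costs $t_{\upright{LOP}}$. Since $t_{\upright{LOP}}=\Omega(n)$, the total time is $\cO(t_{\upright{LOP}})$, as claimed. There is no real obstacle here; the only thing to be careful about is that the weight set is genuinely $\{-1,+1\}$ rather than $\{-1,0,+1\}$ (in contrast to Lemma~\ref{lem:algoA-LOP}), which holds because no zero weights are ever used in this reduction.
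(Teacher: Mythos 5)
Your reduction is exactly the paper's proof of Lemma~\ref{lem:algoC-LOP}: the same weight vector and prescription sets as in~\eqref{eq:PwC}, the same observation that $d(x,y)=w\cdot(y-x)$ differs from $w\cdot y$ by a constant, and the same identification of the feasible set with $\{y\in X-x\mid\lambda(x,y)=\beta\}$. The argument is correct as written.
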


\begin{proof}
Consider a set $X\seq\{0,1\}^n$, an element $x\in X$ and an integer $\beta\in[n]$ with
$N:=\argmin[y\in X-x \,\wedge\, \lambda(x,y)=\beta \mid d(x,y)]\neq\emptyset$
as input for problem~C.
We define
\begin{subequations}
\label{eq:PwC}
\begin{equation}
\label{eq:wiC}
w_i:=\begin{cases}
+1 & \text{if } x_i=0, \\
-1 & \text{if } x_i=1,
\end{cases}
\end{equation}
and
\begin{equation}
P_b:=\{\beta\mid x_\beta=\ol{b}\}\cup \{i>\beta\mid x_i=b\} \text{ for } b\in\{0,1\}.
\end{equation}
\end{subequations}
By this definition we have $d(x,y)=w\cdot (y-x)$, and as $x$ is fixed, minimization of $d(x,y)$ is the same as minimization of~$w\cdot y$.
Consequently, we have
\[
N=\argmin[y\in X-x\,\wedge\,\lambda(x,y)=\beta \mid d(x,y)]=
\argmin[y\in X\,\wedge\,y_{P_0}=0\,\wedge\,y_{P_1}=1 \mid w\cdot y].
\]
In this calculation we used that from our definition of~$P_0$ and~$P_1$, the conditions $y_{P_0}=0$ and $y_{P_1}=1$ are equivalent to $y_\beta=\ol{x_\beta}$ and $y_i=x_i$ for $i>\beta$, which are equivalent to $y\neq x$ and $\lambda(x,y)=\beta$.
This completes the proof of the lemma.
\end{proof}

Note the opposite signs of the weights in~\eqref{eq:wiA} and~\eqref{eq:wiC} w.r.t.~$x$.
This is because the first minimization problem rewards~$y$ to differ from~$x$ as much as possible, whereas the second problem rewards~$y$ to agree with~$x$ as much as possible.

Combining these reductions yields the following fundamental result, which says that efficiently solving prescription optimization on~$X$ yields an efficient algorithm for computing a Hamilton path on the skeleton of~$\conv(X)$.

\begin{theorem}
\label{thm:algoPLOP-time}
Let $X\seq\{0,1\}^n$ and suppose that problem~\upright{LOP} with weight set $W=\{-1,0,+1\}$ can be solved in time~$t_{\upright{LOP}}=\Omega(n)$.
Then for every tiebreaking rule and every initial vertex~$\tx$, Algorithm~\upright{P\sss{}} computes a genlex Hamilton path on the skeleton of~$\conv(X)$ starting at~$\tx$ with delay~$\cO(t_{\upright{LOP}}\log n)$.
\end{theorem}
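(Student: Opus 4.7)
The plan is to assemble the theorem directly from three results already established in the paper: Theorem~\ref{thm:algoP-time}, together with Lemmas~\ref{lem:algoA-LOP} and~\ref{lem:algoC-LOP}. No new combinatorial or polyhedral ingredient is needed; the content of the theorem is the packaging of the reductions in Section~\ref{sec:PO} with the generic delay guarantee of Section~\ref{sec:algoP}.

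First I would observe the trivial but essential containment $\{-1,+1\}\seq\{-1,0,+1\}$. Any algorithm that solves problem~LOP for the weight set~$\{-1,0,+1\}$ in time~$t_{\upright{LOP}}$ can be invoked on instances whose weight vector only uses values in~$\{-1,+1\}$ at no extra cost. Consequently, the hypothesis $t_{\upright{LOP}}=\Omega(n)$ in the statement supplies both the oracle required by Lemma~\ref{lem:algoA-LOP} (which needs weight set $\{-1,0,+1\}$) and the one required by Lemma~\ref{lem:algoC-LOP} (which needs weight set $\{-1,+1\}$).

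Next I would apply the two lemmas to bound the two auxiliary running times appearing in Theorem~\ref{thm:algoP-time}. Lemma~\ref{lem:algoA-LOP} yields an algorithm for problem~A with running time $t_{\upright{A}}=\cO(t_{\upright{LOP}}\log n)$, where the extra $\log n$ factor comes from the binary search over the interval~$I\seq[n]$. Lemma~\ref{lem:algoC-LOP} yields an algorithm for problem~C with running time $t_{\upright{C}}=\cO(t_{\upright{LOP}})$, since problem~C reduces to a single call of~LOP with weights as in~\eqref{eq:wiC} and prescription sets encoding $\lambda(x,y)=\beta$. Feeding these bounds into Theorem~\ref{thm:algoP-time}, Algorithm~P\sss{} computes a genlex Hamilton path on the skeleton of~$\conv(X)$ starting at~$\tx$ with delay
\[
\cO(t_{\upright{A}}+t_{\upright{C}})=\cO(t_{\upright{LOP}}\log n+t_{\upright{LOP}})=\cO(t_{\upright{LOP}}\log n),
\]
for any tiebreaking rule and any initial vertex, which is the claimed bound.

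There is essentially no obstacle here: all the real work has already been done in Lemma~\ref{lem:01prefix} (the skeleton of a 0/1-polytope is a prefix graph, ensuring Algorithm~P\sss{} terminates with a Hamilton path), in Theorem~\ref{thm:algoP-time} (the per-iteration cost is $\cO(t_{\upright{A}}+t_{\upright{C}})$), and in the two reductions of problems~A and~C to~LOP. The only minor point worth noting is the weight-set containment, which is what justifies using a single oracle for both reductions and therefore produces the unified running-time statement.
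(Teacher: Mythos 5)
Your proposal is correct and matches the paper's own proof, which is exactly the one-line combination of Lemmas~\ref{lem:algoA-LOP} and~\ref{lem:algoC-LOP} with Theorem~\ref{thm:algoP-time}; your additional remark that $\{-1,+1\}\seq\{-1,0,+1\}$ lets a single LOP oracle serve both reductions is a correct and worthwhile clarification. Nothing is missing.
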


The initialization time of Algorithm~P\sss{} is the same as the delay, and the required space is the same as the space needed to solve problem~LOP.

\begin{proof}
Combine Lemmas~\ref{lem:algoA-LOP} and~\ref{lem:algoC-LOP} with Theorem~\ref{thm:algoP-time}.
\end{proof}

\subsection{Cost-optimal solutions}
\label{sec:cost-optimal}

In this section, we provide a variant of Theorem~\ref{thm:algoPLOP-time} for listing only the cost-optimal elements of~$X$ with respect to some linear objective function, for example, minimum weight spanning trees or maximum weight matchings in a graph.
In particular, this includes minimum or maximum cardinality solutions, for example maximum matchings or minimum vertex covers in a graph.

Let $C\seq \mathbb{Z}$, referred as the \defi{cost set}.
Furthermore, let $c \in C^n$ be a cost vector and let~$X_c$ be the elements in~$X$ with minimum cost according to~$c$, i.e., $X_c := \argmin[x \in X \mid c \cdot x]$.
The elements of~$X_c$ lie on a hyperplane in $n$-dimensional space, and so $\conv(X_c)$ is a face of~$\conv(X)$.
In particular, $\conv(X_c)$ is a 0/1-polytope whose edges are edges of~$\conv(X)$.
The problem~LOP on~$X_c$ thus becomes a bi-criteria linear optimization problem with prescription.
In the following we show that by appropriate amplification of the cost vector by a factor of~$n$, we can eliminate the bi-criteria optimization and reduce to a standard~LOP.
Specifically, given $c \in \mathbb{Z}^n$, we define
\begin{equation}
\label{eq:WC}
W(C) := \{-1,0,+1\}+nC=\big\{w+n c\mid w\in\{-1,0,+1\}\text{ and } c\in C\big\},
\end{equation}
and reduce to solving LOPs with weight set~$W(C)$.

The following auxiliary lemma allows us to translate minimization on~$X_c$ to minimization on~$X$ via weight amplification.

\begin{lemma}
\label{lem:weight-ampl}
Let~$X\seq\{0,1\}^n$, $w\in\{-1,0,+1\}^n$, and~$c\in C^n$ with $C\seq \mathbb{Z}$.
Then the weight vector~$w':=w+nc\in W(C)^n$ with $W(C)$ as defined in~\eqref{eq:WC} has the following properties:
\begin{enumerate}[label=(\roman*),leftmargin=8mm, noitemsep, topsep=1pt plus 1pt]
\item For every $y\in X_c$ and $y'\in X\setminus X_c$ we have $w'\cdot y\leq w'\cdot y'$.
Consequently, if $y\in X_c$ and $y'\in X$ satisfy $w'\cdot y>w'\cdot y'$, then we have $y'\in X_c$.
\item Let $\beta\in[n]$ be such that $y_\beta=y'_\beta$ for all $y,y'\in X$.
Then we have $\argmin[y\in X_c\mid w'\cdot y]=\argmin[y\in X\mid w'\cdot y]$.
\end{enumerate}
\end{lemma}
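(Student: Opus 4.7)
The proof is a direct calculation leveraging two facts: $c$ is integer-valued, and each coordinate of $w$ lies in $\{-1,0,+1\}$.

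For part~(i), I would decompose $w' \cdot y - w' \cdot y' = (w \cdot y - w \cdot y') + n\,(c \cdot y - c \cdot y')$ and bound each term. Since $y \in X_c$ and $y' \in X \setminus X_c$ satisfy $c \cdot y < c \cdot y'$, and $c$ is integer-valued, we have $c \cdot y - c \cdot y' \leq -1$, so the second term is at most $-n$. For the first term, each summand $w_i(y_i - y'_i)$ is a product of factors in $\{-1,0,+1\}$ and therefore itself lies in $\{-1,0,+1\}$; summing over $i \in [n]$ yields $w \cdot y - w \cdot y' \leq n$. Combining the two bounds gives $w' \cdot y - w' \cdot y' \leq 0$, as claimed. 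The consequent statement is immediate by contrapositive.

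For part~(ii), the coordinate-agreement hypothesis sharpens the first bound by exactly one unit. Since $y_\beta = y'_\beta$ for every pair $y, y' \in X$, the $\beta$-th summand of $w \cdot y - w \cdot y'$ vanishes, so the remaining at most $n-1$ summands in $\{-1,0,+1\}$ yield $w \cdot y - w \cdot y' \leq n - 1$. Hence for $y \in X_c$ and $y' \in X \setminus X_c$ the same decomposition produces the strict inequality $w' \cdot y - w' \cdot y' \leq (n-1) - n = -1 < 0$. Consequently every minimizer of $w'$ over $X$ must belong to $X_c$, so $\argmin[y \in X \mid w' \cdot y] \subseteq X_c$; since this set is nonempty it also realizes $\min_{y \in X_c} w' \cdot y$, forcing the two argmin sets to coincide.

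I do not foresee any serious obstacle, since the lemma is a quantitative version of the classical weight-amplification trick in combinatorial optimization. The only subtle point is calibrating the amplification factor: the bound $w \cdot y - w \cdot y' \leq n$ in part~(i) is tight in general (take $y-y'$ to match the signs of $w$), so without additional information one cannot strengthen it to a strict inequality. The role of the coordinate-agreement hypothesis in part~(ii) is precisely to shave off one unit from this bound, tipping the comparison strictly in favor of elements of $X_c$.
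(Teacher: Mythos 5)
Your proof is correct and follows essentially the same route as the paper's: the same decomposition $w'\cdot y-w'\cdot y'=(w\cdot y-w\cdot y')+n(c\cdot y-c\cdot y')$, the bound $c\cdot y-c\cdot y'\leq -1$ from integrality, the bound $w\cdot y-w\cdot y'\leq n$ for part~(i), and the observation that the agreement in position~$\beta$ improves the latter to $n-1$ and hence makes the comparison strict for part~(ii). Your final paragraph spelling out why strictness forces the two argmin sets to coincide is slightly more explicit than the paper's, but the argument is the same.
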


\begin{proof}
To prove~(i), let $y\in X_c$ and $y'\in X\setminus X_c$, i.e., $y$ minimizes the costs and $y'$ does not, in particular $c\cdot y<c\cdot y'$.
As $c$ is an integer vector, we therefore have $c\cdot y\leq c\cdot y'-1$.
Furthermore, as all entries of $w$ are from~$\{-1,0,+1\}$, we have $w\cdot y\leq w\cdot y'+n$ (the same inequality holds with~$y$ and~$y'$ interchanged, but this is not needed here).
Combining these inequalities yields $w'\cdot y=w\cdot y+nc\cdot y\leq w\cdot y'+n+n(c\cdot y'-1)=(w+nc)\cdot y'=w'\cdot y'$, as claimed.

To prove~(ii), note that if all bitstrings from~$X$ agree in the $\beta$th position, then the second inequality from before is strict, which yields the stronger conclusion $w'\cdot y<w'\cdot y'$, which directly proves~(ii).
\end{proof}

Lemmas~\ref{lem:algoA-LOP-c} and~\ref{lem:algoC-LOP-c} below are the analogues of Lemmas~\ref{lem:algoA-LOP} and~\ref{lem:algoC-LOP}, respectively.

\begin{lemma}
\label{lem:algoA-LOP-c}
Let $X \seq \{0,1\}^n$ and $c \in C^n$ with $C\seq \mathbb{Z}$.
Suppose that problem~\upright{LOP} for~$X$ with weight set $W(C)$ can be solved in time~$t_{\upright{LOP}}=\Omega(n)$.
Then problem~\upright{A} for~$X_c$ can be solved in time $\cO(t_{\upright{LOP}}\log n)$.
\end{lemma}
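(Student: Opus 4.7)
The plan is to mimic the proof of Lemma~\ref{lem:algoA-LOP}, replacing the weight vector~$w$ by its amplified version $w':=w+nc$ as in Lemma~\ref{lem:weight-ampl}, and invoking part~(i) of that lemma to ensure that minimizers land in~$X_c$. As in the uncosted case, the quantity $\min\{\lambda_I(x,y)\mid y\in X_c-x\}$ takes values in $I\cup\{\infty\}$, and the predicate $\min\{\lambda_I(x,y)\mid y\in X_c-x\}\leq\alpha$ is monotone in~$\alpha\in I$. Thus it suffices to show that this predicate can be decided in time~$\mathcal{O}(t_{\upright{LOP}})$ for any fixed~$\alpha\in I$, and then binary search over~$\alpha$ gives the desired bound.

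For a given $\alpha\in I$, I would use exactly the $w$ and $P_0,P_1$ from~\eqref{eq:PwA}, and feed the amplified weight vector $w':=w+nc\in W(C)^n$ together with the prescription $(P_0,P_1)$ into an LOP-oracle. Let $y^\ast$ denote a returned minimizer (feasibility is immediate since $x$ itself is feasible). The claim I would verify is
\[ w'\cdot y^\ast < w'\cdot x \iff \exists\, y\in X_c - x \text{ with } \lambda_I(x,y)\leq\alpha. \]

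For the forward direction, assume $w'\cdot y^\ast<w'\cdot x$. Because $x\in X_c$, Lemma~\ref{lem:weight-ampl}(i) in its contrapositive form yields $y^\ast\in X_c$, hence $c\cdot y^\ast=c\cdot x$, and therefore $w\cdot y^\ast<w\cdot x$. The prescription enforces $y^\ast_i=x_i$ for all $i>\alpha$, so $\lambda(x,y^\ast)\leq\alpha$. The strict inequality $w\cdot y^\ast<w\cdot x$, combined with $w_i=0$ for $i<\min I$, forces $y^\ast$ and $x$ to disagree at some index in $[\min I,\alpha]$, which means $\lambda(x,y^\ast)\in[\min I,\alpha]\subseteq I$, i.e.\ $\lambda_I(x,y^\ast)\leq\alpha$. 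For the converse, given $y\in X_c-x$ with $\lambda_I(x,y)\leq\alpha$, the very same verification as in Lemma~\ref{lem:algoA-LOP} shows that $y$ is feasible for the LOP and satisfies $w\cdot y<w\cdot x$; since also $c\cdot y=c\cdot x$ by $y,x\in X_c$, one obtains $w'\cdot y<w'\cdot x$, so the optimum $w'\cdot y^\ast$ is strictly less than $w'\cdot x$.

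The main obstacle, and indeed the only place where the cost-optimal variant departs from the plain one, is ensuring that the LOP automatically selects a $y^\ast\in X_c$ rather than just some $y^\ast\in X$ with a smaller total value: this is precisely what the amplification factor~$n$ buys us, as packaged in Lemma~\ref{lem:weight-ampl}(i). A small subtlety to double-check is that the amplified weights still live in $W(C)$, which is built into the definition~\eqref{eq:WC}, and that the oracle delivering~$y^\ast$ can handle weight set $W(C)$ within the stated time $t_{\upright{LOP}}$, which is exactly the hypothesis of the lemma. Combining the biconditional with $\mathcal{O}(\log n)$ LOP calls via binary search over~$\alpha\in I$ then yields the claimed $\mathcal{O}(t_{\upright{LOP}}\log n)$ bound.
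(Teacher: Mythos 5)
Your proposal is correct and follows essentially the same route as the paper's proof: binary search over $\alpha$, the same $w,P_0,P_1$ from~\eqref{eq:PwA}, the amplified vector $w'=w+nc$, and Lemma~\ref{lem:weight-ampl}(i) to force the LOP minimizer into~$X_c$ in the forward direction. Your justification that the largest disagreement index lands in $[\min I,\alpha]\subseteq I$ is in fact slightly more explicit than the paper's.
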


\begin{proof}[Proof of Lemma~\ref{lem:algoA-LOP-c}]
Consider the set~$X_c\seq\{0,1\}^n$, an element~$x\in X_c$ and an interval~$I\in[n]$ as input for problem~A.
We show that
\begin{equation}
\label{eq:mono-c}
\min \{\lambda_I(x,y) \mid y\in X_c-x\} \leq \alpha
\end{equation}
can be decided in time~$\cO(t_{\upright{LOP}})$, from which it follows that $\min \{\lambda_I(x,y)\mid y\in X_c-x\}$ can be computed in time~$\cO(t_{\upright{LOP}}\log n)$ by doing binary search.
We define $P_0$, $P_1$ and $w\in \{-1,0,+1\}^n$ as in~\eqref{eq:PwA}, and we also define $w':=w+nc\in W(C)^n$.
We claim that
\begin{equation}
\label{eq:mu}
\mu:=\min\limits_{y\in X\,\wedge\,y_{P_0}=0\,\wedge\,y_{P_1}=1} w'\cdot y<w'\cdot x=(w+nc)\cdot x=:a
\end{equation}
if and only if~\eqref{eq:mono-c} holds.
Crucially, the minimization in~\eqref{eq:mu} is over the entire set~$X$, whereas the minimization in~\eqref{eq:mono-c} is only over the subset~$X_c\seq X$.

To prove one direction of the claim, note that if $\mu<a$, then there is a $y^*\in X$ with $y^*_{P_0}=0$, $y^*_{P_1}=1$, and $y^*_i\neq x_i$ for some $i\in I$ with $i\leq \alpha$.
It follows that $\lambda_I(x,y^*)\leq \alpha$.
Applying Lemma~\ref{lem:weight-ampl}~(i) shows that $y^*\in X_c$, which implies~\eqref{eq:mono-c}.

To prove the other direction of the claim, if~\eqref{eq:mono-c} holds, then there is $y^*\in X_c-x$ with $\lambda_I(x,y^*)\leq \alpha$, i.e., there is a position~$i\in I$ with $i\leq \alpha$ such that $y^*_i\neq x_i$ and $y^*_j=x_j$ for all $j\geq i+1$, in particular $y^*_{P_0}=0$ and $y^*_{P_1}=1$.
As $y^*_i\neq x_i$ we have $w\cdot y^*<w\cdot x$.
Using that $y\in X_c$ we also have $c\cdot y^*=c\cdot x$.
Combining these observations yields $w'\cdot y^*=w\cdot y^*+nc\cdot y^*<w\cdot x+nc\cdot x=(w+nc)\cdot x=a$ and therefore $\mu<a$.
This completes the proof of the lemma.
\end{proof}

\begin{lemma}
\label{lem:algoC-LOP-c}
Let $X \seq \{0,1\}^n$ and $c\in C^n$ with $C\seq \mathbb{Z}$.
Suppose that problem~\upright{LOP} for~$X$ with weight set~$W(C)$ can be solved in time~$t_{\upright{LOP}}=\Omega(n)$.
Then problem~\upright{C} for~$X_c$ can be solved in time~$\cO(t_{\upright{LOP}})$.
\end{lemma}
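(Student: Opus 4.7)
The plan is to mirror the reduction in Lemma~\ref{lem:algoC-LOP}, but with the weights amplified by a factor of $n$ as in Lemma~\ref{lem:weight-ampl}. Concretely, given the input $x\in X_c$ and $\beta\in[n]$ for problem~C restricted to $X_c$, I would define $w\in\{-1,0,+1\}^n$ and $P_0,P_1\seq[n]$ exactly as in~\eqref{eq:PwC}, and then set $w':=w+nc\in W(C)^n$. Feed $(X,w',P_0,P_1)$ into the oracle for problem~LOP. The goal is to show that the returned vertex lies in $N$.

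To this end, write $X':=\{y\in X\mid y_{P_0}=0\,\wedge\,y_{P_1}=1\}$ and $X'_c:=X'\cap X_c$. The hypothesis $N\neq\emptyset$ is equivalent to $X'_c\neq\emptyset$. The core of the proof is then the chain of equalities
\begin{equation*}
\argmin_{y\in X'_c} d(x,y)
=\argmin_{y\in X'_c} w\cdot y
=\argmin_{y\in X'_c} w'\cdot y
=\argmin_{y\in X'} w'\cdot y,
\end{equation*}
whose leftmost term equals $N$. The first equality uses $d(x,y)=w\cdot y-w\cdot x$ as in Lemma~\ref{lem:algoC-LOP}; the second uses that $c\cdot y$ is constant on $X_c$; and the third is the decisive step where weight amplification does its work.

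The main obstacle is justifying the third equality, and this is where I would adapt Lemma~\ref{lem:weight-ampl}(ii) from the ambient set $X$ to the restricted set $X'$. The key observation is that $\beta\in P_0\cup P_1$ by the definition of $P_0,P_1$ in~\eqref{eq:PwC}, so every $y\in X'$ satisfies $y_\beta=\ol{x_\beta}$; in particular, all elements of $X'$ agree at position $\beta$. This lets me sharpen the bound $w\cdot y^*-w\cdot y\leq n$ used in the proof of Lemma~\ref{lem:weight-ampl}(i) to $w\cdot y^*-w\cdot y\leq n-1$ for any $y^*\in X'_c$ and $y\in X'\setminus X_c$, since these two vectors contribute identically at position~$\beta$ while $|w_\beta|=1$. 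Combined with $c\cdot y^*\leq c\cdot y-1$ (integrality of $c$ and $y\notin X_c$), this yields the strict inequality $w'\cdot y^*<w'\cdot y$, which forces $\argmin_{y\in X'} w'\cdot y\seq X_c$ and hence $\seq X'_c$.

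Once the third equality is established, the reverse inclusion in that equality is immediate from $X'_c\seq X'$, and the whole chain collapses to $N$. Thus a single call to problem~LOP with weight set $W(C)$ returns an element of $N$, giving the claimed running time~$\cO(t_{\upright{LOP}})$.
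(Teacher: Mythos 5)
Your proposal is correct and follows essentially the same route as the paper: define $w,P_0,P_1$ as in~\eqref{eq:PwC}, amplify to $w'=w+nc$, and use the fact that all feasible $y$ agree at position~$\beta$ to get the strict inequality that collapses the minimization over $X'$ to one over $X'_c$ — this is exactly the content of Lemma~\ref{lem:weight-ampl}(ii), which the paper invokes and you re-derive inline for the restricted set $X'$. No gaps.
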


\begin{proof}
Consider the set~$X_c\seq\{0,1\}^n$, an element~$x\in X_c$ and an integer~$\beta\in[n]$ with
$N:=\argmin[y\in X_c-x\,\wedge\,\lambda(x,y)=\beta\mid d(x,y)]\neq \emptyset$
as input for problem~C.
We define $P_0$, $P_1$ and $w$ as in~\eqref{eq:PwC}, and we also define $w':=w+nc\in W(C)^n$.

By these definitions we have $d(x,y)=w\cdot(y-x)=(w'-nc)\cdot (y-x)$, and as $x$ is fixed and $c\cdot y$ is the same value for all $y\in X_c$, minimization of $d(x,y)$ is the same as minimization of~$w'\cdot y$.
Consequently, we have
\begin{align*}
N &= \argmin[y\in X_c-x\,\wedge\,\lambda(x,y)=\beta\mid d(x,y)] \\
&=\argmin[y\in X_c\,\wedge\,y_{P_0}=0\,\wedge\,y_{P_1}=1\mid w'\cdot y] \\
&=\argmin[y\in X\,\wedge\,y_{P_0}=0\,\wedge\,y_{P_1}=1\mid w'\cdot y],
\end{align*}
where we used the definitions of~$P_0$ and~$P_1$ in the first step, and Lemma~\ref{lem:weight-ampl}~(ii) in the second step.
\end{proof}

\begin{theorem}
\label{thm:algoPLOP-time-c}
Let $X \seq \{0,1\}^n$ and $c\in C^n$ with $C\seq \mathbb{Z}$.
Suppose that problem~\upright{LOP} for~$X$ with weight set~$W(C)$ as defined in~\eqref{eq:WC} can be solved in time~$t_{\upright{LOP}}=\Omega(n)$.
Then for every tiebreaking rule and every initial vertex~$\tx\in X_c$, Algorithm~\upright{P\sss{}} computes a genlex Hamilton path on the skeleton of~$\conv(X_c)$ starting at~$\tx$ with delay~$\cO(t_{\upright{LOP}}\log n)$.
\end{theorem}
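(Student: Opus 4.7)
The plan is to mirror the proof of Theorem~\ref{thm:algoPLOP-time}, but with the cost-aware versions of the auxiliary lemmas taking the place of their uncolored analogues. First I would note that since $c$ is a linear functional, $X_c = \argmin_{x\in X} c\cdot x$ is the vertex set of a face of $\conv(X)$, hence $\conv(X_c)$ is itself a 0/1-polytope with $X_c\seq\{0,1\}^n$. Therefore Theorem~\ref{thm:algoP-poly} applied to $X_c$ already guarantees that Algorithm~P\sss{} with any tiebreaking rule and any initial vertex $\tx\in X_c$ produces a genlex Hamilton path on the skeleton of $\conv(X_c)$. This takes care of the correctness half of the statement, independently of the running-time considerations.

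For the delay bound, I would invoke Theorem~\ref{thm:algoP-time}, which states that the per-step cost of Algorithm~P\sss{} run on input $X_c$ is $\cO(t_{\upright{A}}+t_{\upright{C}})$, where $t_{\upright{A}}$ and $t_{\upright{C}}$ denote the times needed to solve problems~A and~C \emph{on $X_c$}. The crux is that we do not have an oracle optimizing directly over $X_c$; we only have a~LOP oracle on the ambient set~$X$ with weight set~$W(C)$. Precisely this gap is closed by Lemmas~\ref{lem:algoA-LOP-c} and~\ref{lem:algoC-LOP-c}: the former shows that problem~A on $X_c$ reduces to $\cO(\log n)$ calls to LOP on $X$ with weight set $W(C)$ via the same binary-search scheme used in Lemma~\ref{lem:algoA-LOP}, and the latter shows that problem~C on~$X_c$ reduces to a single such call. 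Substituting $t_{\upright{A}}=\cO(t_{\upright{LOP}}\log n)$ and $t_{\upright{C}}=\cO(t_{\upright{LOP}})$ into Theorem~\ref{thm:algoP-time} yields the claimed delay $\cO(t_{\upright{LOP}}\log n)$.

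The only nontrivial ingredient is the one already resolved in the preceding lemmas, namely the weight-amplification trick: by replacing a $\{-1,0,+1\}$-valued weight vector $w$ with $w' := w + nc$, the $c$-component dominates the $w$-component (Lemma~\ref{lem:weight-ampl}~(i)), so any minimizer of $w'\cdot y$ over the full set $X$ automatically lands in $X_c$. This is what turns an a~priori bi-criteria problem on $X_c$ into a plain LOP on $X$, at the sole quantitative cost of enlarging the weight set from $\{-1,0,+1\}$ to $W(C) = \{-1,0,+1\}+nC$, which is exactly the hypothesis in the theorem. Once this is invoked, the rest of the argument is a direct concatenation of Theorem~\ref{thm:algoP-time} with Lemmas~\ref{lem:algoA-LOP-c} and~\ref{lem:algoC-LOP-c}, and nothing further needs to be proved.
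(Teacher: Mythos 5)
Your proposal is correct and follows exactly the paper's route: the paper's proof is literally "Combine Lemmas~\ref{lem:algoA-LOP-c} and~\ref{lem:algoC-LOP-c} with Theorem~\ref{thm:algoP-time}," and you have filled in that combination faithfully, including the observation that $\conv(X_c)$ is a face of $\conv(X)$ and hence a 0/1-polytope so that the correctness part applies. Nothing to add.
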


\begin{proof}
Combine Lemmas~\ref{lem:algoA-LOP-c} and~\ref{lem:algoC-LOP-c} with Theorem~\ref{thm:algoP-time}.
\end{proof}

\begin{remark}
\label{rem:polylog}
Suppose that problem~LOP for $X$ with weight set~$W$ can be solved in time~$f(n,M)$, where $M:=\max W$.
Then problem~LOP for~$X$ with weight set~$W(C)$ can be solved in time~$f(n,nM)$.
Often the dependency of~$f$ on~$M$ is \emph{polylogarithmic}, and in those cases~$f(n,nM)$ is bigger than~$f(n,M)$ only by a polylogarithmic factor in~$n$.
Then the delay of Algorithm~P\sss{} is larger than the time for solving the corresponding optimization problem on~$X$ only by a polylogarithmic factor in~$n$.
\end{remark}

\subsection{Eliminating the prescription constraints}

A similar weight amplification trick can be used to reduce linear optimization with prescription to classical \defi{linear optimization} (without prescription).

\begin{enumerate}[label=\mybox{LO},leftmargin=12mm, noitemsep, topsep=1pt plus 1pt]
\item Given a set~$X\seq\{0,1\}^n$ and a weight vector~$w\in W^n$ with $W\seq\mathbb{R}$, compute an element in $N:=\argmin[y\in X \mid w\cdot y]$, or decide that this problem is infeasible, i.e., $N=\emptyset$.
\end{enumerate}

\begin{lemma}
\label{lem:LOP-LO}
Let $X\seq \{0,1\}^n$ and $w\seq W^n$ with $W\seq\mathbb{Z}\cap[-M,+M]$.
Suppose that problem~\upright{LO} with weight set~$W \cup \{-nM,+nM\}$ can be solved in time~$t_{\upright{LO}}=\Omega(n)$.
Then problem~\upright{LOP} with weight set~$W$ can be solved in time~$\cO(t_{\upright{LO}})$.
\end{lemma}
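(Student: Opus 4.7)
\textbf{Proof proposal for Lemma~\ref{lem:LOP-LO}.}
The plan is to reduce \upright{LOP} to \upright{LO} via weight amplification: penalize the prescribed coordinates so heavily that every \upright{LO}-optimum of the amplified vector automatically satisfies the prescription. Given an \upright{LOP} instance $(X, w, P_0, P_1)$ with $w \in W^n$ and $W \seq \mathbb{Z} \cap [-M, +M]$, define
\[
w'_i := \begin{cases} +nM & \text{if } i \in P_0, \\ -nM & \text{if } i \in P_1, \\ w_i & \text{otherwise,} \end{cases}
\]
so that $w' \in (W \cup \{-nM, +nM\})^n$. Then call the \upright{LO} oracle on $(X, w')$; if it reports infeasibility, report \upright{LOP}-infeasibility. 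Otherwise, let $y$ be the returned optimum: return $y$ if $y_{P_0}=0$ and $y_{P_1}=1$, and report \upright{LOP}-infeasibility otherwise.

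Let $F := \{y \in X \mid y_{P_0} = 0,\; y_{P_1} = 1\}$ denote the \upright{LOP}-feasible set. Correctness rests on two claims. First, for every $y \in F$ one checks directly that $w' \cdot y = w \cdot y - \sum_{i \in P_1} w_i - nM|P_1|$, an additive shift that does not depend on $y$; hence $\argmin[y \in F \mid w' \cdot y] = \argmin[y \in F \mid w \cdot y]$. Second, if $F \neq \emptyset$ then the $w'$-minimum over all of $X$ is actually attained inside~$F$. Granting this second claim, the returned $y$ is \upright{LOP}-optimal whenever $F \neq \emptyset$, and is correctly rejected whenever $F = \emptyset$, so the algorithm is correct.

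The second claim is the main obstacle. The case $P_0 \cup P_1 = \emptyset$ is trivial since $F = X$, so assume $|P_0 \cup P_1| \geq 1$. For arbitrary $y^* \in F$ and $y \in X \setminus F$, a direct expansion gives
\[
w' \cdot y - w' \cdot y^* \;=\; \sum_{i \in P_0} nM\, y_i \;+\; \sum_{i \in P_1} nM\,(1 - y_i) \;+\; \sum_{i \notin P_0 \cup P_1} w_i\,(y_i - y^*_i),
\]
using $y^*_i = 0$ on $P_0$ and $y^*_i = 1$ on $P_1$. The first two sums consist of $\{0, nM\}$-valued terms, and by infeasibility of $y$ at least one of these terms equals $nM$, so together they contribute at least $nM$. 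The remaining sum has $n - |P_0 \cup P_1|$ terms, each of absolute value at most $M$, hence at least $-M(n - |P_0 \cup P_1|)$. Therefore
\[
w' \cdot y - w' \cdot y^* \;\geq\; nM - M(n - |P_0 \cup P_1|) \;=\; M\,|P_0 \cup P_1| \;\geq\; M,
\]
which is strictly positive when $M \geq 1$; the degenerate case $M = 0$, meaning $w \equiv 0$ so that \upright{LOP} is a pure feasibility problem, can be handled by treating $M$ as $1$ in the amplification, since the $w$-term in the decomposition vanishes anyway. Hence every infeasible $y$ is strictly $w'$-worse than every feasible $y^*$, so the \upright{LO}-optimum must lie in $F$, proving the second claim.

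Finally, the total running time consists of constructing $w'$, one call to the \upright{LO} oracle, and a single prescription check on the returned $y$, which is $\cO(n) + t_{\upright{LO}} = \cO(t_{\upright{LO}})$ since $t_{\upright{LO}} = \Omega(n)$.
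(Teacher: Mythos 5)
Your proof is correct and follows essentially the same route as the paper's: the identical amplified weight vector $w'$, a separation argument showing every prescription-violating $y$ is strictly $w'$-worse than every feasible $y^*$, and the observation that $w'\cdot y$ differs from $w\cdot y$ by a constant on the feasible set. Your explicit handling of the degenerate case $M=0$ (where the strict separation would otherwise fail) is a small point the paper's proof glosses over.
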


\begin{proof}
Consider $X$ and~$w$ as in the lemma, and sets~$P_0,P_1\seq [n]$ as input for problem~LOP.
We define $Q:=[n]\setminus(P_0\cup P_1)$.
Clearly, we may assume that $|P_0\cup P_1|>0$, or equivalently $|Q|<n$.
We define a weight vector~$w'\in(W\cup\{-nM,+nM\})^n$ by
\[
w_i':=\begin{cases}
     w_i & \text{if } i \in Q, \\
     +nM & \text{if } i \in P_0, \\
     -nM & \text{if } i \in P_1.
     \end{cases}
\]
We claim that
\[N:=\argmin[y\in X\,\wedge\,y_{P_0}=0\,\wedge\,y_{P_1}=1\mid w\cdot y]=\argmin[y\in X\mid w'\cdot y].\]
For every $y\in X$ we define the abbreviations $f(y):=\sum_{i\in Q}w_iy_i$ and $g(y):=\sum_{i\in P_0}y_i-\sum_{i\in P_1}y_i$, so $w'\cdot y=f(y)+nMg(y)$.
Furthermore, we define $X_P:=\{y\in X\mid y_{P_0}=0\,\wedge\,y_{P_1}=1\}$.
Let $y\in X_P$ and $y'\in X\setminus X_P$.
As all entries of $y$ and~$y'$ are from~$\{0,1\}$ and $|w_i|\leq M$ we have $w_i(y_i-y_i')\leq M$ and therefore $f(y)\leq f(y')+|Q|M<f(y')+nM$.
Furthermore, note that $g(y)=-|P_1|$ and $g(y')\geq -|P_1|+1=g(y)+1$.
Combining these inequalities yields
\[w'\cdot y=f(y)+nMg(y)<f(y')+nM+nM(g(y')-1)=f(y')+nMg(y')=w'\cdot y',\]
which proves that $\argmin[y\in X\mid w'\cdot y]=\argmin[y\in X_P\mid w'\cdot y]$.
Now observe that $w'\cdot y=w\cdot y+\sum_{i\in P_0\cup P_1}(w_i'-w_i)y_i$, and the sum on the right hand side of this equation is a constant for all $y\in X_P$, so minimizing $w'\cdot y$ over all~$y\in X_P$ is the same as minimizing~$w\cdot y$ over all~$y\in X_P$.
This proves the claim and thus the lemma.
\end{proof}

Via Lemma~\ref{lem:LOP-LO}, the problem of computing a Hamilton path on the polytope~$\conv(X)$ is reduced entirely to solving linear optimization (problem~LO) over~$X$.
Lemma~\ref{lem:LOP-LO} can be applied in conjunction with Theorem~\ref{thm:algoPLOP-time} or Theorem~\ref{thm:algoPLOP-time-c}; see Table~\ref{tab:fg-bounds}.

\begin{table}
\caption{Delay of Algorithm~P\sss{} obtained from applying Theorems~\ref{thm:algoPLOP-time} and~\ref{thm:algoPLOP-time-c} with Lemma~\ref{lem:LOP-LO}.
We assume that problems~LOP and~LO for~$X$ can be solved in time~$t_{\upright{LOP}}=f(n,M)$ or~$t_{\upright{LO}}=g(n,M)$, respectively, where $M$ is an upper bound on the entries of the weight vector~$w\in W^n$, i.e., $W\seq\mathbb{Z}\cap[-M,+M]$.
We also assume that both of these functions are in~$\Omega(n)$.}
\label{tab:fg-bounds}
\centering
\makebox[0cm]{ 
\begin{tabular}{|p{37mm}|p{25mm}|p{40mm}|p{44mm}|}
\hline
\textbf{Objects} & \textbf{Optimization problem} & \textbf{Weight set} & \textbf{Delay of Algorithm~P\sss{}} \\ \hline
\multirow{2}{37mm}{All elements in $X$} & LOP for $X$ & $\{-1,0,1\}$ & $\cO(f(n,1) \log n)$ \newline Thm.~\ref{thm:algoPLOP-time} \\ \cline{2-4}
                                        & LO for $X$ & $\{-n,-1,0,1,n\}$ & $\cO(g(n,n) \log n)$  \newline Thm.~\ref{thm:algoPLOP-time}+Lemma~\ref{lem:LOP-LO} \\ \hline
\multirow{2}{37mm}{Elements in $X$ of minimum size, i.e., $c=(+1,\ldots,+1)$} & LOP for $X$ & $\{n-1,n,n+1\}$ & $\cO(f(n,n+1) \log n)$ \newline Thm.~\ref{thm:algoPLOP-time-c} \\ \cline{2-4}
                                        & LO for $X$ & $\{-n(n+1),n-1,n,$ \newline ${}\hspace{10mm}n+1,n(n+1)\}$ & $\cO(g(n,n(n+1)) \log n)$ \newline Thm.~\ref{thm:algoPLOP-time-c} + Lemma~\ref{lem:LOP-LO} \\ \hline
\multirow{2}{37mm}{Elements in $X$ of maximum size, i.e., $c=(-1,\ldots,-1)$} & LOP for $X$ & $\{-n-1,-n,-n+1\}$ & $\cO(f(n,n+1) \log n)$\newline Thm.~\ref{thm:algoPLOP-time-c} \\ \cline{2-4}
                                        & LO for $X$ & \small$\{-n(n+1),-n-1,-n,$ \newline ${}\hspace{10mm}-n+1,n(n+1)\}$ & $\cO(g(n,n(n+1)) \log n)$ \newline  Thm.~\ref{thm:algoPLOP-time-c} + Lemma~\ref{lem:LOP-LO} \\ \hline
\multirow{2}{37mm}{$c$-optimal elements in $X$ where $c\in C^n$ and $C\seq\mathbb{Z}\cap[-M,M]$} & LOP for $X$ & $W(C)$  & $\cO(f(n,nM+1) \log n)$ \newline Thm.~\ref{thm:algoPLOP-time-c} \\ \cline{2-4}
                                        & LO for $X$ & $W(C) \cup \{-n(nM+1),$ \newline ${}\hspace{19mm}n(nM+1)\}$ & $\cO(g(n,n(nM+1)) \log n)$ \newline Thm.~\ref{thm:algoPLOP-time-c} + Lemma~\ref{lem:LOP-LO} \\ \hline
\end{tabular}
}
\end{table}

\begin{remark}
The results obtained by applying Lemma~\ref{lem:LOP-LO} are incomparable to those obtained without the lemma (which rely on solving problem~LOP).
Specifically, while eliminating the prescription constraints one arrives at a simpler optimization problem, this comes at the cost of increasing the weights, thus potentially increasing the running time.
Nevertheless, Lemma~\ref{lem:LOP-LO} provides an \emph{automatic} way to exploit every algorithm for the linear optimization problem on~$X$ (problem~LO; without prescription constraints) for computing a Hamilton path on the corresponding 0/1-polytope $\conv(X)$, with provable runtime guarantees for this computation.
\end{remark}

\section{Applications}
\label{sec:appl}

In this section, we show how to apply our general theorems to obtain efficient algorithms for generating Gray codes for different concrete classes of combinatorial objects.
The list of applications shown here is not exhaustive, but exemplary.
More results are derived in short form in Table~\ref{tab:appl}.

\subsection{Vertices of a 0/1-polytope}
\label{sec:appl-poly}

Let $A \in \mathbb{R}^{m \times n}$ and $b \in \mathbb{R}^m$ be such that $P:=\{x \in \mathbb{R}^n \mid Ax \leq b \}$ is a 0/1-polytope.
We let $X\seq \{0,1\}^n$ denote the set of vertices of~$P$, i.e., we have $P=\conv(X)$.
The problem~LOP defined in Section~\ref{sec:reduction} translates to solving the LP
\[
\min \{w\cdot x \mid Ax \leq b \,\wedge\, x_{P_0}=0 \,\wedge\, x_{P_1}=1\}.
\]
By eliminating the prescribed variables from this problem, we see that it is equivalent to the standard LP
\begin{equation}
\label{eq:prob-LP}
\min\{ w \cdot x\mid Ax \leq b\}
\end{equation}
(with modified $A$, $b$, and $w$, but with the same bounds on their sizes).
Thus, invoking Theorem~\ref{thm:algoPLOP-time} we obtain the following result about the vertex enumeration problem for 0/1-polytopes.
Our result improves upon the $\cO(t_{\upright{LP}}\,n)$ delay algorithm of Bussieck and L\"ubbecke~\cite{MR1659922}, and it has the additional feature that the vertices of the polytope are visited in the order of a Hamilton path on the skeleton.

\begin{corollary}
\label{cor:appl-poly}
Let $P=\{x \in \mathbb{R}^n \mid Ax \leq b \}$ be a 0/1-polytope, and suppose that the LP~\eqref{eq:prob-LP} can be solved in time $t_{\upright{LP}}=\Omega(mn)$.
Then for every tiebreaking rule and every initial vertex~$\tx$, Algorithm~\upright{P\sss{}} computes a genlex Hamilton path on the skeleton of~$P$ starting at~$\tx$ with delay~$\cO(t_{\upright{LP}}\log n)$.
\end{corollary}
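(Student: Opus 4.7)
The plan is to invoke Theorem~\ref{thm:algoPLOP-time} with $X\seq\{0,1\}^n$ taken as the vertex set of $P$, so that $\conv(X)=P$ and the skeleton of $\conv(X)$ coincides with the skeleton of $P$. The corollary then follows as soon as I exhibit an algorithm that solves problem~LOP for this $X$ with weight set $W=\{-1,0,+1\}$ in time $\cO(t_{\upright{LP}})$; the theorem will immediately convert this into the claimed delay bound $\cO(t_{\upright{LP}}\log n)$ of Algorithm~P\sss{}.

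For an instance of LOP specified by a weight vector $w\in W^n$ and disjoint prescription sets $P_0,P_1\seq[n]$, the key observation is that $F:=\{x\in P\mid x_{P_0}=0,\,x_{P_1}=1\}$ is a face of $P$. Indeed, because $P$ is a 0/1-polytope each hyperplane $x_i=0$ (for $i\in P_0$) and $x_i=1$ (for $i\in P_1$) is supporting, and $F$ is their joint intersection with $P$. Hence the vertex set of $F$ is exactly $\{y\in X\mid y_{P_0}=0,\,y_{P_1}=1\}$, the feasible region of the LOP instance. Since a linear objective attains its minimum on a polytope at one of its vertices, solving the LP $\min\{w\cdot x\mid x\in F\}$ returns an element of~$N$ whenever $F$ is nonempty, and otherwise certifies $N=\emptyset$.

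To conclude I will recast this LP in the form~\eqref{eq:prob-LP}. Substituting the prescribed values $x_i=0$ for $i\in P_0$ and $x_i=1$ for $i\in P_1$ into $Ax\leq b$ eliminates the variables indexed by $P_0\cup P_1$, leaving an LP over at most $n$ variables with at most $m$ inequalities, i.e., again of the form~\eqref{eq:prob-LP} with $A$, $b$, and $w$ of no larger dimensions than the original. By hypothesis such an LP is solvable in time $t_{\upright{LP}}=\Omega(mn)$, which in particular dominates $\Omega(n)$, so the premise $t_{\upright{LOP}}=\Omega(n)$ of Theorem~\ref{thm:algoPLOP-time} is met; applying that theorem completes the argument. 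I do not foresee any substantive obstacle: the only point requiring care is ensuring that the elimination of prescribed variables stays within both the syntactic form~\eqref{eq:prob-LP} and the assumed time budget, both of which are immediate.
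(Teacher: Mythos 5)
Your proposal is correct and follows essentially the same route as the paper: identify problem LOP for the vertex set of $P$ with the LP obtained by adding the prescription constraints, eliminate the prescribed variables to recover an LP of the form~\eqref{eq:prob-LP} with no larger dimensions, and invoke Theorem~\ref{thm:algoPLOP-time}. Your extra observation that the prescription cuts out a face of $P$ is a harmless elaboration of the same point (both you and the paper implicitly assume the LP solver returns a vertex-optimal solution), so nothing further is needed.
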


We can also apply Theorem~\ref{thm:algoPLOP-time-c} to visit only the cost-optimal vertices of~$P$.
For this we also assume that $t_{\upright{LP}}$ depends polylogarithmically on the largest weight
(recall Remark~\ref{rem:polylog}).

\begin{corollary}
\label{cor:appl-poly-c}
Let $P$ and $t_{\upright{LP}}$ be as in Corollary~\ref{cor:appl-poly}, and suppose that $t_{\upright{LP}}$ is polylogarithmic in the largest weight.
Furthermore, let $c \in \mathbb{Z}^n$ and $P_c:=\argmin[x\in P\mid c\cdot x]$.
Then for every tiebreaking rule and every initial vertex~$\tx\in P_c$, Algorithm~\upright{P\sss{}} computes a genlex Hamilton path on the skeleton of~$P_c$ starting at~$\tx$ with delay~$\cO(t_{\upright{LP}}\poly(\log n))$.
\end{corollary}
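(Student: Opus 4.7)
The plan is to apply Theorem~\ref{thm:algoPLOP-time-c} to the vertex set $X \seq \{0,1\}^n$ of $P$, taking as cost set $C := \mathbb{Z}\cap[-|c|,+|c|]$ where $|c|:=\max_i|c_i|$. The theorem then produces a genlex Hamilton path on the skeleton of $\conv(X_c) = P_c$ with delay $\cO(t_{\upright{LOP}}\log n)$, provided we can solve problem LOP on $X$ with weight set $W(C)=\{-1,0,+1\}+nC$ in time $t_{\upright{LOP}}$. The bulk of the work is therefore to bound $t_{\upright{LOP}}$ in terms of $t_{\upright{LP}}$.

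First I would reduce a single instance of LOP on $X$ to an LP of the form~\eqref{eq:prob-LP}. Given weights $w\in W(C)^n$ and disjoint prescribed sets $P_0,P_1\seq[n]$, the set
\[
F := \{x\in P \mid x_{P_0}=0,\; x_{P_1}=1\}
\]
is the intersection of $P$ with coordinate hyperplanes, hence a face of $P$. Since $P$ is a 0/1-polytope, $F$ is itself a 0/1-polytope whose vertex set is precisely $\{y\in X \mid y_{P_0}=0,\; y_{P_1}=1\}$. Consequently,
\[
\argmin[y\in X \,\wedge\, y_{P_0}=0 \,\wedge\, y_{P_1}=1 \mid w\cdot y]
=\argmin\{w\cdot x \mid Ax\le b,\ x_{P_0}=0,\ x_{P_1}=1\},
\]
and by eliminating the prescribed variables the right-hand LP becomes an instance of~\eqref{eq:prob-LP} with the same $m$, at most $n$ variables, and weights of the same magnitude as those in $w$.

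Second I would track the magnitude of the weights. Every entry of $w\in W(C)^n$ satisfies $|w_i|\le n|c|+1$. Together with the assumption that $t_{\upright{LP}}$ depends polylogarithmically on the largest weight (cf.\ Remark~\ref{rem:polylog}), this yields $t_{\upright{LOP}} = t_{\upright{LP}}\cdot\poly(\log n)$, where $\log|c|$ contributes only a lower-order polylogarithmic factor that we absorb into $\poly(\log n)$.

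Feeding this into Theorem~\ref{thm:algoPLOP-time-c} gives delay $\cO(t_{\upright{LOP}}\log n) = \cO(t_{\upright{LP}}\poly(\log n))$, as required, with $\tx\in P_c$ serving as the initial vertex. The only non-routine step, and what I expect to be the main (and rather mild) obstacle, is the first reduction: one must invoke the 0/1 assumption on $P$ to guarantee that fixing coordinates through $P_0,P_1$ carves out a face whose extreme points still lie in $X$, so that the LP with prescription really computes an LOP optimum rather than a fractional extreme point of some unrelated polyhedron. Everything else is bookkeeping on weight magnitudes and a direct invocation of Theorem~\ref{thm:algoPLOP-time-c}.
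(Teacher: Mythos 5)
Your proposal is correct and follows essentially the same route as the paper: invoke Theorem~\ref{thm:algoPLOP-time-c}, reduce the resulting LOP instances to the LP~\eqref{eq:prob-LP} by eliminating the prescribed variables, and absorb the weight amplification to magnitude $\cO(n|c|)$ via the assumed polylogarithmic dependency of $t_{\upright{LP}}$ on the largest weight (Remark~\ref{rem:polylog}). Your explicit observation that fixing coordinates carves out a face of the 0/1-polytope (so the restricted LP optimum is a genuine vertex of $X$ satisfying the prescription) is a detail the paper leaves implicit, but it is the same argument.
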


The initialization time of Algorithm~P\sss{} in both results is the same as the delay, and the required space is the same as the space needed to solve the LP~\eqref{eq:prob-LP}.

\subsection{Spanning trees of a graph}

Let $H$ be a connected $n$-vertex graph with edge set~$[m]$.
We let $X$ denote the set of indicator vectors of spanning trees of~$H$, i.e.,
\[X = \{\indicator_{T} \mid  T\seq[m] \text{ is a spanning tree of } H \}\seq \{0,1\}^m.\]
It is well known (\cite{MR510371}) that the edges of the \defi{spanning tree polytope} $\conv(X)$ are precisely between pairs of trees $T,T'$ that differ in an edge exchange, i.e., there are edges $i,j\in [m]$ such that $T'=T+i-j$.
We thus obtain the following specialization of Algorithm~G for listing all spanning trees of~$H$ by edge exchanges.
The greedy update rule in step~T3 minimizes the larger of the two edges in each exchange.

\begin{algo}{Algorithm~T}{Spanning trees by shortest prefix changes}
Given a connected graph~$H$ with edge set~$[m]$, this algorithm greedily generates all spanning trees of~$H$ by edge exchanges, starting from an initial spanning tree~$\tT$.
\begin{enumerate}[label={\bfseries T\arabic*.}, leftmargin=8mm, noitemsep, topsep=3pt plus 3pt]
\item{} [Initialize] Set $T \gets \tT$.
\item{} [Visit] Visit $T$.
\item{} [Shortest prefix change] Compute the set $N$ of unvisited spanning trees~$T'$ that differ from~$T$ in the exchange of edges~$i,j$ with smallest value $\max\{i,j\}$, i.e., $N\gets \argmin[T'=T+i-j \text{ spanning tree of } H \,\wedge\, T' \text{ unvisited} \mid \max \{i,j\}]$.
Terminate if $N=\emptyset$.
\item{} [Tiebreaker+update~$T$] Pick an arbitrary tree $T'\in N$, set $T\gets T'$ and goto~T2.
\end{enumerate}
\end{algo}

This algorithm for generating spanning trees by edge exchanges has been described before by Merino, M\"utze, and Williams~\cite{MR4473269}.
They gave an implementation that achieves delay~$\cO(m \log n (\log \log n)^3)$.
We now improve on this result using our framework via optimization.

The problem~LOP defined in Section~\ref{sec:reduction} translates to computing a minimum weight spanning tree~$T$ in~$H$ according to some weight function $w\in\mathbb{R}^m$, with the prescription constraints $P_0\cap T=\emptyset$ and $P_1\seq T$, i.e., the edges in~$P_0$ are forbidden, and the edges in~$P_1$ are forced.
This could be achieved by computing the graph~$H'$ that is obtained from~$H$ by deleting the edges in~$P_0$ and contracting the edges in~$P_1$, which may however be costly.
Instead, we solve the problem on the original graph~$H$, but with the modified weight function
\[w_i' := \begin{cases} w_i & \text{if } i \notin P_0 \cup P_1, \\
M & \text{if } i\in P_0, \\
-M & \text{if } i\in P_1,
\end{cases}
\]
where $M$ is chosen so that $M>\max_{i\in[m]} |w_i|$. 
For applying Theorem~\ref{thm:algoPLOP-time} we only need to consider weights~$w\in\{-1,0,1\}^m$, so we can take $M=2$.
For graphs with constantly many distinct edge weights (in our case $\{-2,-1,0,1,2\}$), the minimum spanning tree problem can be solved in time~$\cO(m)$, by a variation of Prim's algorithm that instead of a priority queue uses one list of same weight edges for each possible weight.
Theorem~\ref{thm:algoPLOP-time} thus yields the following corollary.

\begin{corollary}
\label{cor:appl-tree}
Let $H$ be an $n$-vertex graph with edge set~$[m]$.
Then for every tiebreaking rule and every initial spanning tree~$\tT$, Algorithm~\upright{P\sss{}} computes a genlex listing of all spanning trees of~$H$ by edge exchanges starting at~$\tT$ with delay~$\cO(m \log n)$.
\end{corollary}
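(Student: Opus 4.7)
The plan is to apply Theorem~\ref{thm:algoPLOP-time} to the set $X$ of indicator vectors of spanning trees of~$H$; this requires showing that problem~LOP with weight set~$\{-1,0,+1\}$ can be solved in time~$\cO(m)$, which then yields the claimed delay~$\cO(m\log n)$. To interpret the output as a listing of spanning trees by edge exchanges (rather than an abstract Hamilton path on the polytope skeleton), I would invoke Edmonds' classical description of the spanning tree polytope~\cite{MR510371} recorded in Table~\ref{tab:01poly}: two vertices $\indicator_T,\indicator_{T'}$ of $\conv(X)$ are adjacent if and only if $T'=T+i-j$ for some edges $i,j\in[m]$. This identifies the skeleton of $\conv(X)$ with the edge-exchange graph on spanning trees, so the genlex Hamilton path produced by Algorithm~P\sss{} is precisely a genlex listing by edge exchanges.

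For the~LOP reduction, given $w\in\{-1,0,+1\}^m$ and disjoint prescription sets $P_0,P_1\seq[m]$, I would form the modified weight vector $w'\in\{-2,-1,0,+1,+2\}^m$ by setting $w'_i:=w_i$ for $i\notin P_0\cup P_1$, $w'_i:=+2$ for $i\in P_0$, and $w'_i:=-2$ for $i\in P_1$, and then compute a minimum spanning tree $T^*$ of~$H$ under~$w'$. I claim that $T^*$ respects the prescription (i.e.\ $T^*\cap P_0=\emptyset$ and $P_1\seq T^*$) whenever any prescription-respecting spanning tree exists, and that in this case $\indicator_{T^*}$ minimizes~$w\cdot y$ over all feasible $y\in X$. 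This is a standard matroid exchange argument: if $T^*$ contained some $e\in P_0$, then comparing $T^*$ with a prescription-respecting tree~$T'$ via iterated edge exchanges yields at least one exchange that swaps $e$ out for a non-$P_0$ edge; since $w'_e=+2$ and every non-$P_0$ edge has $w'$-weight at most $+1$, this exchange reduces the total weight by at least $1$, contradicting optimality of $T^*$. The case of an edge in $P_1\setminus T^*$ is symmetric. Infeasibility of~LOP is detected simply by checking whether the computed $T^*$ satisfies the prescription.

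Since the modified weights lie in the constant-size set $\{-2,-1,0,+1,+2\}$, a minimum spanning tree of $H$ under $w'$ can be computed in time~$\cO(m)$ by a variant of Prim's algorithm that replaces the priority queue with one bucket per weight class, scanning buckets in order of increasing weight. Connectivity of $H$ guarantees $m\geq n-1$, so $t_{\upright{LOP}}=\cO(m)=\Omega(n)$ as required by Theorem~\ref{thm:algoPLOP-time}, and the conclusion follows immediately. The only step requiring any real verification is the exchange argument underpinning the weight-amplification reduction; everything else is either a direct invocation of the general framework or a well-known ingredient from combinatorial optimization.
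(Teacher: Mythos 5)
Your proposal is correct and follows essentially the same route as the paper: the same weight amplification with penalties $\pm 2$ on prescribed edges, the same $\cO(m)$ minimum spanning tree computation via a bucketed variant of Prim's algorithm for constantly many weight classes, and the same invocation of Theorem~\ref{thm:algoPLOP-time}. The only difference is that you spell out the matroid exchange argument justifying the amplification, which the paper leaves implicit.
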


To generate all cost-optimal spanning trees, we apply Theorem~\ref{thm:algoPLOP-time-c} and combine it with Chazelle's~\cite{MR1866456} algorithm for computing minimum spanning trees, which runs time~$\cO(m\alpha(m,n))$, where $\alpha$ is the functional inverse of the Ackermann function.

\begin{corollary}
\label{cor:appl-tree-c}
Let $H$ be an $n$-vertex graph with edge set~$[m]$, and let $c\in\mathbb{Z}^m$.
Then for every tiebreaking rule and every initial $c$-minimal spanning tree~$\tT$, Algorithm~\upright{P\sss{}} computes a genlex listing of all $c$-minimal spanning trees of~$H$ by edge exchanges starting at~$\tT$ with delay~$\cO(m \alpha(m,n)\log n)$.
\end{corollary}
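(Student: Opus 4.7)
The plan is to mirror the proof of Corollary~\ref{cor:appl-tree}, but invoke Theorem~\ref{thm:algoPLOP-time-c} (for cost-optimal elements) in place of Theorem~\ref{thm:algoPLOP-time}, and obtain the optimization subroutine by combining Lemma~\ref{lem:LOP-LO} with Chazelle's minimum spanning tree algorithm.

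Setting up the encoding, let $X\seq\{0,1\}^m$ be the set of indicator vectors of spanning trees of~$H$, so that $\conv(X)$ is the spanning tree polytope. The $c$-minimal spanning trees correspond exactly to $X_c=\argmin[x\in X\mid c\cdot x]$, and $\conv(X_c)$ is the face of $\conv(X)$ on which $c\cdot x$ is minimized; by~\cite{MR510371} its edges still correspond to edge exchanges. Thus a Hamilton path on the skeleton of $\conv(X_c)$ is exactly an edge-exchange listing of all $c$-minimal spanning trees.

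By Theorem~\ref{thm:algoPLOP-time-c}, Algorithm~P\sss{} computes a genlex Hamilton path on the skeleton of $\conv(X_c)$ starting from $\tT$ with delay $\cO(t_{\upright{LOP}}\log m)$, where $t_{\upright{LOP}}$ is the time to solve problem~LOP for $X$ with weight set $W(C)$, $C\seq\mathbb{Z}$ being the range of entries of~$c$. Next I would apply Lemma~\ref{lem:LOP-LO} to eliminate the prescription constraints: this amplifies the weights on $P_0\cup P_1$ by a factor of~$m$, which effectively forbids edges in $P_0$ (large positive weight) and forces edges in $P_1$ (large negative weight). The resulting instance is a plain minimum weight spanning tree problem on~$H$ with integer weights.

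Chazelle's algorithm~\cite{MR1866456} solves the minimum weight spanning tree problem in time $\cO(m\alpha(m,n))$ in the standard comparison model, independently of the magnitudes of the weights. Plugging this bound into Lemma~\ref{lem:LOP-LO} yields $t_{\upright{LOP}}=\cO(m\alpha(m,n))$, and hence, using $\log m=\Theta(\log n)$, the overall delay is $\cO(m\alpha(m,n)\log n)$, as claimed. The only mild point to verify is that Chazelle's bound is insensitive to the $\cO(\log(m|c|))$ bit-lengths introduced by the weight amplification; this is immediate in the comparison model where each weight comparison costs $\cO(1)$. No additional obstacle arises, since the correctness of the traversal and the history-free implementation were already established in Theorems~\ref{thm:algoP-poly} and~\ref{thm:algoPLOP-time-c}.
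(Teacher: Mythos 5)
Your proposal is correct and follows essentially the same route as the paper: invoke Theorem~\ref{thm:algoPLOP-time-c} for the face $\conv(X_c)$ of the spanning tree polytope, eliminate the prescription constraints by weight amplification, and solve the resulting minimum spanning tree instances with Chazelle's comparison-based algorithm, whose $\cO(m\alpha(m,n))$ bound is insensitive to the amplified weight magnitudes. The paper phrases the prescription step as an ad hoc modified weight function with $\pm M$ on forbidden/forced edges rather than citing Lemma~\ref{lem:LOP-LO} explicitly, but this is the same mechanism.
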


In both of these results, the initialization time of Algorithm~P\sss{} is the same as the delay, and the required space is the same as for computing a minimum spanning tree.

\subsection{Matchings of a graph}

Let $H$ be an $n$-vertex graph with edge set~$[m]$.
We let $X$ denote the set of indicator vectors of matchings of~$H$, i.e.,
\[X = \{\indicator_M \mid M\seq [m] \text{ is a matching of } H\}\seq \{0,1\}^m .\]
It is well known (\cite{MR371732}) that the edges of the \defi{matching polytope} $\conv(X)$ are precisely between pairs of matchings $M,M'$ that differ in an alternating path or cycle exchange, i.e., there is a path or cycle~$E$ such that $M'=M\triangle E$, where $\triangle$ denotes the symmetric difference.
We thus obtain the following specialization of Algorithm~G for listing all matchings of~$H$ by alternating path/cycle exchanges.
The greedy update rule in step~M3 minimizes the largest of the edges in~$E$.

\begin{algo}{Algorithm~M}{Matchings by shortest prefix changes}
Given a graph~$H$ with edge set~$[m]$, this algorithm greedily generates all matchings of~$H$ by alternating path/cycle exchanges, starting from an initial matching~$\tM$.
\begin{enumerate}[label={\bfseries M\arabic*.}, leftmargin=8mm, noitemsep, topsep=3pt plus 3pt]
\item{} [Initialize] Set $M \gets \tM$.
\item{} [Visit] Visit $M$.
\item
{} [Shortest prefix change] Compute the set $N$ of unvisited matchings~$M'$ that differ from~$M$ in the exchange of an alternating path/cycle~$E$ with smallest value~$\max E$, i.e., $N\gets \argmin[M'=M\triangle E \text{ matching} \,\wedge\, E \text{ path/cycle} \,\wedge\, M' \text{ unvisited} \mid \max E]$.
Terminate if $N=\emptyset$.
\item{} [Tiebreaker+update $M$] Pick an arbitrary matching~$M'\in N$, set $M\gets M'$ and goto~M2.
\end{enumerate}
\end{algo}

The problem~LOP defined in Section~\ref{sec:reduction} translates to computing a minimum weight matching~$M$ in~$H$ according to some weight function $w\in\mathbb{R}^m$, with the prescription constraints $P_0\cap M=\emptyset$ and $P_1\seq M$, i.e., the edges in~$P_0$ are forbidden, and the edges in~$P_1$ are forced.
This can be achieved by computing the graph~$H'$ that is obtained from~$H$ by deleting the edges in~$P_0$ and deleting the vertices that are endpoints of edges in~$P_1$.
We then find a minimum weight matching in the smaller graph~$H'$.
For applying Theorem~\ref{thm:algoPLOP-time} we only need to consider weights~$w\in\{-1,0,1\}^m$, and as LOP is a minimization problem, we can simplify~$H'$ further by deleting all edges with weights~$0$ or~$1$, which yields a graph~$H''$ that has only edges of weight~$-1$.
A minimum weight matching in~$H''$ is therefore a maximum matching in~$H''$, and for finding this we use Micali and Vazirani's~\cite{DBLP:conf/focs/MicaliV80} algorithm, which runs in time~$\cO(m\sqrt{n})$.
Theorem~\ref{thm:algoPLOP-time} thus yields the following corollary.
As Algorithm~P\sss{} minimizes the Hamming distance when moving to the next matching, the alternating path/cycle will in fact always be an alternating path of length~$\leq 3$.

\begin{corollary}
\label{cor:appl-match}
Let $H$ be an $n$-vertex graph without isolated vertices with edge set~$[m]$.
Then for every tiebreaking rule and every initial matching~$\tM$, Algorithm~\upright{P\sss{}} computes a genlex listing of all matchings of~$H$ by alternating path exchanges of length~$\leq 3$ starting at~$\tM$ with delay~$\cO(m\sqrt{n}\log n)$.
\end{corollary}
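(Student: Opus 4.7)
The plan is to apply Theorem~\ref{thm:algoPLOP-time} to the set~$X\seq\{0,1\}^m$ of indicator vectors of matchings of~$H$, whose convex hull~$\conv(X)$ is (by definition) a 0/1-polytope in $m$~dimensions. It then suffices to exhibit an algorithm for problem~LOP over~$X$ with weight set $W=\{-1,0,+1\}$ running in time $t_{\upright{LOP}}=\cO(m\sqrt n)$; the theorem will then directly give delay $\cO(t_{\upright{LOP}}\log m)=\cO(m\sqrt n\log n)$, since we may assume $m\leq\binom n2$ and hence $\log m=\Theta(\log n)$.

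For the LOP-algorithm, given $w\in\{-1,0,+1\}^m$ and disjoint prescription sets $P_0,P_1\seq[m]$, I would first test feasibility by checking that $P_1$ itself is a matching in~$H$ and shares no vertex with any edge that must be forced; otherwise return~$N=\emptyset$. Then I would build an auxiliary graph~$H''$ from~$H$ by (i)~deleting every edge in~$P_0$, (ii)~deleting every vertex incident to an edge in~$P_1$ together with all its incident edges, and (iii)~deleting every remaining edge of weight~$0$ or~$+1$. The resulting $H''$ contains only weight-$(-1)$ edges, so minimising $w\cdot y$ over matchings of~$H$ consistent with the prescriptions is equivalent to maximising the cardinality of a matching of~$H''$ and then re-attaching the forced edges~$P_1$. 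A maximum matching in~$H''$ can be computed in time $\cO(m\sqrt n)$ by the Micali--Vazirani algorithm~\cite{DBLP:conf/focs/MicaliV80}, delivering the required bound on~$t_{\upright{LOP}}$.

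It remains to justify the closeness claim, namely that any two consecutive matchings in the generated listing differ by an alternating path of length at most~$3$. Since Algorithm~P\sss{} in step~P4 chooses an unvisited matching~$M'$ of \emph{minimum Hamming distance} subject to $\lambda(M,M')=\beta$, it is enough to exhibit, whenever the candidate set is non-empty, some such~$M'$ with $d(M,M')\leq 3$. If $\beta\in M$ then $M':=M\setminus\{\beta\}$ works at distance~$1$ (an alternating path on the single edge~$\beta$). If $\beta\notin M$, then any feasible~$M'$ must contain edge~$\beta$ while agreeing with~$M$ at all indices $>\beta$; hence each of the (at most two) edges of~$M$ sharing an endpoint with~$\beta$ must have index strictly smaller than~$\beta$, otherwise the constraint $y_i=M_i$ for $i>\beta$ forces $\beta\notin M'$. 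Removing those one or two conflicting edges and inserting~$\beta$ gives a valid matching~$M'$ with $d(M,M')\leq 3$ whose symmetric difference with~$M$ is exactly an alternating path on at most three edges.

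The main technical obstacle I expect is the last verification: one must argue carefully that feasibility of the prefix constraint $\lambda(M,M')=\beta$ really does force the (potentially) conflicting $M$-edges at the endpoints of~$\beta$ to carry indices below~$\beta$, which is what confines the symmetric difference to a length-$\leq 3$ alternating path rather than a longer alternating path or cycle. The LOP-to-maximum-matching reduction and the invocation of Theorem~\ref{thm:algoPLOP-time} are in comparison routine.
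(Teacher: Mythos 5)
Your proposal is correct and follows essentially the same route as the paper: reduce LOP with weights in $\{-1,0,+1\}$ to a maximum matching computation on the auxiliary graph obtained by deleting the $P_0$-edges, the endpoints of $P_1$-edges, and all nonnegative-weight edges, then invoke Micali--Vazirani and Theorem~\ref{thm:algoPLOP-time}. Your case analysis showing that the prefix constraint forces the conflicting $M$-edges at the endpoints of $\beta$ to have index below $\beta$, so that a witness at Hamming distance $\leq 3$ always exists, is a welcome elaboration of the length-$\leq 3$ claim that the paper only asserts in passing.
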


To generate all cost-optimal matchings w.r.t.\ some cost vector~$c\in\mathbb{Z}^m$, we apply Theorem~\ref{thm:algoPLOP-time-c} and combine it with 
Duan, Pettie and Su's~\cite{MR3763658} algorithm, which runs in time 
$\cO(m\sqrt{n}\log (n|c|))$, or Gabow's~\cite{MR3744699} implementation of Edmond's algorithm, which runs in time~$\cO(mn+n^2\log n)$.
The quantity~$|c|$ is the maximum absolute value of entries of~$c$.
These algorithms maximize the weight instead of minimizing it, but we can simply multiply all weights by~$-1$.

\begin{corollary}
\label{cor:appl-match-c}
Let $H$ be an $n$-vertex graph without isolated vertices with edge set~$[m]$, and let $c\in\mathbb{Z}^m$.
Then for every tiebreaking rule and every initial $c$-minimal matching~$\tM$, Algorithm~\upright{P\sss{}} computes a genlex listing of all $c$-minimal matchings of~$H$ by alternating path/cycle exchanges starting at~$\tM$ with delay
\[\min\Big\{\cO(m\sqrt{n}\log (n|c|)\log n), \cO((mn+n^2\log n)\log n)\Big\}.\]
\end{corollary}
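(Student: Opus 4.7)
The plan is to invoke Theorem~\ref{thm:algoPLOP-time-c} with $X \subseteq \{0,1\}^m$ the set of indicator vectors of matchings of $H$ (under any fixed ordering of the edges). The matching polytope $\conv(X)$ has its edges in correspondence with alternating path/cycle exchanges, which justifies the description of the local change operation in the statement, and any $c$-minimal matching $\tM$ indeed lies in $X_c$, satisfying the initial-vertex hypothesis. With $|c|$ the maximum absolute entry of $c$, I take $C := \mathbb{Z}\cap[-|c|,|c|]$, so that the weight set $W(C) = \{-1,0,+1\} + mC$ from~\eqref{eq:WC} consists of integers of absolute value at most $1 + m|c| = O(m|c|)$.

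Next, I would reduce problem~LOP on $X$ with a weight vector $w \in W(C)^m$ and prescriptions $P_0, P_1 \subseteq [m]$ to a classical maximum-weight matching computation. If two edges in $P_1$ share a vertex, LOP is infeasible; otherwise form the graph $H'$ obtained from $H$ by deleting every edge in $P_0$ and every vertex that is an endpoint of some edge in $P_1$. The matchings $M$ of $H$ with $M \cap P_0 = \emptyset$ and $P_1 \subseteq M$ are in bijection with the matchings $M'$ of $H'$ via $M' = M \setminus P_1$, and $w \cdot \indicator_M = w \cdot \indicator_{M'} + w \cdot \indicator_{P_1}$, so minimization in LOP is equivalent to minimum-weight matching in $H'$. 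Negating $w$ turns this into a maximum-weight matching instance in $H'$ with integer weights of absolute value $O(m|c|) = O(n^2|c|)$.

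Feeding this into the two cited algorithms now bounds $t_{\upright{LOP}}$. Duan--Pettie--Su~\cite{MR3763658} solves maximum-weight matching with weights of absolute value at most $N$ in time $\cO(m\sqrt{n}\log(nN))$; with $N = O(n^2|c|)$ and using $\log(n^3|c|) = O(\log(n|c|))$, this is $\cO(m\sqrt{n}\log(n|c|))$. Gabow's algorithm~\cite{MR3744699} runs in $\cO(mn + n^2\log n)$ independently of the weights. Multiplying by the $\log m = O(\log n)$ factor supplied by Theorem~\ref{thm:algoPLOP-time-c} yields the two claimed delays.

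The only subtle point is confirming that the weight amplification inherent in $W(C)$ does not blow up the runtime: for Duan--Pettie--Su this relies precisely on the polylogarithmic dependence on the weight magnitude, so that inflating weights from $O(|c|)$ to $O(m|c|)$ costs only an additive $O(\log n)$ inside the logarithm and is absorbed into $\log(n|c|)$, while Gabow's bound is insensitive to weights altogether. Beyond this, the theorem supplies the prefix-graph traversal for free, and all optimization is entirely black-boxed.
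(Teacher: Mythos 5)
Your proposal is correct and follows essentially the same route as the paper: encode matchings as indicator vectors, apply Theorem~\ref{thm:algoPLOP-time-c} with the amplified weight set $W(C)$, reduce LOP to a maximum-weight matching computation on the subgraph $H'$ obtained by deleting the $P_0$-edges and the endpoints of the $P_1$-edges, and plug in Duan--Pettie--Su and Gabow as black boxes. Your explicit feasibility check and the observation that the weight amplification is absorbed into the $\log(n|c|)$ factor are details the paper leaves implicit, but the argument is the same.
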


A particularly interesting case is when the cost vector is $c=(-1,\ldots,-1)$, i.e., we obtain all maximum size matchings of~$H$.
In particular, if $H$ has a perfect matching, we can generate all perfect matchings of~$H$.
The weights for our minimization problem will be $\{-m-1,-m,-m+1\}$, and for the corresponding maximization problem they will be~$\{m-1,m,m+1\}$.

\begin{corollary}
\label{cor:appl-match-max}
Let $H$ be an $n$-vertex graph without isolated vertices with edge set~$[m]$.
Then for every tiebreaking rule and every initial maximum matching~$\tM$, Algorithm~\upright{P\sss{}} computes a genlex listing of all maximum matchings of~$H$ by alternating path/cycle exchanges starting at~$\tM$ with delay $\cO(m\sqrt{n}(\log n)^2)$.
\end{corollary}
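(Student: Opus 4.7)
The plan is to specialize Corollary~\ref{cor:appl-match-c} to the case $c=(-1,\ldots,-1)\in\mathbb{Z}^m$. With this cost vector, $X_c$ is precisely the set of indicator vectors of maximum matchings of~$H$, since minimizing $c\cdot \indicator_M=-|M|$ is the same as maximizing~$|M|$. Thus a Hamilton path on the skeleton of $\conv(X_c)$ is a genlex listing of all maximum matchings, with consecutive matchings differing in an edge of the matching polytope, i.e., by an alternating path/cycle exchange.

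To invoke Theorem~\ref{thm:algoPLOP-time-c}, I would bound the time~$t_{\upright{LOP}}$ to solve problem~LOP over the matchings of~$H$ with the amplified weight set $W(\{-1\})=\{-1,0,+1\}+m\{-1\}=\{-m-1,-m,-m+1\}$. Concretely, given weights $w\in\{-m-1,-m,-m+1\}^m$ and prescription sets~$P_0,P_1\seq[m]$, I would reduce to classical weighted matching exactly as in the proof of Corollary~\ref{cor:appl-match}: delete from~$H$ all edges in~$P_0$ together with all vertices incident to edges in~$P_1$, obtaining a subgraph~$H'$ on which the LOP solution coincides with a minimum weight matching. Negating all weights turns this into a maximum weight matching problem on~$H'$ with edge weights in $\{m-1,m,m+1\}$, to which I would apply the algorithm of Duan, Pettie, and Su~\cite{MR3763658} running in time $\cO(m\sqrt{n}\log(n|w|))$. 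Since the maximum weight is $|w|=m+1\leq \binom{n}{2}+1$, we have $\log(n|w|)=\cO(\log n)$, so $t_{\upright{LOP}}=\cO(m\sqrt{n}\log n)$.

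Finally, Theorem~\ref{thm:algoPLOP-time-c} yields delay $\cO(t_{\upright{LOP}}\log n)=\cO(m\sqrt{n}(\log n)^2)$, matching the claim. The genlex property and the fact that consecutive matchings differ by an alternating path or cycle (an edge of the matching polytope) come for free from the general theorem, since $\conv(X_c)$ is a face of the matching polytope and hence its edges are edges of the matching polytope.

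There is no real obstacle; the argument is a direct instantiation. The only technical check is verifying that the reduction to LO on the modified graph~$H'$ remains valid under the amplified weight set and that Lemma~\ref{lem:weight-ampl} correctly ensures the minimizer of~$w'\cdot y$ over all matchings lies in~$X_c$ (i.e., is in fact a maximum matching)---but this is exactly what the amplification factor of~$m$ on the cost guarantees, since any non-maximum matching loses at least one edge (a gain of at least~$m$ in cost) while the $\{-1,0,+1\}$ perturbation can swing the value by at most~$m$ in either direction.
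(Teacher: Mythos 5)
Your proposal is correct and follows essentially the same route as the paper: specialize Theorem~\ref{thm:algoPLOP-time-c} (equivalently, Corollary~\ref{cor:appl-match-c}) to $c=(-1,\ldots,-1)$, reduce LOP with weight set $\{-m-1,-m,-m+1\}$ to a maximum weight matching problem with weights $\{m-1,m,m+1\}$ on the subgraph obtained by deleting the $P_0$-edges and the $P_1$-endpoints, and apply Duan--Pettie--Su with $\log(n(m+1))=\cO(\log n)$ to get $t_{\upright{LOP}}=\cO(m\sqrt{n}\log n)$ and hence delay $\cO(m\sqrt{n}(\log n)^2)$. Your closing sanity checks (the face argument for $\conv(X_c)$ and the amplification bound from Lemma~\ref{lem:weight-ampl}) are exactly the justifications the general framework supplies.
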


In all three of these results, the initialization time of Algorithm~P\sss{} is the same as the delay, and the required space is the same as for computing a maximum (weight) matching.

\section{Duality between Algorithm~P and Algorithm~J}
\label{sec:duality}

There is an interesting duality between the generation framework proposed here and the permutation language framework due to Hartung, Hoang, M\"utze and Williams~\cite{MR4391718}.
Their framework encodes combinatorial objects by permutations of length~$n$, and the local change operation to go from one permutation to the next is a cyclic substring shift by one position, subject to the constraint that the largest value~$j$ in the substring wraps around and moves to the other end of the substring.
This operation is referred to as a~\defi{jump of~$j$}, the number of \defi{steps} of the jump is one less than the length of the shifted substring, and its \defi{direction} is the direction of movement of~$j$ in the substring.
For example $24135\rightarrow 21345$ is a jump of the value~4 by 2~steps to the right.
Similarly, $123\cdots n\rightarrow n123\cdots (n-1)$ is a jump of the value~$n$ by $n-1$ steps to the left.
Their framework is based on a simple greedy algorithm, \defi{Algorithm~J}, which attempts to generate a set of permutations by jumps, by repeatedly and greedily performing a shortest possible jump of the largest possible value so that a previously unvisited permutation is created.
Compare this to our Algorithm~P, which repeatedly and greedily performs a prefix change of shortest possible length so that a previously unvisited bitstring is created.
In fact, these two algorithms are dual to each other:
While Algorithm~J works based on values, Algorithm~P works based on positions.

Specifically, we can simulate Algorithm~P by Algorithm~J, by encoding bitstrings of length~$n$ by permutations of length~$n+1$.
This encoding is defined inductively using the suffix tree generated by a run of Algorithm~P, and can be done so that a prefix change of length~$d$ on the bitstrings corresponds to a jump of the value~$j:=n-d+2$ by $j-1$ steps in the permutations (i.e., $j$ jumps across all values smaller than itself).

Conversely, Algorithm~J can also be simulated by Algorithm~P (even though Algorithm~P knows only~0s and~1s).
The idea is to encode each permutation~$\pi=a_1\cdots a_{n+1}$ by its \defi{inversion table} $c=c_1\cdots c_{n+1}$, where $0\leq c_i\leq i-1$ counts the number of entries of~$\pi$ that are smaller than~$i$ and to the right of~$i$.
The value~$c_i$ of the inversion table is encoded by a bitstring of length~$i$ with a single~1 at position~$c_i$ (counted from~0), and the entire inversion table is encoded by concatenating these bitstrings in reverse order.
In the simulation, a jump of the value~$i$ by $d$ steps in the permutation changes exactly the entry~$c_i$ by $\pm d$, and so this 1-bit moves by~$d$ positions in the corresponding encoding.
By the concatenation in reverse order, a jump of the largest possible value corresponds to a change of the shortest possible prefix.

\section{Open questions}
\label{sec:open}

We conclude this paper with some open questions.

\begin{itemize}[itemsep=0ex,parsep=0ex,leftmargin=2.5ex]
\item
Does our theory generalize to nonbinary alphabets (cf.~ the discussion in the previous section)?
More specifically, instead of encoding a class of objects~$X$ as bitstrings~$X\seq\{0,1\}^n$, we may use $X\seq\{0,1,\ldots,b-1\}$ for some integer~$b\geq 2$.
Many of the concepts introduced here generalize straightforwardly, including Algorithm~G, genlex ordering, suffix trees, and Theorem~\ref{thm:algoG-genlex} (recall in particular Remark~\ref{rem:greedy}).
However, it is not clear how to generalize prefix graphs so that they arise as the skeleta of the corresponding polytopes.
Most importantly, not all polytopes admit a Hamilton path, unlike 0/1-polytopes via Naddef and Pulleyblank's~\cite{MR762893} result.

\item
Is there an improved amortized analysis of Algorithm~P, maybe for certain classes of combinatorial objects?
Such an analysis would need to consider the heights at which branchings occur in the suffix tree, and this may depend on the ordering of the ground set.
For combinations (=bases of the uniform matroid) such an improved amortized analysis can be carried out and provides better average delay guarantees (see~\cite{MR4473269}).

\item
Can we get Gray code listings for particular classes of objects from our framework that have interesting additional structural properties?
For example, Knuth~\cite{MR3444818} asked whether there is a simple ordering of the spanning trees of the complete graph~$K_n$ by edge exchanges.
It is not completely well-defined what `simple' means, but certainly efficient ranking and unranking algorithms would be desirable.
\end{itemize}

\bibliographystyle{alpha}
\bibliography{refs}

\end{document}